\newcommand{\vd}{\mbox{$V\!D$}}
\def\calC{\mathcal{C}}
\def\calS{\mathcal{S}}
\def\calA{\mathcal{A}}
\def\scrD{\mathscr{D}}
\def\calD{\mathcal{D}}
\def\calP{\mathcal{P}}
\newcommand{\poly}{\text{poly}}
\newtheorem{observation}{Observation}
\newtheorem{lemma}{Lemma}
\newtheorem{theorem}{Theorem}
\newtheorem{corollary}{Corollary}
\newenvironment{proof}{\noindent {\textbf{Proof:}}\rm}{\hfill $\Box$ \rm\bigskip}
\title{Unit-Disk Range Searching and Applications\thanks{This research was supported in part by NSF under Grant CCF-2005323. A preliminary version of this paper will appear
in Proceedings of the 18th Scandinavian Symposium and Workshops on Algorithm Theory (SWAT 2022).}}
\author{
Haitao Wang
}
\affil{Department of Computer Science \\
Utah State University, Logan, UT 84322, USA
\\ {\tt haitao.wang@usu.edu}}
\begin{document}

\pagestyle{plain}
\pagenumbering{arabic}
\setcounter{page}{1}
\date{}

\thispagestyle{empty}
\maketitle

\vspace{-0.45in}

\begin{abstract}
Given a set $P$ of $n$ points in the plane,
we consider the problem of computing the number of points of $P$ in a query unit disk (i.e., all query disks have the same radius).
We show that the main techniques for simplex range searching in the plane can be adapted to this problem. For example, by adapting Matou\v{s}ek's results, we can build a data structure of $O(n)$ space
so that each query can be answered in $O(\sqrt{n})$ time; alternatively, we can build a data structure of $O(n^2/\log^2 n)$ space with $O(\log n)$ query time.
Our techniques lead to improvements for several other classical problems in computational geometry.
\begin{enumerate}
  \item Given a set of $n$ unit disks and a set of $n$ points in the plane, the batched unit-disk range counting problem is to compute for each disk the number of points in it. Previous work [Katz and Sharir, 1997] solved the problem in $O(n^{4/3}\log n)$ time.
      We give a new algorithm of $O(n^{4/3})$ time, which is optimal as it matches an $\Omega(n^{4/3})$-time lower bound. For small $\chi$, where $\chi$ is the number of pairs of unit disks that intersect, we further improve the algorithm to $O(n^{2/3}\chi^{1/3}+n^{1+\delta})$ time, for any $\delta>0$.

\item
The above result immediately leads to an $O(n^{4/3})$ time optimal algorithm for counting the intersecting pairs of circles for a set of $n$ unit circles in the plane. The previous best algorithms solve the problem in $O(n^{4/3}\log n)$ deterministic time [Katz and Sharir, 1997] or in $O(n^{4/3}\log^{2/3} n)$ expected time by a randomized algorithm [Agarwal, Pellegrini, and Sharir, 1993].

    \item
    Given a set $P$ of $n$ points in the plane and an integer $k$, the distance selection problem is to find the $k$-th smallest distance among all pairwise distances of $P$. The problem can be solved in $O(n^{4/3}\log^2 n)$ deterministic time [Katz and Sharir, 1997] or in $O(n\log n+n^{2/3}k^{1/3}\log^{5/3}n)$ expected time by a randomized algorithm [Chan, 2001]. Our new randomized algorithm runs in $O(n\log n +n^{2/3}k^{1/3}\log n)$ expected time.

  \item
  Given a set $P$ of $n$ points in the plane, the discrete $2$-center problem is to compute two smallest congruent disks whose centers are in $P$ and whose union covers $P$. An $O(n^{4/3}\log^5 n)$-time algorithm was known [Agarwal, Sharir, and Welzl, 1998]. Our techniques yield a deterministic algorithm of $O(n^{4/3}\log^{10/3} n\cdot (\log\log n)^{O(1)})$ time and a randomized algorithm of $O(n^{4/3}\log^3 n\cdot (\log\log n)^{1/3})$ expected time.
\end{enumerate}
\end{abstract}

{\bf Keywords:} unit disks, disk/circular range searching, batched range searching, counting/reporting intersections of unit circles, distance selection, discrete $2$-center, arrangements, cuttings


\section{Introduction}
\label{sec:intro}

We consider the range counting queries for unit disks. Given a set $P$ of $n$ points in the plane, the problem is to build a data structure so that the number of points of $P$ in $D$ can be computed efficiently for any  query unit disk $D$ (i.e., all query disks have the same known radius).

Our problem is a special case of the general disk range searching problem in which each query disk may have an arbitrary radius. Although we are not aware of any previous work particulary for our special case, the general problem has been studied before~\cite{ref:AgarwalOn94,ref:AgarwalOn13,ref:ChazelleQu89,ref:YaoA83,ref:MatousekMu15}. First of all, it is well-known that the lifting method can reduce the disk range searching in the $d$-dimensional space to half-space range searching in the $(d+1)$-dimensional space; see, e.g.,~\cite{ref:EdelsbrunnerAl87,ref:YaoA83}. For example, using Matou\v{s}ek's results in 3D~\cite{ref:MatousekRa93}, with $O(n)$ space, each disk query in the plane can be answered in $O(n^{2/3})$ time. Using the randomized results for general semialgebraic range searching~\cite{ref:AgarwalOn13,ref:MatousekMu15}, one can build a data structure of $O(n)$ space in $O(n^{1+\delta})$ expected time that can answer each disk query in $O(\sqrt{n}\log^{O(1)}n)$ time, where (and throughout the paper) $\delta$ denotes any small positive constant. For deterministic results, Agarwal and Matou\v{s}ek's techniques~\cite{ref:AgarwalOn94} can build a data structure of $O(n)$ space in $O(n\log n)$ time, and each query can be answered in $O(n^{1/2+\delta})$ time.

A related problem is to report all points of $P$ in a query disk. If all query disks are unit disks, the problem is known as {\em fixed-radius neighbor problem} in the literature~\cite{ref:BentleyA79,ref:ChazelleAn83,ref:ChazelleNe86,ref:ChazelleOp85}.
Chazelle and Edelsbrunner~\cite{ref:ChazelleOp85} gave an optimal solution (in terms of space and query time): they constructed a data structure of $O(n)$ space that can answer each query in $O(\log n+k)$ time, where $k$ is the output size; their data structure can be constructed in $O(n^2)$ time.
By a standard lifting transformation that reduces the problem to the halfspace range reporting queries in 3D, Chan and Tsakalidis~\cite{ref:ChanOp16} constructed a data structure of $O(n)$ space in $O(n\log n)$ time that can answer each query in $O(\log n+k)$ time; the result also applies to the general case where the query disks may have arbitrary radii.
Refer to~\cite{ref:AgarwalRa17,ref:AgarwalSi17,ref:MatousekGe94} for excellent surveys on range searching.

In this paper, we focus on unit-disk counting queries. By taking advantage of the property that all query disks have the same known radius, we manage to adapt the techniques for simplex range searching to our problem. We show that literally all main results for simplex range searching can be adapted to our problem with asymptotically the same performance. For example, by adapting Matou\v{s}ek's result in~\cite{ref:MatousekEf92}, we build a data structure of $O(n)$ space in $O(n\log n)$ time and each query can be answered in $O(\sqrt{n}\log^{O(1)}n)$ time. By adapting Matou\v{s}ek's result in~\cite{ref:MatousekRa93}, we build a data structure of $O(n)$ space in $O(n^{1+\delta})$ time and each query can be answered in $O(\sqrt{n})$ time. By adapting Chan's randomized result in~\cite{ref:ChanOp12}, we build a data structure of $O(n)$ space in $O(n\log n)$ expected time and each query can be answered in $O(\sqrt{n})$ time with high probability.

In addition, we obtain the following trade-off: After $O(nr)$ space and $O(nr(n/r)^{\delta})$ time preprocessing, each query can be answered in $O(\sqrt{n/r})$ time, for any $1\leq r\leq n/\log^2 n$.
In particularly, setting $r=n/\log^2 n$, we can achieve $O(\log n)$ query time, using $O(n^2/\log^2 n)$ space and $O(n^2/\log^{2-\delta}n)$ preprocessing time. To the best of our knowledge, the only previous work we are aware of with $O(\log n)$ time queries for the disk range searching problem is a result in~\cite{ref:KatzAn97},\footnote{See Theorem 3.1~\cite{ref:KatzAn97}. The authors noted in their paper that the result was due to Pankaj K. Agarwal.} which can answer each general disk query in $O(\log n)$ time with $O(n^2\log n)$ time and space preprocessing.


Probably more interestingly to some extent, our techniques can be used to derive improved algorithms for several classical problems, as follows. Our results are the first progress since the previous best algorithms for these problems were proposed over two decades ago.

\paragraph{Batched unit-disk range counting.} Let $P$ be a set of $n$ points and $\calD$ be a set of $m$ (possibly overlapping) congruent disks in the plane. The problem is to compute for all disks $D\in \calD$ the number of points of $P$ in $D$.
The algorithm of Katz and Sharir~\cite{ref:KatzAn97} solves the problem in
$O((m^{2/3}n^{2/3}+m+n)\log n)$ time.
By using our techniques for unit-disk range searching and adapting a recent result of Chan and Zheng~\cite{ref:ChanHo22}, we obtain a new algorithm of $O(n^{2/3}m^{2/3}+m\log n+n\log m)$ time.
We further improve the algorithm so that the complexities are sensitive to $\chi$, the number of pairs of disks of $\calD$ that intersect. The runtime of the algorithm is $O(n^{2/3}\chi^{1/3}+m^{1+\delta}+n\log n)$.

On the negative side, Erickson~\cite{ref:EricksonNe96} proved a lower bound of $\Omega(n^{2/3}m^{2/3}+m\log n+n\log m)$ time for the problem in a so-called {\em partition algorithm model}, even if
each disk is a half-plane (note that a half-plane can be considered as a special unit disk of infinite radius). Therefore, our algorithm is optimal under Erickson's model.

\paragraph{Counting intersections of congruent circles.}
As discussed in~\cite{ref:KatzAn97}, the following problem can be
immediately solved using batched unit-disk range counting:
Given a set of $n$ congruent circles of radius $r$ in the
plane, compute the number of intersecting pairs. To do so, define $P$ as the set of the centers of circles
and define $\calD$ as the set of congruent disks centered at points of
$P$ with radius $2r$. Then apply the batched unit-disk range counting
algorithm on $P$ and $\calD$. The algorithm runs in $O(n^{4/3})$ time, matching an $\Omega(n^{4/3})$-time lower bound~\cite{ref:EricksonNe96}.
To the best of our knowledge, the previous best results for this problem are a deterministic algorithm of $O(n^{4/3}\log n)$ time~\cite{ref:KatzAn97} and a randomized algorithm of $O(n^{4/3}\log^{2/3} n)$ expected time~\cite{ref:AgarwalSe93}. Agarwal, Pellegrini, and Sharir~\cite{ref:AgarwalCo93} also studied the problem for circles of different radii and gave an $O(n^{3/2+\delta})$ time deterministic algorithm.

\paragraph{Distance selection.}
Let $P$ be a set of $n$ points in the plane. Given an integer $k$ in the range $[1,n(n-1)/2]$, the
distance selection problem is to find the $k$-th smallest distance among all pairwise distances
of $P$; let $\lambda^*$ denote the $k$-th smallest distance. Given a value $\lambda$, the decision problem is to decide whether $\lambda\geq \lambda^*$.
We refer to the original problem as the optimization problem.

Chazelle~\cite{ref:ChazelleNe85} gave the first subquadratic algorithm of $O(n^{9/5}\log^{4/5}n)$ time. Agarwal, Aronov, Sharir, and Suri~\cite{ref:AgarwalSe93} presented randomized algorithms that solve the decision problem in $O(n^{4/3}\log^{2/3}n)$ expected time and the optimization problem in $O(n^{4/3}\log^{8/3}n)$ expected time, respectively. Goodrich~\cite{ref:GoodrichGe93} later gave a deterministic algorithm of $O(n^{4/3}\log^{8/3}n)$ time for the optimization problem. Katz and Sharir~\cite{ref:KatzAn97} proposed a deterministic algorithm of $O(n^{4/3}\log n)$ time for the decision problem and used it to solve the optimization problem in $O(n^{4/3}\log^2 n)$ deterministic time. Using the decision algorithm of \cite{ref:AgarwalSe93}, Chan's randomized technique~\cite{ref:ChanOn01} solved the optimization problem in $O(n\log n+n^{2/3}k^{1/3}\log^{5/3}n)$ expected time.

Our algorithm for the batched unit-disk range counting problem can be used to solve the decision problem in $O(n^{4/3})$ time.
Combining it with the randomized technique of Chan~\cite{ref:ChanOn01}, the optimization problem can now be solved in $O(n\log n + n^{2/3}k^{1/3}\log n)$ expected time.

\paragraph{Discrete $2$-center.} Let $P$ be a set of $n$ points in the
plane. The discrete $2$-center problem is to find two smallest congruent disks whose centers are in $P$ and whose union covers $P$. Agarwal, Sharir, and
Welzl~\cite{ref:AgarwalTh98} gave an $O(n^{4/3}\log^5 n)$-time
algorithm. Using our techniques for unit-disk range searching, we reduce the time of their algorithm to $O(n^{4/3}\log^{10/3} n(\log\log n)^{O(1)})$ deterministic time or to $O(n^{4/3}\log^3
n(\log\log n)^{1/3})$ expected time by a randomized algorithm.


\medskip
In the following, we present our algorithms for unit-disk range searching in Section~\ref{sec:diskrange}. The other problems are discussed in Section~\ref{sec:app}. Section~\ref{sec:con} concludes the paper.

\section{Unit-disk range searching}
\label{sec:diskrange}
In this section, we present our algorithms for unit-disk range searching problem. Our goal is to show that the main techniques for simplex range searching can be used to solve our problem. In particular, we show that, after overcoming many difficulties, the techniques of Matou\v{s}ek in \cite{ref:MatousekEf92} and \cite{ref:MatousekRa93} as well as the results of Chan~\cite{ref:ChanOp12} can be adapted to our problem with asymptotically the same performance.

We assume that the radius of unit disks is $1$. In the rest of this section, unless otherwise stated, a disk refers to a unit disk. We begin with an overview of our approach.

\paragraph{An overview.}
We roughly (but not precisely) discuss the main idea. We first implicitly build a grid $G$ of side length $1/\sqrt{2}$ such that any query disk $D$ only intersects $O(1)$ cells of $G$. This means that it suffices to build a data structure for the subset $P(C')$ of the points of $P$ in each individual cell $C'$ of $G$ with respect to query disks whose centers are in another cell $C$ that is close to $C'$.
A helpful property for processing $P(C')$ with respect to $C$ is that for any two disks with centers in $C$, their boundary portions in $C'$ cross each other at most once. More importantly, we can define a duality relationship between points in $C$ and disk arcs in $C'$ (and vice versa): a point $p$ in $C$ is dual to the arc of the boundary of $D_p$ in $C'$, where $D_p$ is the disk centered at $p$. This duality helps to obtain a {\em Test Set Lemma} that is crucial to the algorithms in~\cite{ref:ChanOp12,ref:MatousekEf92,ref:MatousekRa93}. With these properties and some additional observations, we show that the algorithm for computing cuttings for hyperplanes~\cite{ref:ChazelleCu93} can be adapted to the disk arcs in $C'$. With the cutting algorithms and the Test Set Lemma, we show that the techniques in~\cite{ref:ChanOp12,ref:MatousekEf92,ref:MatousekRa93} can be adapted to unit-disk range searching for the points of $P(C')$ with respect to the query disks centered in $C$.

The rest of this section is organized as follows. In Section~\ref{sec:reduction}, we reduce the problem to problems with respect to pairs of cells $(C,C')$. Section~\ref{sec:basic} introduces some basic concepts and observations that are fundamental to our approach. We adapt the cutting algorithm of Chazelle~\cite{ref:ChazelleCu93} to our problem in Section~\ref{sec:cutting}. Section~\ref{sec:testset} proves the Test Set Lemma.
In the subsequent subsections, we adapt the algorithms of~\cite{ref:ChanOp12,ref:MatousekEf92,ref:MatousekRa93}, whose query times are all $\Omega(\sqrt{n})$ with $O(n)$ space. Section~\ref{sec:tradeoff} presents the trade-offs between the preprocessing and the query time. Section~\ref{sec:summary} finally summarizes all results.

\subsection{Reducing the problem to pairs of grid cells}
\label{sec:reduction}


For each point $p$ in the plane, we use $x(p)$ and $y(p)$ to denote its $x$- and $y$-coordinates, respectively, and we use $D_p$ to denote the disk centered at $p$. For any region $A$ in the plane, we use $P(A)$ to denote the subset of points of $P$ in $A$, i.e., $P(A)=P\cap A$.

We will compute a set $\calC$ of $O(n)$ pairwise-disjoint square cells in the plane with the following properties. (1) Each cell has side length $1/\sqrt{2}$. (2) Every two cells are separated by an axis-parallel line. (3) For a disk $D_p$ with center $p$, if $p$ is not in any cell of $\calC$, then $D_p\cap P=\emptyset$. (4) Each cell $C$ of $\calC$ is associated with a subset $N(C)$ of $O(1)$ cells of $\calC$, such that for any disk $D$ with center in $C$, every point of $P\cap D$ is in one of the cells of $N(C)$. (5) Each cell $C'$ of $\calC$ is in $N(C)$ for a constant number of cells $C\in \calC$.

The following is a key lemma for reducing the problem to pairs of square cells.

\begin{lemma}\label{lem:10}
\begin{enumerate}
\item
The set $\calC$ with the above properties, along with the subsets $P(C)$ and $N(C)$ for all cells $C\in \calC$, can be computed in $O(n\log n)$ time and $O(n)$ space.
\item
With $O(n\log n)$ time and $O(n)$ space preprocessing, given any query disk $D_p$ with center $p$, we can determine whether $p$ is in a cell $C$ of $\calC$, and if yes, return the set $N(C)$ in $O(\log n)$ time.
\end{enumerate}
\end{lemma}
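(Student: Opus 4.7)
The plan is to take $\mathcal{C}$ as a subset of the cells of a uniform grid of side length $1/\sqrt{2}$ on the plane. Writing each cell as a half-open square $[i/\sqrt{2},(i+1)/\sqrt{2})\times[j/\sqrt{2},(j+1)/\sqrt{2})$ makes distinct cells disjoint. For each point $q\in P$, let $B(q)$ be the $5\times 5$ block of grid cells centered at the cell containing $q$, and set $\mathcal{C}=\bigcup_{q\in P}B(q)$. Then $|\mathcal{C}|\leq 25n=O(n)$, establishing property~(1). Property~(2) holds because any two distinct grid cells differ in at least one of their two indices, so a grid line strictly between them serves as an axis-parallel separator.

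Properties~(3)--(5) all rest on the elementary geometric observation that if $|pq|\leq 1$ then the grid-cell indices of $p$ and $q$ differ by at most $2$ in each coordinate (since $|x(p)-x(q)|/(1/\sqrt{2})\leq \sqrt{2}<3$, and similarly for $y$). For property~(3), if $D_p\cap P\neq\emptyset$, any $q\in D_p\cap P$ forces $p$'s cell to lie inside $B(q)\subseteq\mathcal{C}$. For property~(4), define $N(C)$ as the intersection of $\mathcal{C}$ with the $5\times 5$ block of grid cells centered at $C$; if $D$ is a disk centered in $C$ and $q\in P\cap D$, then $q$ sits in a cell at grid $L_\infty$-distance at most $2$ from $C$, hence in $N(C)$. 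For property~(5), $C'\in N(C)$ forces $C$ to be within grid $L_\infty$-distance $2$ of $C'$, giving at most $25$ choices of $C$.

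For the algorithmic part, I would map each $q\in P$ to its integer cell-index pair $(\lfloor x(q)\sqrt{2}\rfloor,\lfloor y(q)\sqrt{2}\rfloor)$, sort by this key in $O(n\log n)$ time, and group points to produce $P(C)$ for every nonempty cell. Expanding by the constant-size block around each such cell yields $\mathcal{C}$ and the lists $N(C)$ in additional $O(n)$ time, and storing $\mathcal{C}$ in a balanced binary search tree keyed lexicographically on the cell-index pairs provides $O(\log n)$-time membership lookup. Total preprocessing is $O(n\log n)$ time and $O(n)$ space. For the query in part~2, given $p$, compute its cell-index pair in $O(1)$ time, look it up in the tree in $O(\log n)$ time, and return the precomputed $N(C)$.

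The only genuine subtlety is calibrating the block size: because $1/(1/\sqrt{2})=\sqrt{2}>1$, two points at Euclidean distance at most $1$ can sit in cells whose indices differ by as much as $2$, so a $5\times 5$ (rather than $3\times 3$) block is needed. Once this constant is fixed, verifying all five properties reduces to routine bookkeeping on a uniform grid.
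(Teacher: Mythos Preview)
Your proposal is correct but takes a genuinely different (and simpler) route from the paper. You use a single global uniform grid and index cells by integer pairs $(\lfloor x\sqrt{2}\rfloor,\lfloor y\sqrt{2}\rfloor)$; the paper instead sweeps in $x$ to produce $O(n)$ vertical strips, then sweeps in $y$ within each strip to produce rectangles, and only then lays down a local $1/\sqrt{2}$ grid inside each rectangle, using binary search on stored line lists for point location. Both constructions give $|\mathcal{C}|=O(n)$ cells satisfying the five properties; your verification of the properties via index arithmetic on a $5\times 5$ block is correct (the calibration $\sqrt{2}<3\Rightarrow$ index difference $\le 2$ is exactly the right observation).

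What each approach buys: your global-grid argument is shorter and reduces all five properties to one-line index checks. The paper's approach (i) avoids the floor function entirely, staying in a pure comparison model, and (ii) produces an explicit strip/rectangle hierarchy that is reused later in the paper (Lemma~\ref{lem:170}) to build canonical subsets for the discrete $2$-center algorithm. If you only need Lemma~\ref{lem:10} itself, your argument is preferable; if you also need the later lemma, you would have to rebuild an analogous tree over the nonempty cells of your global grid, which is straightforward but separate work.

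One small overclaim: ``expanding by the constant-size block \dots\ yields $\mathcal{C}$ and the lists $N(C)$ in additional $O(n)$ time'' glosses over deduplication and neighbor lookup. With comparisons alone this step is $O(n\log n)$ (or $O(n)$ only after a careful $25$-way merge of shifted sorted lists). Either way it is within your stated $O(n\log n)$ total, so the final bound is unaffected.
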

\begin{proof}
We first compute $O(n)$ disjoint vertical trips in the plane, each bounded by
two vertical lines, as follows.
We sort all points of $P$ from left to right as $p_1,p_2,\ldots,p_n$.
Starting from $p_1$, we sweep the plane by a
vertical line $\ell$. The algorithm maintains an invariant that $\ell$
is in the current vertical strip whose left bounding line is known (and to the left of
$\ell$) and whose
right bounding line is to be determine and to the right of $\ell$.
Initially, we put a vertical line at $x(p_1)-1$ as the left bounding
line of the first strip. Suppose $\ell$ is at a point $p_i$. If
$i<n$ and $x(p_{i+1})-x(p_i)\leq 3$, then we move $\ell$ to
$p_{i+1}$. Otherwise, we put a vertical line at $x(p_i)+1+\xi$
as the right bounding line of the current strip, where $\xi$ is the
smallest non-negative value such that $x(p_i)+1+\xi-x'$ is a multiple
of $1/\sqrt{2}$ with $x'$ as the $x$-coordinate of the left bounding
line of the current strip. Next, if $i=n$, then
we halt the algorithm; otherwise, we put a vertical line at
$x(p_{i+1})-1$ as the left bounding line of the next strip and move
$\ell$ to $p_{i+1}$.

After the algorithm, we have at most $n$ vertical strips that are
pairwise-disjoint. According to our algorithm, if the center of a disk $D$
is outside those strips, then $P(D)=\emptyset$.
Also, if a strip contains $m$ points of $P$, then the width of the
strip is $O(m)$. This also means that the sum of the widths of
all strips is $O(n)$. In addition, the width of each strip is
a multiple of $1/\sqrt{2}$.

Next, for each vertical strip $A$, by sweeping the points of $P(A)$ from top to bottom in a similar way as above, we compute $O(|P(A)|)$ disjoint horizontal strips, each of which becomes a rectangle with the two bounding lines of $A$. Similar to the above, if the center of a disk $D$ is in $A$ but outside those rectangles, then $P(D)=\emptyset$. The height of each rectangle is a multiple of $1/\sqrt{2}$. Also, the height of $R$ is $O(|P(R)|)$. This implies that the sum of the heights of all rectangles in $A$ is $O(|P(A)|)$. As such, the sum of the heights of all rectangles in all vertical strips is $O(n)$.

In this way, we compute a set of $O(n)$ pairwise-disjoint rectangles in $O(n)$ vertical strips with the following property. (1) If a disk $D$ whose center is outside those rectangles, then $D(P)=\emptyset$. (2) Each rectangle contains at least one point of $P$. (3) The sum of the widths of all vertical strips is $O(n)$. (4) The sum of heights of all rectangles in all vertical strips is $O(n)$. (5) The height (resp., width) of each rectangle is a multiple of $1/\sqrt{2}$.

In the following, due to the above property (5), we partition each rectangle into a grid of square cells of side length $1/\sqrt{2}$. Consider a vertical strip $A$.
We use a set $V_A$ of vertical lines to further partition $A$ into vertical sub-strips of width exactly $1/\sqrt{2}$ each.
Since the width of $A$ is $O(|P(A)|)$, $|V_A|=O(|P(A)|)$. Consider a rectangle $R$ of $A$.
We use a set $H_R$ of horizontal lines to partition $R$ into smaller rectangles of height exactly $1/\sqrt{2}$ each. 
Since the height of $R$ is $O(|P(R)|)$, $|H_R|=O(|P(R)|)$. The lines of $V_A\cup H_R$ together partition $R$ into square cells of side length $1/\sqrt{2}$, which form a grid $G_R$. We process the points of $P(R)$ using the grid $G_R$, as follows. Processing all rectangles in this way will prove the lemma.

\begin{figure}[t]
\begin{minipage}[t]{\textwidth}
\begin{center}
\includegraphics[height=2.0in]{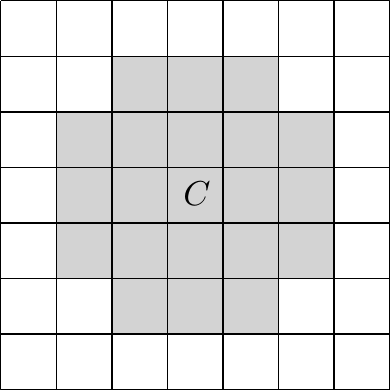}
\caption{\footnotesize The grey cells are all neighbor cells of $C$.}
\label{fig:grid}
\end{center}
\end{minipage}
\vspace{-0.15in}
\end{figure}

For each cell $C$ of $G_R$, a cell $C'$ is a {\em neighbor} of $C$ if the minimum distance between $C$ and $C'$ is at most $1$. Let $N'(C)$ denote the set of all neighbors of $C$ (e.g., see Fig.~\ref{fig:grid}). Clearly, $|N'(C)|=O(1)$.

Each cell $C$ of the grid $G_R$ has an index $(i,j)$ if $C$ is in the $i$-th low of $G_R$ from the top and in the $j$-th column from the left. For each point $p\in P(R)$, by doing binary search on the lines of $V_R$ and the lines of $H_R$, we can determine the cell $C_p$ (along with its index) that contains $p$. In this way, we can find all ``non-empty'' cells of $G_R$ that contain at least one point of $P$; further, for each non-empty cell $C$, the points of $P$ in $C$ are also computed and we associate them with $C$. Clearly, the number of non-empty cells is at most $|P(R)|$.

We next define a set $\calC_R$ of cells in the grid $G_R$ (the union of $\calC_R$ for all rectangles $R$ is $\calC$). For each non-empty cell $C$, we can find its neighbor set $N'(C)$ in $O(1)$ time using its index and we put all cells of $N'(C)$ in $\calC_R$. As the number of non-empty cells is at most $|P(R)|$ and $|N'(C)|=O(1)$ for each cell $C$, $|\calC_R|=O(|P(R)|)$. Note that it is possible that $\calC_R$ is a multi-set. To remove the repetitions, we can first sort all cells of $\calC_R$ using their indices (i.e., an index $(i,j)$ is ``smaller than'' $(i',j')$ if $i<i'$, or $i=i'$ and $j<j'$) and then scan the list to remove repetitions. We store this sorted list in our data structure.
Finally, for each cell $C$ of $\calC_R$, we define $N(C)$ as the set
of non-empty cells $C'$ with $C\in N'(C')$. We can compute $N(C)$ for
all $C\in \calC_R$ by scanning $N'(C')$ for all non-empty cells $C'$.
Note that $N(C)$ is a subset of $N'(C)$ and thus $|N(C)|=O(1)$ and $N(C)\subseteq \calC_R$. We
store $N(C)$ in our data structure.

Consider a disk $D_p$ whose center $p$ is in $R$. Let $C$ be a cell of $R$ containing $p$ (if $p$ lies on the common boundary of more than one cell, then let $C$ be an arbitrary one). If $C\not\in \calC$, then $D_p$ does not contain any point of $P(R)$ and thus does not contain any point of $P$ by the definition of $R$. Otherwise, by the definition of $N(C)$, the points of $P(R)\cap D_p$ are contained in the union of the cells of $N(C)$; further, by the definition of $R$, $P(D_p)=P(R)\cap D_p$.

In summary, given $V_A$, the above processing of $P(R)$ computes the line set $H_R$, the cells of $\calC_R$, and the set $N(C)$ and the points of $P(C)$ for each cell $C\in \calC_R$. Other than computing $V_A$, the total time is $O(|P(R)|\log n)$ and the space is $O(|P(R)|)$.

We process each rectangle $R$ of $A$ as above. Since the sum of $|P(R)|$ for all rectangles $R$ of $A$ is $|P(A)|$, the total processing time for all rectangles of $A$ (including computing the line set $V_A$) is $O(|P(A)|\log n)$ and the space is $O(|P(A)|)$.

We processing all vertical strips as above for $A$. Define $\calC$ stated in the lemma as the union of $\calC_R$ for all rectangles $R$ in all vertical strips. Because the total sum of $|P(A)|$ of all vertical strips is $O(n)$, the total processing time is $O(n\log n)$ and the total space is $O(n)$. This proves the first lemma statement.

For the second part of the lemma, consider a query disk $D_p$ with center $p$. We first do binary search on the bounding lines of all vertical strips and check whether $p$ is in any vertical strip. If not, then $p$ is not in any cell of $\calC$ and $D_p\cap P=\emptyset$. Otherwise, assume that $p$ is in a vertical strip $A$. Then, we do binary search on the horizontal bounding lines of the rectangles of $A$ and check whether $p$ is in any such rectangle. If not, then $p$ is not in any cell of $\calC$ and $D_p\cap P=\emptyset$. Otherwise, assume that $p$ is in a rectangle $R$. Next, by doing binary search on the vertical lines of $V_A$ and then on the horizontal lines of $H_R$, we determine the index of the cell $C$ of the grid $G_R$ that contains $p$. To determine whether $C$ is in $\calC_R$, we do binary search on the sorted list of $\calC_R$ using the index of $C$. If $C\not\in \calC_R$, then $C\not\in \calC$ and $D_p\cap P=\emptyset$. Otherwise, we return $N(C)$, which is stored in the data structure. Clearly, the running time for the algorithm is $O(\log n)$. This proves the lemma.
\end{proof}

With Lemma~\ref{lem:10} in hand, to solve the unit disk range searching problem,
for each cell $C\in \calC$ and each cell $C'\in N(C)$, we will preprocess the points of $P(C')$ with respect to the query disks whose
centers are in $C$. Suppose the preprocessing
time (resp. space) for each such pair $(C,C')$ is $f(m)=\Omega(m)$, where
$m=|P(C')|$. Then, by the property (5) of $\calC$, the total
preprocessing time (resp., space) for all such pairs $(C,C')$ is
$f(n)$ (more precisely, this holds for all functions $f(\cdot)$ used in our paper). In the following, we will describe our preprocessing
algorithm for $(C,C')$. Since $N(C)\subset \calC$ and the points of $P$ in each cell of $\calC$ are already known by Lemma~\ref{lem:10}, $P(C')$ is available to us.
To simplify the notation and also due to the above discussion, we assume that all points of $P$ are in
$C'$, i.e., $P(C')=P$. Note that if $C=C'$, then the problem is trivial because any disk centered in $C'$ covers the entire cell. We thus assume $C\neq C'$. Due to the property (2) of $\calC$, without loss of generality, in the following we assume that $C$ and $C'$ are separated by a horizontal line such that $C$ is below the line.

\subsection{Basic concepts and observations}
\label{sec:basic}

For any two points $a$ and $b$, we use $\overline{ab}$ to denote the line segment connecting them.
For any compact region $A$ in the plane, let $\partial A$ denote the boundary of $A$, e.g., if $A$ is a disk, then $\partial A$ is a circle.

\begin{figure}[t]
\begin{minipage}[t]{0.47\textwidth}
\begin{center}
\includegraphics[height=1.3in]{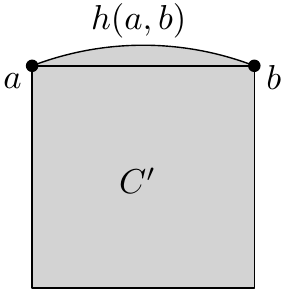}
\caption{\footnotesize Illustrating $\overline{C'}$, which is the grey region.}
\label{fig:enlarge}
\end{center}
\end{minipage}
\hspace{0.02in}
\begin{minipage}[t]{0.52\textwidth}
\begin{center}
\includegraphics[height=1.3in]{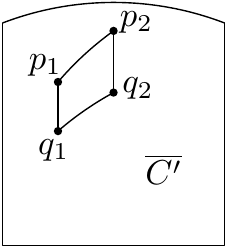}
\caption{\footnotesize Illustrating an upper arc pseudo-trapezoid in $\overline{C'}$.}
\label{fig:trapezoid}
\end{center}
\end{minipage}
\vspace{-0.15in}
\end{figure}

%

Consider a disk $D$ whose center is in $C$. As the side length of $C'$ is $1/\sqrt{2}$, $\partial D\cap C'$ may contain up to two arcs of the circle $\partial D$. For this reason, we enlarge $C'$ to a region $\overline{C'}$ so that $\partial D\cap \overline{C'}$ contains at most one arc. The region $\overline{C'}$ is defined as follows (e.g., see Fig.~\ref{fig:enlarge}).

Let $a$ and $b$ be the two vertices of $C'$ on its top edge. Let $D_{ab}$ be the disk whose center is below $\overline{ab}$ and whose boundary contains both $a$ and $b$.
Let $h(a,b)$ be the arc of $\partial D_{ab}$ above $\overline{ab}$ and connecting $a$ and $b$. Define $\overline{C'}$ to be the region bounded by $h(a,b)$, and the three edges of $C'$ other than $\overline{ab}$. As the side length of $C'$ is $1/\sqrt{2}$, for any disk $D$ whose center is in $C$, $\partial D\cap \overline{C'}$ is either $\emptyset$ or a single arc of $\partial D$ (which is on the upper half-circle of $\partial D$).
Let $e_b$ denote the bottom edge of $C'$.

Consider a disk $D$. An arc $h$ on the upper half-circle of
$\partial D$ (i.e., the half-circle above the horizontal line through
its center) is called an {\em upper disk arc} (or {\em upper arc} for
short); {\em lower arcs} are defined symmetrically. Note that an
upper arc is $x$-monotone, i.e., each vertical line intersects it at a
single point if not empty. If $h$ is an arc of a disk $D$, then we say that $D$ is the {\em underlying disk} of $h$ and the center of $D$ is also called the {\em center} of $h$.
An arc $h$ in $\overline{C'}$ is called a {\em spanning arc} if both endpoints of $h$ are on $\partial\overline{C'}$.
As we are mainly dealing with upper arcs of $\overline{C'}$ whose centers are in $C$, in the following, unless otherwise stated, an upper arc always refers to one whose center is in $C$.

The following is an easy but crucial observation that makes it possible to adapt many techniques for dealing with lines in the plane to spanning upper arcs of $\overline{C'}$. In the following discussion, we will use this observation without explicitly referring to it.

\begin{observation}\label{obser:10}
Suppose $h$ is an upper arc in $\overline{C'}$, and $e$ is a vertical line segment or an upper arc in $\overline{C'}$. Then, $h$ and $e$ can intersect each other at most once.
\end{observation}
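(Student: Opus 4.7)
The plan is to split the observation into two cases according to whether $e$ is a vertical segment or an upper arc, and to reduce the arc vs.\ arc case to the mirror-symmetry of the two intersection points of two equal-radius circles across the line through their centers.

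If $e$ is a vertical segment: by definition $h$ lies on the upper half-circle of its underlying disk $D_1$, so above the horizontal line through the center $p_1$ of $D_1$ the boundary $\partial D_1$ is the graph of the function $x\mapsto y(p_1)+\sqrt{1-(x-x(p_1))^2}$. Thus $h$ is $x$-monotone and meets any vertical line, and in particular the segment $e$, in at most one point. No property of $\overline{C'}$ is needed here.

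If $e$ is also an upper arc with underlying unit disk $D_2$ centered at $p_2\in C$, while $h$ has underlying unit disk $D_1$ centered at $p_1\in C$: the two circles $\partial D_1$ and $\partial D_2$ meet in at most two points, so it suffices to show that at most one of them lies simultaneously on both upper half-circles. For this I would use the fact that the two intersection points $q_1,q_2$ of two equal-radius circles are reflections of one another across the line $\overline{p_1 p_2}$, so their midpoint is the midpoint of $\overline{p_1 p_2}$, which gives $y(q_1)+y(q_2)=y(p_1)+y(p_2)$. The condition ``$q_i$ lies on the upper half-circle of $D_j$'' is $y(q_i)\ge y(p_j)$, so if both $q_1$ and $q_2$ lay on both upper half-circles then $y(q_1)+y(q_2)\ge 2\max(y(p_1),y(p_2))\ge y(p_1)+y(p_2)$, forcing equality throughout, and in particular $y(p_1)=y(p_2)$ and $y(q_1)=y(q_2)$. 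But then the perpendicular bisector of $\overline{p_1 p_2}$ is vertical and contains both $q_1$ and $q_2$; two distinct points on a vertical line cannot share a $y$-coordinate, contradicting $q_1\neq q_2$.

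I do not expect any genuine obstacle: the entire argument reduces to the elementary fact that two equal-radius circles meet in a pair of points symmetric across the line joining their centers, together with the trivial $x$-monotonicity of upper arcs. The hypotheses about the shape of $\overline{C'}$ and the side length $1/\sqrt{2}$ of the cells are not actually used; what is used is only that both arc centers lie in $C$, so that the arcs in question genuinely lie on the upper half-circles of their underlying disks, which is part of the global setup.
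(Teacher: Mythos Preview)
Your argument is correct and follows the same two-case split as the paper. For the arc--arc case the paper merely asserts the conclusion in one line (``since both $e$ and $h$ are upper arcs of disks whose centers are in $C$ and they are both in $\overline{C'}$, they can intersect each other at most once''), whereas your midpoint-symmetry argument $y(q_1)+y(q_2)=y(p_1)+y(p_2)$ actually supplies the verification the paper leaves to the reader; you are also right that containment in $\overline{C'}$ and the cell side length play no role---only that both arcs lie on the upper half-circles of equal-radius disks.
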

\begin{proof}
If $e$ is a vertical segment, since $h$ is $x$-monotone, $h$ and $e$ can intersect each other at most once. If $e$ is an upper arc, since both $e$ and $h$ are upper arcs of disks whose centers are in $C$ and they are both in $\overline{C'}$, they can intersect each other at most once.
\end{proof}



\paragraph{Pseudo-trapezoids.}
Let $h(p_1,p_2)$ be an upper arc with $p_1$ and $p_2$ as its left and right endpoints, respectively. Define $h(q_1,q_2)$ similarly, such that $x(p_1)=x(q_1)$ and $x(p_2)=x(q_2)$. Assume that $h(p_1,p_2)$ and $h(q_1,q_2)$ do not cross each other and $h(p_1,p_2)$ is above $h(q_1,q_2)$. The region $\sigma$ bounded by the two arcs and the two vertical lines
$\overline{p_1q_1}$ and $\overline{p_2q_2}$ is called an {\em upper-arc pseudo-trapezoid} (e.g., see~Fig.~\ref{fig:trapezoid}). We call $\overline{p_1q_1}$ and $\overline{p_2q_2}$ the two {\em vertical sides} of $\sigma$, and call $h(q_1,q_1)$ and $h(p_1,p_2)$ the {\em top arc} and {\em bottom arc} of $\sigma$, respectively. The region $\sigma$ is also considered as an upper-arc pseudo-trapezoid if the bottom arc $h(q_1,q_2)$ is replaced by a line segment $\overline{q_1q_2}$ on $e_b$ (for simplicity, we still refer to $\overline{q_1q_2}$ as the bottom-arc of $\sigma$). In this way, $\overline{C'}$ itself is an upper-arc pseudo-trapezoid.
Note that for any pseudo-trapezoid $\sigma$ in $\overline{C'}$ and a disk $D$ centered in $C$, $\partial D\cap \sigma$ is either empty or an upper arc.

\paragraph{The counterparts of $C$ (with respect to $C'$).} The above definitions in $C'$ (with respect to $C$) have counterparts in $C$ (with respect to $C'$) with similar properties. First, we define $\overline{C}$ in a symmetric way as $\overline{C'}$, i.e., a lower arc connecting the two bottom vertices of $C'$ is used to bound $\partial \overline{C}$; e.g., see Fig.~\ref{fig:enlargeC}. Also, we define {\em lower-arc pseudo-trapezoids} and {\em spanning lower arcs} similarly, and unless otherwise stated, a lower arc in $\overline{C}$ refer to one whose center is in $C'$. In the following discussion, unless otherwise stated, properties, algorithms, and observations for the concepts of $C'$ with respect to $C$ also hold for their counterparts of $C$ with respect to $C'$.

\begin{figure}[t]
\begin{minipage}[t]{\textwidth}
\begin{center}
\includegraphics[height=1.3in]{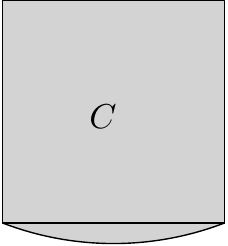}
\caption{\footnotesize Illustrating $\overline{C}$, which is the grey region.}
\label{fig:enlargeC}
\end{center}
\end{minipage}
\vspace{-0.15in}
\end{figure}

\paragraph{Duality.} We define a duality relationship between
upper arcs in $\overline{C'}$ and points in $C$. For an upper arc $h$
in $\overline{C'}$, we consider its center as its {\em dual point} in
$C$. For a point $q\in C$, we consider the upper arc $\partial D_q\cap
\overline{C'}$ as its {\em dual arc} in $\overline{C'}$ if it is not
empty. Similarly, we define duality relationship between lower arcs in
$\overline{C}$ and points in $C'$.
Note that if the boundary of a disk centered at a point $p\in P$ does not intersect $\overline{C}$,
then the point $p$ can be ignored from $P$ in our preprocessing because among all disks centered in $C$ one disk contains $p$ if and only if all other disks contain $p$.
Henceforth, without loss of generality, we assume that $\partial D_p$ intersects
$\overline{C}$ for all points $p\in P$, implying that every point of
$P$ is dual to a lower arc in $\overline{C}$.
Note that our duality is similar in spirit to the duality introduced by Agarwal and Sharir~\cite{ref:AgarwalPs05} between points and pseudo-lines.

\subsection{Computing hierarchical cuttings for disk arcs}
\label{sec:cutting}

Let $H$ be a set of $n$ spanning upper arcs in $\overline{C'}$. For a compact region $A$ of $\overline{C'}$, we use $H_A$ to denote the set of arcs of $H$ that intersect the relative interior of $A$.
By adapting its definition for hyperplanes, e.g.,~\cite{ref:ChazelleCu93,ref:MatousekRa93}, a {\em cutting} for $H$ is a collection $\Xi$ of closed cells (each of which is an upper-arc pseudo-trapezoid) with disjoint interiors, which together cover the entire $\overline{C'}$. The {\em size} of $\Xi$ is the number of cells in $\Xi$.
For a parameter $1\leq r\leq n$, a {\em $(1/r)$-cutting} for $H$ is a cutting $\Xi$ satisfying $|H_{\sigma}|\leq n/r$ for every cell $\sigma\in \Xi$.

We will adapt the algorithm of Chazelle~\cite{ref:ChazelleCu93} to computing a $(1/r)$-cutting of size $O(r^2)$ for $H$. It is actually a sequence of {\em hierarchical cuttings}. Specifically, we say that a cutting $\Xi'$ {\em $c$-refines} a cutting $\Xi$ if every cell of $\Xi'$ is contained in a single cell of $\Xi$ and every cell of $\Xi$ contains at most $c$ cells of $\Xi'$. Let $\Xi_0,\Xi_1,\ldots,\Xi_k$ be a sequence of cuttings such that $\Xi_0$ consists of the single cell $\overline{C'}$ (recall that $\overline{C'}$ itself is an upper arc pseudo-trapezoid), and every $\Xi_i$ is a $(1/\rho^i)$-cutting of size $O(\rho^{2i})$ which $c$-refines $\Xi_{i-1}$, for two constants $\rho$ and $c$. In order to make $\Xi_k$ a $(1/r)$-cutting, we set $k=\lceil\log_{\rho} r\rceil$.
The above sequence of cuttings is called a {\em hierarchical $(1/r)$-cutting} of $H$. If a cell $\sigma\in \Xi_{j-1}$ contains a cell $\sigma'\in \Xi_j$, we say that $\sigma$ is the {\em parent} of $\sigma'$ and $\sigma'$ is a {\em child} of $\sigma$. Hence, one could view $\Xi$ as a tree structure with $\Xi_0$ as the root.

Let $\chi$ denote the number of intersections of the arcs of $H$.
We have the following theorem.

\begin{theorem}\label{theo:cutting}{\em (The Cutting Theorem)}
Let $\chi$ denote the number of intersections of the arcs of $H$.
For any $r\leq n$, a hierarchical $(1/r)$-cutting of size $O(r^2)$ for $H$ (together with the sets $H_{\sigma}$ for every cell $\sigma$ of $\Xi_i$ for all $0\leq i\leq k$) can be computed in $O(nr)$ time; more specifically, the size of the cutting is bounded by $O(r^{1+\delta}+\chi\cdot r^2/n^2)$ and the running time of the algorithm is bounded by $O(nr^{\delta}+\chi\cdot r/n)$, for any small $\delta>0$.
\end{theorem}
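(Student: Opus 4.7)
My approach is to port Chazelle's hierarchical cutting construction for lines~\cite{ref:ChazelleCu93} to our setting of spanning upper arcs in $\overline{C'}$. The crucial enabler is Observation~\ref{obser:10}: any two spanning upper arcs cross at most once and vertical segments meet each arc at most once, so the set $H$ behaves combinatorially like an arrangement of pseudo-lines clipped to $\overline{C'}$. Consequently, the vertical decomposition of the arrangement $\mathcal{A}(H)$ produces exactly the pseudo-trapezoids defined in Section~\ref{sec:basic}, and a sub-decomposition of a pseudo-trapezoid remains a collection of pseudo-trapezoids. In addition, the total number of vertices of $\mathcal{A}(H)$ is $O(n+\chi)$, which is what will eventually drive the $\chi$-sensitive bound.

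\textbf{Base step and hierarchical refinement.} I would build $\Xi_0, \Xi_1, \ldots, \Xi_k$ iteratively, with $\Xi_0 = \{\overline{C'}\}$. Given $\Xi_{i-1}$ satisfying $|H_\sigma|\le n/\rho^{i-1}$ for every $\sigma\in \Xi_{i-1}$, for each such cell $\sigma$ I compute a $(1/\rho)$-cutting $\Xi_\sigma$ of the conflict list $H_\sigma$ restricted to $\sigma$, of size $O(\rho^2)$; then $\Xi_i=\bigcup_\sigma \Xi_\sigma$ is a $(1/\rho^i)$-cutting of size $O(\rho^{2i})$ that $O(\rho^2)$-refines $\Xi_{i-1}$. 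Setting $k=\lceil\log_\rho r\rceil$ yields the final $(1/r)$-cutting of size $O(r^2)$. The inner construction is the standard one: take a sample $R\subseteq H_\sigma$, form the vertical decomposition of $\mathcal{A}(R)$ clipped to $\sigma$, and check the conflict list of each sub-cell; since the range space of pseudo-trapezoids in $\overline{C'}$ has constant VC dimension, the Clarkson–Shor bound guarantees that a suitable sample gives the required cutting property.

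\textbf{Time analysis and derandomization.} Computing conflict lists naively costs $O(|H|)$ per level, and summing over $k=O(\log r)$ levels together with the geometrically growing cell counts yields the overall bound $O(nr)$. For a deterministic algorithm I would transplant Chazelle's test-set technique: build an $\epsilon$-approximation of $H$ with respect to the range space of canonical pseudo-trapezoids defined by $O(1)$ arcs of $H$, with $\epsilon \asymp 1/\rho$, via the standard merge-and-halve scheme in $O(|H|)$ time per level. For the sharper bound $O(r^{1+\delta}+\chi r^2/n^2)$ on size and $O(nr^\delta + \chi r/n)$ on time, I would invoke the refined Clarkson–Shor moment bound: for a random sample of size $r'$ drawn from $H_\sigma$, the expected complexity of the vertical decomposition inside $\sigma$ is $O(r' + (r')^2 \chi_\sigma/|H_\sigma|^2)$, where $\chi_\sigma$ is the number of intersection points of $H$ lying in $\sigma$. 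Summing $\chi_\sigma$ over cells of a single level is bounded by $\chi$, and telescoping the estimate across the $k$ refinement levels produces the stated $\chi$-sensitive size and running time.

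\textbf{Main obstacle.} The subtlest point is verifying that Chazelle's test-set derandomization transfers cleanly, since it depends on canonical cells being defined by $O(1)$ witnessing arcs and on the existence of $\epsilon$-approximations of near-optimal size for the corresponding range space. I need to argue that every pseudo-trapezoid produced at any refinement step is determined by at most four arcs of $H$ (its top and bottom arcs plus the two arcs whose endpoints create its vertical walls), so that the family of canonical pseudo-trapezoids has $O(n^4)$ members and bounded VC dimension. Once this combinatorial fact is in place, the rest of the argument is isomorphic to the pseudo-line case, and Chazelle's construction and its analysis carry over with only notational changes.
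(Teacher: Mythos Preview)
Your overall plan is the same as the paper's: both adapt Chazelle's hierarchical refinement~\cite{ref:ChazelleCu93}, using vertical decompositions of samples inside each cell and exploiting Observation~\ref{obser:10} so that the arcs behave like pseudo-lines. The coarse $O(r^2)$ size and $O(nr)$ time follow exactly as you sketch.

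There is, however, a real gap in how you obtain the $\chi$-sensitive bounds \emph{deterministically}. You first say you will derandomize via $\epsilon$-approximations, and then, for the sharper bound, you invoke the Clarkson--Shor moment estimate for a \emph{random} sample. An $\epsilon$-approximation (with respect to the range space of pseudo-trapezoids or canonical arcs) controls conflict-list sizes, but it does \emph{not} by itself control the number of intersection points that the sample $R$ has inside a cell $\sigma$; that intersection count is what determines the size of the vertical decomposition and hence the $\chi r^2/n^2$ term. The paper closes this gap with two ingredients you do not mention: (i) a lemma showing that an $\epsilon$-approximation approximately preserves the \emph{ratio} $n_\sigma(\cdot)/|\cdot|^2$ of intersection counts (their Lemma~\ref{lem:vertexcount}), and (ii) the notion of an $\epsilon$-net that is \emph{sparse} for $\sigma$, meaning $n_\sigma(R)/n_\sigma(S)\le 4|R|^2/|S|^2$, together with a derandomized construction of such sparse nets (their Lemma~\ref{lem:epsnet}). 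These two together are the deterministic surrogate for the Clarkson--Shor second-moment bound you cite, and they are exactly what yields the recurrence $|\Xi_i|\le c(\rho^i\log\rho/n)^2\chi + c\rho\log^2\rho\,|\Xi_{i-1}|$. Your ``telescope the Clarkson--Shor estimate'' step needs this machinery (or an equivalent) to go through deterministically; as written, your derandomization would give only the $O(r^2)$ size.

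A smaller point: the range space the paper actually uses is over \emph{canonical arcs} (vertical segments and sub-arcs of arcs in $H$), with shatter function $O(n^4)$, rather than directly over pseudo-trapezoids. Your ``four defining arcs'' observation is morally equivalent and also works, but be aware that the paper's $\epsilon$-approximation is with respect to crossing counts along single canonical arcs, which is what makes Lemma~\ref{lem:vertexcount} go through cleanly.
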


\paragraph{Remark.}
Correspondingly, for a set of lower arcs in $\overline{C}$, we can define cuttings similarly with lower-arc pseudo-trapezoids as cells; the same result as Theorem~\ref{theo:cutting} also holds for computing lower-arc cuttings. Also note that the algorithm is optimal if the subsets $H_{\sigma}$'s need to be computed. Further, similar result on cuttings for other more general curves in the plane (e.g., circles or circular arcs of different radii, pseudo-lines, line segments, etc.) can also be obtained.
\medskip

To prove Theorem~\ref{theo:cutting}, we adapt Chazelle's algorithm for computing cuttings for hyperplanes~\cite{ref:ChazelleCu93}. It was stated in \cite{ref:AgarwalPs05} that Chazelle's algorithm can be extended to compute such a cutting of size $O(r^{1+\delta}+\chi\cdot r^2/n^2)$ in $O(n^{1+\delta}+\chi\cdot r/n)$ time. However, no details were provided in~\cite{ref:AgarwalPs05}.
For completeness and also for helping the reader to better understand our cutting, we present the algorithm details in the appendix, where we actually give a more general algorithm that also works for other curves in the plane (e.g., circles or circular arcs of different radii, pseudo-lines, line segments, etc.).
Further, our result also reduces the factor $n^{1+\delta}$ in the above time complexity of~\cite{ref:AgarwalPs05} to $nr^{\delta}$.


\paragraph{The weighted case.} To adapt the simplex range searching algorithms
in~\cite{ref:ChanOp12,ref:MatousekEf92,ref:MatousekRa93}, we will need to compute cuttings for a weighted set $H$ of spanning upper arcs in $\overline{C'}$, where each arc $h\in H$ has a nonnegative weight $w(h)$. The hierarchical $(1/r)$-cutting can be naturally generalized to the weighted case (i.e., the interior of each pseudo-trapezoid in a $(1/r)$-cutting can be intersected by upper arcs of $H$ of total weight at most $w(H)/r$, where $w(H)$ is the total weight of all arcs of $H$). By a method in~\cite{ref:MatousekCu91}, any algorithm computing a hierarchical $(1/r)$-cutting for a set of hyperplanes can be converted to the weighted case with only a constant factor overhead. We can use the same technique to extend any algorithm computing a hierarchical $(1/r)$-cutting for a set of upper arcs to the weighted case.

\subsection{Test Set Lemma}
\label{sec:testset}

A critical component in all simplex range searching algorithms
in~\cite{ref:ChanOp12,ref:MatousekEf92,ref:MatousekRa93} is a Test Set
Lemma. We prove a similar result for our problem, by using the duality.
For any pseudo-trapezoid $\sigma$ in $\overline{C'}$, we say that an upper arc $h$ {\em crosses} $\sigma$ if $h$ intersects the interior of $\sigma$.

\begin{lemma}\label{lem:testset}{\em (Test Set Lemma)}
For any parameter $r\leq n$,
there exists a set $Q$ of at most $r$ spanning upper arcs in $\overline{C'}$, such that for any
collection $\Pi$ of interior-disjoint upper-arc pseudo-trapezoids in $\overline{C'}$
satisfying that each pseudo-trapezoid contains at least $n/(c\cdot r)$ points of $P$ for some constant $c>0$, the following holds: if $\kappa$ is the
maximum number of pseudo-trapezoids of $\Pi$ crossed by any
upper arc of $Q$, then the maximum number of pseudo-trapezoids of
$\Pi$ crossed by any upper arc in $\overline{C'}$ is at most
$O(\kappa+\sqrt{r})$.
\end{lemma}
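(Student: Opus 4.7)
The plan is to adapt Matou\v{s}ek's Test Set Lemma~\cite{ref:MatousekEf92} to the unit-disk setting via the duality and the cutting theorem (Theorem~\ref{theo:cutting}). For the construction of $Q$, apply the lower-arc version of Theorem~\ref{theo:cutting} to the set $P^\ast = \{\ell_p : p\in P\}$ of $n$ dual lower arcs in $\overline{C}$ with parameter $\sqrt{r}$. This yields a $(1/\sqrt{r})$-cutting $\Xi$ of $P^\ast$ of size $O(r)$, in which each cell $\tau$ is a lower-arc pseudo-trapezoid in $\overline{C}$ whose interior meets at most $n/\sqrt{r}$ arcs of $P^\ast$. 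After $O(\sqrt{r})$ refinements along $\partial C$, one may assume each cell lies in $C$ (or is discarded). Pick a representative point $q_\tau \in \tau \cap C$ for each cell and let $h_\tau = \partial D_{q_\tau} \cap \overline{C'}$ be its dual upper arc; set $Q = \{h_\tau : \tau\in\Xi\}$, which has size $O(r)$ and can be forced to $\leq r$ by tuning the constant in Theorem~\ref{theo:cutting}.

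For the main argument, fix $\Pi$ so that each $\sigma\in\Pi$ has $|P\cap\sigma|\geq n/(cr)$, hence $|\Pi|\leq cr$. For an arbitrary upper arc $h$ with center $q\in C$, let $\tau\in\Xi$ contain $q$ and $h_\tau\in Q$ be its test arc. The key duality fact is: a point $p\in P$ lies on opposite sides of $h$ and $h_\tau$ in $\overline{C'}$ iff $q,q_\tau$ lie on opposite sides of $\ell_p$ in $\overline{C}$, which requires $\ell_p$ to cross $\tau$; by the cutting property at most $n/\sqrt{r}$ such $p$'s exist. It therefore suffices to show that the number of pseudo-trapezoids $\sigma\in\Pi$ on which $h$ and $h_\tau$ disagree about crossing is $O(\sqrt{r})$, which then gives the claimed $O(\kappa+\sqrt{r})$ bound.

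To bound the disagreement count I would split into cases based on the positions of $h$ and $h_\tau$ relative to $\sigma$. In the generic case (say $\sigma$ is crossed by $h$ but not $h_\tau$ with $h_\tau$ lying entirely above $\sigma$, and $P\cap\sigma$ either split by $h$ or lying above $h$), the monotonicity of upper arcs and Observation~\ref{obser:10} force the points of $P\cap\sigma$ above $h$ to also lie below $h_\tau$, contributing them to the $P$-separation set; aggregating, at most $O(n/\sqrt{r})/\Omega(n/r) = O(\sqrt{r})$ such pseudo-trapezoids can occur. The main obstacle is the degenerate case where $h$ crosses $\sigma$ strictly above (or below) all of $P\cap\sigma$, so that $\sigma$ contributes nothing to the $P$-separation set. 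I anticipate handling this by comparing the crossings of $h$ and $h_\tau$ against the $O(r)$ boundary arcs of $\Pi$ (themselves spanning upper arcs in $\overline{C'}$), yielding an additional cutting-based symmetric-difference bound on these degenerate $\sigma$'s that is again $O(\sqrt{r})$; if needed, $Q$ can be augmented by $O(\sqrt{r})$ arcs derived from this secondary cutting while preserving $|Q|=O(r)$.
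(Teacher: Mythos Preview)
Your overall strategy---dualize $P$ to lower arcs in $\overline{C}$, take a $(1/\sqrt{r})$-cutting there, and let $Q$ be the upper arcs dual to suitable points in the cells---matches the paper. The gap is precisely the degenerate case you flag, and it is caused by your choice to take a \emph{single} representative $q_\tau$ per cell. Your proposed fixes do not close it: the ``secondary cutting'' on the boundary arcs of $\Pi$ would have to depend on $\Pi$, so any arcs you add to $Q$ from it would make $Q$ depend on $\Pi$, contradicting the statement that $Q$ is fixed once and works for all $\Pi$; and the symmetric-difference-of-crossings idea for two upper arcs against $O(r)$ boundary arcs does not reduce to a point-separation count in any obvious way.

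The paper's resolution is simple and avoids all case analysis: instead of one representative per cell, take $Q$ to be the upper arcs dual to \emph{all vertices} of the cutting $\Xi$ (still $O(r)$ points, hence $|Q|\le r$ after tuning constants). Now, for an arbitrary upper arc $h$ whose center lies in a cell $\sigma\in\Xi$, let $G\subseteq Q$ be the at most four arcs dual to the vertices of $\sigma$. Any pseudo-trapezoid of $\Pi$ crossed by $h$ but by no arc of $G$ must be entirely contained in the zone $Z(h)$ of $h$ in the arrangement $\mathcal{A}(G)$ inside $\overline{C'}$. The key duality observation is that every point of $P$ lying in the interior of $Z(h)$ is dual to a lower arc that crosses $\sigma$; by the cutting property there are at most $n/\sqrt{r}$ such points. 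Since each pseudo-trapezoid of $\Pi$ carries at least $n/(cr)$ points, at most $O(\sqrt{r})$ of them fit in $Z(h)$. Combined with the $\le 4\kappa$ pseudo-trapezoids crossed by arcs of $G$, this gives the $O(\kappa+\sqrt{r})$ bound directly---your degenerate case never arises because the zone argument counts whole pseudo-trapezoids, not separated points.
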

\begin{proof}
We adapt the proof of Lemma~3.3~\cite{ref:MatousekEf92} by using our duality.
Let $H$ be the set of lower arcs in $\overline{C}$ dual to the points of $P$. By Theorem~\ref{theo:cutting}, we can choose a $(1/t)$-cutting $\Xi$ for $H$ whose lower-arc pseudo-trapezoids have at most $r$ vertices in total, with $t=\Theta(\sqrt{r})$. Let $V$ be the set of all vertices of the pseudo-trapezoids of $\Xi$. Let $Q$ be the set of upper arcs in $\overline{C'}$ dual to the points of $V$. Below we argue that $Q$ has the property stated in the lemma.

Consider an upper arc $h$ in $\overline{C'}$. Let $\sigma$ be the pseudo-trapezoid of $\Xi$ that contains the center of $h$ (recall that by our convention any upper arc of $\overline{C'}$ has its center in $C$). Let $G$ be the set of at most four upper arcs in $\overline{C'}$ dual to the vertices of $\sigma$.
By the hypothesis of the lemma, each arc of $G$ crosses at most $\kappa$ cells of $\Pi$. It remains to bound the number of cells of $\Pi$ crossed by $h$ but by no arc of $G$. Such cells must be completely contained in the zone $Z(h)$ of $h$ in the arrangement $\calA(G)$ of the arcs of $G$ in $\overline{C'}$. It can be verified that any point of $P$ lying in the interior of $Z(h)$ must be dual to a lower arc in $H$ that crosses $\sigma$; there are at most $n/t=O(n/\sqrt{r})$ such lower arcs in $H$. Therefore, the zone $Z(h)$ contains $O(n/\sqrt{r})$ points of $P$. Since each cell of $\Pi$ has at least $n/(cr)$ points of $P$, the number of cells of $\Pi$ completely contained in $Z(h)$ is $O(\sqrt{r})$. This proves the lemma.
\end{proof}

With our Cutting Theorem (i.e., Theorem~\ref{theo:cutting}) and the Test Set Lemma, we proceed to adapt the simplex range searching algorithms in~\cite{ref:ChanOp12,ref:MatousekEf92,ref:MatousekRa93} to our problem in the following subsections.

\subsection{A data structure based on pseudo-trapezoidal partitions}
\label{sec:partition}

We first extend the simplicial partition for hyperplanes in~\cite{ref:MatousekEf92} to our problem, which we rename {\em pseudo-trapezoidal partition}.
A {\em pseudo-trapezoidal partition} for $P$ is a
collection $\Pi=\{(P_1,\sigma_1),\ldots,(P_m,\sigma_m)\}$, where the $P_i$'s
are pairwise disjoint subsets forming a partition of $P$, and each
$\sigma_i$ is a relatively open upper-arc pseudo-trapezoid in $\overline{C'}$ containing
all points of $P_i$.
The pseudo-trapezoidal partition we will compute has the following additional property:
$\max_{1\leq i\leq m}|P_i|<2\cdot \min_{1\leq i\leq m}|P_i|$,
i.e., all subsets have roughly the same size. Note that the trapezoids
$\sigma_i$'s may overlap. The subsets $P_i$'s are called {\em classes} of $\Pi$.

For any upper arc $h$ in $\overline{C'}$, we define its {\em
crossing number} with respect to $\Pi$ as the number of pseudo-trapezoids of
$\Pi$ crossed by $h$. The crossing number of $\Pi$ is defined as the maximum crossing numbers of all upper arcs $h$ in $\overline{C'}$.
The following partition theorem corresponds to Theorem~3.1~\cite{ref:MatousekEf92}. Its proof is similar to Theorem~3.1 in~\cite{ref:MatousekEf92}, with our Test Set Lemma and our Cutting Theorem.

\begin{theorem}{\em (Partition Theorem)}
Let $s$ be an integer with $2\leq s<n$ and $r=n/s$. There exists a pseudo-trapezoidal partition $\Pi$ for $P$, whose classes $P_i$ satisfy $s\leq |P_i|<2s$, and whose crossing number is $O(\sqrt{r})$.
\end{theorem}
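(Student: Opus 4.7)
The plan is to follow the iterative scheme of Matou\v{s}ek's proof of Theorem~3.1 in~\cite{ref:MatousekEf92}, with our Cutting Theorem (Theorem~\ref{theo:cutting}) and Test Set Lemma (Lemma~\ref{lem:testset}) replacing their hyperplane analogues. Let $Q$ be the test set of at most $r$ spanning upper arcs given by Lemma~\ref{lem:testset}, and assign each $q\in Q$ the initial weight $w(q)=1$. I will build the classes $P_1,P_2,\ldots$ greedily: in the $i$-th iteration, starting from the current remaining set $P'\subseteq P$, I carve off a pseudo-trapezoid $\sigma_i$ containing between $s$ and $2s$ points of $P'$, take these points as $P_i$, and then double $w(q)$ for every $q\in Q$ crossing $\sigma_i$. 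Because every resulting class will contain at least $s=n/r$ points of $P$, the hypothesis of Lemma~\ref{lem:testset} is satisfied, and it suffices to prove that no $q\in Q$ crosses more than $O(\sqrt{r})$ of the $\sigma_i$'s; the Test Set Lemma then upgrades this to the crossing number of $\Pi$.

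For a single iteration, let $n_i=|P'|$ and $W_i=\sum_{q\in Q}w(q)$. Applying the weighted extension of Theorem~\ref{theo:cutting} to $Q$ with parameter $t_i=\Theta(\sqrt{n_i/s})$ yields a $(1/t_i)$-cutting $\Xi_i$ of size $O(t_i^2)=O(n_i/s)$, each cell of which is crossed by arcs of $Q$ of total weight at most $W_i/t_i$. By pigeonhole, some cell $\tau\in\Xi_i$ contains at least $s$ points of $P'$. Sweeping $\tau$ by a vertical line from left to right (breaking ties in $x$-coordinates by a consistent symbolic perturbation) and stopping the first time exactly $s'\in[s,2s)$ points have been encountered, I extract the sub-region $\sigma_i$ bounded by the original left vertical side of $\tau$, the stopping vertical line, and the top and bottom arcs of $\tau$; this $\sigma_i$ is still a legal upper-arc pseudo-trapezoid. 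I set $P_i$ to be those $s'$ points, delete them from $P'$, and double the weight of every $q\in Q$ that crosses $\sigma_i$. Since $\sigma_i\subseteq\tau$, every such $q$ also crosses $\tau$, so the total weight doubled is at most $W_i/t_i$, giving $W_{i+1}\le W_i\bigl(1+1/t_i\bigr)$.

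The process terminates after at most $m\le r$ iterations, with any residue of fewer than $s$ points absorbed into the last class. Telescoping and taking logarithms, $\ln W_m\le \ln W_0+\sum_{i=1}^{m}1/t_i=\ln r+O\bigl(\sum_{i=1}^{r}\sqrt{s/n_i}\bigr)$, and since $n_i\ge (m-i+1)s$ throughout, the sum is $O\bigl(\sum_{j=1}^{r}1/\sqrt{j}\bigr)=O(\sqrt{r})$, so $W_m\le r\cdot 2^{O(\sqrt{r})}$. Any arc $q\in Q$ crossing $\kappa_q$ of the pseudo-trapezoids has final weight exactly $2^{\kappa_q}\le W_m$, forcing $\kappa_q=O(\sqrt{r})$. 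Lemma~\ref{lem:testset} then gives that the crossing number of $\Pi$ is $O(\sqrt{r})$, as required. The main obstacle I anticipate is verifying that the vertical slicing of a cutting cell always produces a valid upper-arc pseudo-trapezoid (the geometry of pseudo-trapezoids is more rigid than that of Matou\v{s}ek's simplices, so one must argue that vertical cuts preserve the top-arc/bottom-arc/vertical-side structure) and that degenerate collinear configurations can be handled via symbolic perturbation; once this slicing is in place, the weight-doubling potential argument is a faithful port of the standard simplicial-partition proof.
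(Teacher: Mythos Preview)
Your proof is correct and follows essentially the same approach as the paper's: both invoke the Test Set Lemma to obtain $Q$, reweight $Q$ by doubling after each iteration, apply the weighted Cutting Theorem to find a cell with at least $s$ remaining points, and finish with the standard potential argument from Matou\v{s}ek's Lemma~3.2. One simplification the paper uses (which dissolves your anticipated obstacle entirely): since the pseudo-trapezoids $\sigma_i$ in a pseudo-trapezoidal partition need not be pairwise disjoint and are only required to \emph{contain} $P_i$, you may take $\sigma_{i}$ to be the full cutting cell $\tau$ and arbitrarily select exactly $s$ points of $P'\cap\tau$ as $P_i$---no vertical slicing is needed at all.
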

\begin{proof}
Note that similar result as the theorem is already known for pseudo-lines with respect to points~\cite{ref:AgarwalPs05}. Here for completeness we sketch the proof and refer the reader to~\cite{ref:MatousekEf92} for detailed analysis.

We first apply the Test Set Lemma on $P$ to obtain a set $Q$ of at most $r$ upper arcs in $\overline{C'}$. 
The algorithm proceeds with $r$ iterations. In the $i$-th iteration, $1\leq i\leq r$, we will compute the set $P_i$ and the upper-arc pseudo-trapezoid $\sigma_i$. Suppose $P_1,\ldots,P_i$ and $\sigma_1,\ldots,\sigma_i$ have already been computed. Let $P_i'=P\setminus \bigcup_{k=1}^i P_k$ and $n_i=|P_i'|$. The algorithm maintains an invariant that $n_i\geq s$. We describe the $(i+1)$-the iteration of the algorithm below.

If $n_i<2s$, then we set $P_{i+1}=P_i'$, $\sigma_{i+1}=\overline{C'}$, $m=i+1$, and $\Pi=\{(P_1,\sigma_1),\ldots, (P_m,\sigma_m)\}$; this finishes the construction of $\Pi$. In what follows, we assume $n_i\geq 2s$.

For each arc $h\in Q$, let $k_i(h)$ denote the number of pseudo-trapezoids among $\sigma_1,\ldots,\sigma_i$ crossed by $h$. We define a weighted set $(Q,w_i)$ by setting $w_i(h)=2^{k_i(h)}$ for every $h\in Q$.
By the Cutting Theorem, we can choose a parameter $t_i\geq c\cdot \sqrt{n_i/s}$ for some constant $c>0$, such that there exists a $(1/t_i)$-cutting $\Xi_i$ for $(Q,w_i)$ that has at most $n_i/s$ cells.
Hence, some cell of $\Xi_i$ contains at least $s$ points of $P_i'$. Let $\sigma_{i+1}$ be such a cell.
Among the at least $s$ points of $P_i'$ contained in $\sigma_{i+1}$, we arbitrarily pick $s$ points to form $P_{i+1}$.

This finishes the description of the construction of $\Pi$. Using the Test Set Lemma and following the same analysis as that in Lemma~3.2~\cite{ref:MatousekEf92}, we can show that the crossing number of $\Pi$ is bounded by $O(\sqrt{r})$.
\end{proof}

\begin{lemma}\label{lem:60}
Given an integer $s$ with $2\leq s<n$ and $r=n/s$, a pseudo-trapezoidal partition for $P$
satisfying $2\leq |P_i|<2s$ for every class $P_i$ and with crossing
number $O(r^{1/2+\delta})$ can be constructed in $O(n\log r)$ time.
\end{lemma}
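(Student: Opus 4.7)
The direct implementation of the Partition Theorem proof takes $\Omega(nr)$ time: the main loop performs $r$ iterations, and each iteration must scan $\Theta(n)$ remaining points to locate them within a cutting. The plan is to obtain the faster $O(n\log r)$ bound by adapting Matou\v{s}ek's efficient simplicial partition construction (Theorem~4.7 in \cite{ref:MatousekEf92}) through a recursive doubling scheme.

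First, I would verify that when the partition parameter is a constant $c$, the construction from the Partition Theorem proof runs in $O(n)$ time: the Test Set Lemma yields a test set of only $O(c)=O(1)$ upper arcs (the required $(1/\sqrt{c})$-cutting of the $n$ dual lower arcs in $\overline{C}$ takes $O(n)$ time by the Cutting Theorem); each of the $O(c)$ main-loop iterations then only needs a constant-time cutting computation on the $O(1)$ weighted test arcs plus an $O(n_i)$ point-location scan to identify the densest cell and extract the next class. This produces a coarse partition of $P$ into $O(c)$ classes of size $\Theta(n/c)$, with crossing number at most $\alpha\sqrt{c}$ for some absolute constant $\alpha$.

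Next, I would build the full partition recursively: for each class $(P_i,\sigma_i)$ of the coarse partition, recursively invoke the same procedure on $P_i$ with parameter $r/c$, treating the sub-pseudo-trapezoid $\sigma_i$ as playing the role of $\overline{C'}$. The recurrence $T(n,r)=O(n)+\sum_i T(|P_i|,r/c)$, combined with $\sum_i|P_i|=n$ and recursion depth $L=\log_c r$, resolves to $T(n,r)=O(n\log_c r)=O(n\log r)$. For the crossing number, any upper arc in $\overline{C'}$ crosses at most $\alpha\sqrt{c}$ sub-pseudo-trapezoids within any containing cell at each level, so the composite bound after $L$ levels is $(\alpha\sqrt{c})^L=\alpha^L\sqrt{r}=r^{(\log\alpha)/(\log c)}\cdot\sqrt{r}$. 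Choosing $c$ to be a sufficiently large constant with $(\log\alpha)/(\log c)\le\delta$ gives the claimed crossing number $O(r^{1/2+\delta})$. The base case of the recursion (when the remaining count drops below $2s$) produces classes satisfying only $|P_i|\ge 2$, which accounts for the weakened lower bound compared with the Partition Theorem.

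The main obstacle will be carefully verifying that the Test Set Lemma and the Cutting Theorem continue to apply to the recursive subproblems on each $\sigma_i$: the duality between points in $C$ and upper arcs in $\overline{C'}$ must restrict cleanly (only points whose dual arcs meet $\sigma_i$ remain relevant), and Observation~\ref{obser:10}, which asserts that two upper arcs meet at most once in $\overline{C'}$, must be inherited by $\sigma_i$. One must then confirm that the absolute constant $\alpha$ in the per-level crossing bound truly does not drift across recursion levels; otherwise the clean geometric-series bound in the crossing-number analysis would have to be replaced by a more delicate accounting.
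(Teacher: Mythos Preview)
Your proposal is correct and takes essentially the same approach as the paper: an $O(n)$-time construction for constant partition parameter (via the Partition Theorem proof with constant-size test sets and cuttings), followed by recursion with a fixed constant branching factor to depth $O(\log r)$, absorbing the per-level constant $\alpha$ into the $r^\delta$ slack in the crossing number. The paper is equally terse on the recursion-into-$\sigma_i$ concern you raise, simply referring to Matou\v{s}ek's Corollary~3.5; the resolution---either clipping each sub-cell to its parent $\sigma_i$ (the intersection of two upper-arc pseudo-trapezoids decomposes into $O(1)$ such trapezoids by Observation~\ref{obser:10}) or, as you suggest, re-running the construction inside $\sigma_i$---works cleanly, and $\alpha$ does not drift because the Partition Theorem's hidden constant is absolute and independent of the ambient cell.
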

\begin{proof}
First of all, if $r=O(1)$, a pseudo-trapezoidal partition
as in the Partition Theorem can be constructed in $O(n)$ time. To see this, it
suffices to go through the steps of the proof of the Partition
Theorem and verify that the total time is bounded by $O(n)$. For the Test Set Lemma, we need to compute a $(1/t)$-cutting for a set of $n$ arcs with $t=O(1)$, which can be done in $O(n)$ time by our Cutting Theorem. The rest of the time analysis follows the same as
Lemma~3.4~\cite{ref:MatousekEf92}.

To prove the lemma, we build the partition recursively. Specifically,
we first build a partition $\Pi$ with class sizes between $s_1$ and
$2s_1$ and with crossing number at most $c\cdot \sqrt{r_1}$, where
$r_1=n/s_1$ and $c$ is a constant depending on the proof analysis of the
Partition Theorem. Then, for every class $P_i$ of this partition, we
construct a partition $\Pi_i$ with class sizes between $s_2$ and
$2s_2$ and with crossing number $c^2\cdot \sqrt{r_2}$, where
$r_2=s_1/s_2$. All these secondary partitions form a
pseudo-trapezoidal partition with class sizes between $s_2$ and $2s_2$
and with crossing number $c^2\sqrt{r}$, where $r=n/s_2$. By repeating
this process for a certain number of times and choosing a sufficient
large constant $r_0$ for parameter $s_1=n/r_0$, $s_2=s_1/r_0$, etc, we can
achieve the lemma. Note that we lose a constant factor in the crossing number at every iteration.
Refer to Corollary~3.5~\cite{ref:MatousekEf92} for
details.
\end{proof}

Let $H$ be a set of $n$ spanning upper arcs in $\overline{C'}$.
Next, using Lemma~\ref{lem:60} we give a faster algorithm (than the one in our Cutting Theorem) to compute cuttings for $H$ in the following lemma, which corresponds to Proposition~4.4~\cite{ref:MatousekEf92}.

\begin{lemma}\label{lem:70}
For any $r\leq n^{1/2-\delta}$, a $(1/r)$-cutting of size $O(r^2)$ for $H$ can be computed in time $O(n\log r+r^{3+\delta})$. In particular, the running time is $O(n\log r)$ when $r\leq n^{1/3+\delta}$.
\end{lemma}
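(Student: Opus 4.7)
The plan is to adapt Matou\v{s}ek's proof of Proposition~4.4 in~\cite{ref:MatousekEf92}, which gives the analogous result for hyperplanes. The two ingredients are a random sample that produces a candidate cutting, and the partition-tree machinery of Lemma~\ref{lem:60} used as a fast conflict-counting primitive.

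First I would draw a random sample $R\subseteq H$ of size $\Theta(r)$ and construct the arrangement of $R$ restricted to $\overline{C'}$. By Observation~\ref{obser:10}, the arcs of $R$ behave as pseudo-lines inside $\overline{C'}$, so this arrangement has complexity $O(r^2)$ and can be built in $O(r^2)$ time by a topological sweep. A vertical decomposition then chops every face into upper-arc pseudo-trapezoids, yielding a candidate set $\Xi$ of $O(r^2)$ cells. This step costs $O(r^2)$, which is absorbed into the $O(r^{3+\delta})$ term.

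Next, for each cell $\sigma\in\Xi$ I need the conflict-list size $|H_\sigma|$. Cells satisfying $|H_\sigma|\le n/r$ are accepted; the remaining ``heavy'' cells are refined by invoking Theorem~\ref{theo:cutting} on $(H_\sigma,\sigma)$ with a locally suitable parameter, producing sub-cells each meeting at most $n/r$ arcs. By the Clarkson--Shor random-sampling analysis, with high probability the total size $\sum_\sigma |H_\sigma|$ is $O(n)$, so all sub-cuttings together still have $O(r^2)$ cells and the total refinement work is $O(r^{3+\delta})$ (the subproblems are all small compared to $n$ because of the constraint $r\le n^{1/2-\delta}$).

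The crucial step, and the main obstacle, is computing the quantities $|H_\sigma|$ for all $O(r^2)$ candidate cells within total time $O(n\log r+r^{3+\delta})$. For this I would build, in $O(n\log r)$ time, a pseudo-trapezoidal partition tree on $H$ by applying Lemma~\ref{lem:60} recursively and halting once the subproblems reach size $\Theta(n/r)$. Each cell $\sigma\in\Xi$ is a pseudo-trapezoid bounded by at most two vertical sides and two upper arcs, so querying $|H_\sigma|$ reduces to a constant number of ``arcs crossing a pseudo-trapezoid'' subqueries that can be answered on the partition tree by the standard Matou\v{s}ek-style recursion: at each node whose pseudo-trapezoid lies entirely inside (resp.\ outside) $\sigma$ we collect (resp.\ discard) the whole canonical subset, and recurse only on the $O(r^{1/2+\delta})$ children crossed by $\partial\sigma$. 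This gives $O((n/r)\cdot r^{1/2+\delta})$ per query when aggregated to the leaves, hence $O(r^{1/2+\delta}\cdot n/r)$ amortized time per cell of $\Xi$; summed over the $O(r^2)$ cells, this contributes $O(n\cdot r^{1/2+\delta})$, which under $r\le n^{1/2-\delta}$ is comfortably $O(n\log r + r^{3+\delta})$.

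The hardest technical point will be carrying through the partition-tree crossing-count analysis with pseudo-trapezoids (rather than triangles): specifically, verifying that $\partial\sigma$ is cut by the right number of cells at each level, which is where the Test Set Lemma (Lemma~\ref{lem:testset}) and the crossing-number bound of Lemma~\ref{lem:60} enter. Once this is in place, the final complexity is $O(n\log r+r^{3+\delta})$ as claimed, and the ``in particular'' clause follows because $r\le n^{1/3+\delta}$ forces $r^{3+\delta}=O(n\log r)$.
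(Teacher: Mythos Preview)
Your conflict-counting step does not give the running time you claim. With a single-level partition of $H$ (via its dual points) into $\Theta(r)$ classes of size $\Theta(n/r)$, a query for one cell $\sigma$ costs $O(r)$ to scan all classes plus $O(r^{1/2+\delta}\cdot n/r)=O(n\,r^{-1/2+\delta})$ to brute-force the crossed classes. Summing over the $O(r^2)$ cells of $\Xi$ gives
\[
O(r^3)+O\bigl(n\,r^{3/2+\delta}\bigr),
\]
not $O(n\,r^{1/2+\delta})$ as you wrote (you lost a factor of $r$). And even the bound you wrote is too large: already at $r=n^{1/3}$ one has $n\,r^{1/2+\delta}=n^{7/6+\delta/3}$, which is not bounded by $n\log r+r^{3+\delta}=O(n\log n)$. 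So the approach, as stated, is a full polynomial factor too slow across essentially the whole range $r\le n^{1/2-\delta}$.

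The paper avoids this bottleneck entirely by not computing any $|H_\sigma|$. Instead it uses Lemma~\ref{lem:60} (applied to the dual point set $\mathscr{D}(H)$) to build, in $O(n\log r)$ time, a \emph{weighted $(1/8r)$-approximation} $(R,w)$ of $H$ of size $O(r^{2+\delta})$: one representative arc per class, weighted by the class size. Then a $(1/8r)$-cutting of the small set $(R,w)$, computed by Theorem~\ref{theo:cutting} in $O(|R|\cdot r)=O(r^{3+\delta})$ time, is automatically a $(1/r)$-cutting of the full $H$ (by the standard ``approximation + cutting $\Rightarrow$ cutting'' observation). No conflict lists for $H$ are ever formed. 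A secondary point: your route is randomized (random sample plus Clarkson--Shor), whereas this lemma is used downstream (Lemmas~\ref{lem:90} and~\ref{lem:100}) in a deterministic pipeline, so a deterministic argument is actually required here.
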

\begin{proof}
We first define the {\em $\epsilon$-approximations}. Let $R$ be a subset of $H$ and is equipped with a weight function $w(\cdot)$, i.e., for each arc $h\in R$, $h$ has a weight $w(h)$. For any subset $R'$ of $R$, we use $w(R')$ to denote the total sum of the weights of all arcs of $R'$. The weighted set $(R,w)$ is an {\em $\epsilon$-approximation} if $|w(R_e)/w(R)-|H_e|/|H||<\epsilon$, for $e$ as any sub-segment of $e_b$, any vertical line segment, or any upper arc in $\overline{C'}$.
Next we prove the lemma.

Let $\scrD(H)$ be the set of points in $C$ dual to the arcs of $H$. Let $\Pi$ be a lower-arc trapezoidal partition of $\scrD(H)$ with crossing number $\kappa$ and each class has no more than $s$ points. For each class $P_i$ of $\Pi$, we pick an arbitrary point $p_i\in P_i$ and let $h_i$ be the arc of $H$ dual to $p_i$; we set $w(h_i)=|P_i|$. In this way, we obtain a subset $R$ of weighted arcs of $H$.

We claim that that $(R,w)$ is a $(2\kappa s/n)$-approximation of $H$. Indeed, let $e$ be sub-segment of $e_b$, a vertical segment, or an upper arc in $\overline{C'}$. Our goal is to show that $|w(R_e)/w(R)-|H_e|/|H||<2\kappa s/n$. By definition, $w(R)=|H|$. Hence, it suffices to prove $|w(R_e)-|H_e||<2\kappa s$. Let $D_1$ and $D_2$ be the disks centered at the two endpoints of $e$, respectively. Then, one can verify that if an arc of $H$ crosses the interior of $e$, then the center of the arc must be in one and only one disk of $D_1$ and $D_2$, i.e., in the symmetric difference of the two disks. This implies that $|w(R_e)-|H_e||$ is no more than the total number of points in the classes of $\Pi$ whose pseudo-trapezoids are crossed by the two lower arcs of $\overline{C}$ due to the two endpoints of $e$. As the crossing number of $\Pi$ is $\kappa$ and each class of $\Pi$ has at most $s$ points, we obtain that $|w(R_e)-|H_e||\leq 2\kappa s$. The claim is thus proved.

Due to the claim, we can compute a $(1/t)$-approximation $(R,w)$ of size $O(t^{2+\delta})$ for $H$ in $O(n\log t)$ time by Lemma~\ref{lem:60}.

By adapting an observation made by Matou\v{s}ek~\cite{ref:MatousekCu91} for hyperplanes (see also Lemma~4.3~\cite{ref:MatousekEf92}), we can show that if $(R,w)$ is an $\epsilon$-approximation for $H$ and $\Xi$ is an $\epsilon'$-cutting for $(R,w)$, then $\Xi$ is a $4(\epsilon+\epsilon')$-cutting for $H$. As such, we can compute a $(1/r)$-cutting for $H$, as follows.
First, we compute a $(1/8r)$-approximation $(R,w)$ of size $O(r^{2+\delta})$ for $H$ in $O(n\log r)$ time using by Lemma~\ref{lem:60}, as discussed above. Second, we compute a $(1/8r)$-cutting $\Xi$ for $(R,w)$ in $O(r^{3+\delta})$ time by our Cutting Theorem. According to the above observation, $\Xi$ is a $(1/r)$-cutting for $H$. The total time of the algorithm is $O(n\log r+r^{3+\delta})$.
\end{proof}

\paragraph{Remark.} The algorithm in Lemma~\ref{lem:70} does
not compute the subsets $H_{\sigma}$ for all cells $\sigma$
of the cutting, since otherwise the algorithm of the Cutting Theorem is optimal.
\medskip

The following lemma, which corresponds to Lemma~4.5~\cite{ref:MatousekEf92}, will be used in the algorithm for Lemma~\ref{lem:90}.
\begin{lemma}\label{lem:80}
Given any constant $c>0$, let $r\leq n^{\alpha}$ be a parameter, where $\alpha$ is a constant depending on $c$. We can build in $O(n\log r)$ time a data structure of $O(n)$ space for $P$ such that the number of points of $P$ in a query upper-arc pseudo-trapezoid in $\overline{C'}$ can be computed in $O(n/r^c)$ time and deleting a point can be handled in $O(\log r)$ time (the value of $n$ in the query time refers to the original size of $P$ before any deletion happens).
\end{lemma}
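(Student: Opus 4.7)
The plan is to adapt Matou\v{s}ek's Lemma~4.5~\cite{ref:MatousekEf92} by building a constant-branching hierarchical pseudo-trapezoidal partition tree $T$ of depth $k=\Theta(\log r)$, using Lemma~\ref{lem:60} as the partition primitive. Fix a constant $r_0$ large enough (depending on $c$ and $\delta$) and set $k$ so that $r_0^{k(1/2-\delta)} \ge r^c$, equivalently $r_0^k = r^{O(c)}$; choose $\alpha(c)>0$ small enough that $r\le n^\alpha$ implies $r_0^k\le n$.

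Construction: Starting at the root holding $P$, apply Lemma~\ref{lem:60} with parameter $r_0$ to split the points into $r_0$ classes, each of size $\Theta(|P|/r_0)$ and contained in an upper-arc pseudo-trapezoid, with crossing number $O(r_0^{1/2+\delta})$; recurse on each class. After $k$ levels, each leaf $\ell$ stores its class $P_\ell$ of size $\Theta(n/r_0^k)$ as an explicit list together with a counter $n_\ell=|P_\ell|$, and each internal node $v$ stores its pseudo-trapezoid $\sigma_v$, pointers to its $r_0$ children, and a counter $n_v=|P\cap \sigma_v|$. Each point of $P$ is given a direct pointer to the unique leaf containing it. Since $r_0$ is a constant, Lemma~\ref{lem:60} at node $v$ runs in $O(|P_v|)$ time, so the work summed over a level is $O(n)$ and the total preprocessing across $k=O(\log r)$ levels is $O(n\log r)$. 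The tree has $O(r_0^k)=O(n)$ nodes, for $O(n)$ total space.

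Query: Descend $T$ from the root and at each visited internal node $v$ classify each of its $r_0$ children cells as entirely inside $\sigma$, entirely outside $\sigma$, or crossing $\partial\sigma$; add $n_w$ to a running sum for each inside child $w$, discard outside children, and recurse on crossing children; at each visited leaf, brute-force test the points of its class against $\sigma$. The boundary $\partial\sigma$ consists of $O(1)$ pieces, and the Partition Theorem gives that any single upper arc crosses $O(r_0^{1/2+\delta})$ children of any node, so the total number of visited nodes is bounded by $O(r_0^{k(1/2+\delta)})=O(r^c)$ and each visited leaf incurs $O(n/r_0^k)$ brute-force work, yielding total query cost $O(r^c + r^c\cdot n/r_0^k)=O(n/r^c)$. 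For deletion, follow the stored leaf pointer of $p$ in $O(1)$ time and walk the $O(\log r)$-length ancestor path, decrementing each counter.

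The main obstacle is that Lemma~\ref{lem:60} bounds crossings only for upper arcs, whereas a query pseudo-trapezoid $\sigma$ also has two vertical sides. One must verify that the crossing-number bound extends to vertical segments. This can be done by replaying the proof of the Test Set Lemma (Lemma~\ref{lem:testset}) almost verbatim with a vertical segment $v$ in place of the upper arc $h$: Observation~\ref{obser:10} guarantees that $v$ meets any upper arc at most once, so the zone $Z(v)$ in the arrangement of the $O(1)$ arcs dual to the cutting cell containing the relevant region is well-defined, and the same counting argument shows that $v$ crosses only $O(\sqrt{r_0})$ classes of the pseudo-trapezoidal partition. With this extension in hand the $O(r^c)$ visit bound in the query analysis goes through, and the remainder of the argument is routine bookkeeping.
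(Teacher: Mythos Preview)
Your recursive partition-tree construction is natural, but the step you flag as ``the main obstacle'' is a genuine gap, and the fix you sketch does not work. The Test Set Lemma cannot be ``replayed almost verbatim'' for a vertical segment $v$: the proof locates the \emph{center} of the query upper arc $h$ inside a cell $\sigma$ of a cutting of the lower arcs dual to $P$, and the four arcs $G$ dual to the vertices of $\sigma$ are chosen precisely because any point of $P$ in the zone $Z(h)$ is dual to a lower arc crossing $\sigma$. A vertical segment has no dual point in $C$, so there is no cell $\sigma$ to anchor the argument, and your phrase ``the cutting cell containing the relevant region'' has no well-defined meaning here. In fact the paper's own proof of this lemma explicitly states that the set $\Sigma_2$ of partition cells crossed only by the vertical sides of the query ``may be as large as $t$,'' confirming that no $O(t^{1/2+\delta})$ crossing bound holds for vertical segments. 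Consequently your bound of $O(r_0^{k(1/2+\delta)})$ on the number of visited nodes is unjustified, and the query time collapses.

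The paper avoids this by taking a single-level partition $\Pi$ into $t$ classes (with $t$ a large constant power of $r$) rather than a recursive tree, and then equipping each class $P_i$ with a one-dimensional auxiliary structure: partially sort $P_i$ by $x$-coordinate into $\Theta(t)$ buckets of size $O(n/t^2)$ and hang a balanced binary tree $T_i$ over the buckets storing subtree counts. For a query pseudo-trapezoid $\sigma$, the top and bottom arcs cross only $O(t^{1/2+\delta})$ cells (handled by brute force at $O(n/t)$ each), while a cell crossed only by a vertical side of $\sigma$ is handled in $O(\log t + n/t^2)$ time via $T_i$, since inside such a cell membership in $\sigma$ reduces to an $x$-coordinate range. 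Summing gives $O(t\log t + n/t^{1/2-\delta}) = O(n/r^c)$. If you want to salvage your multi-level approach you would need a comparable auxiliary mechanism at every node to deal with vertical sides; simply asserting the crossing bound is not enough.
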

\begin{proof}
In the preprocessing, we compute in $O(n\log t)$ time an upper-arc pseudo-trapezoidal partition $\Pi$ for $P$ by Lemma~\ref{lem:60}, with at most $t$ classes of sizes between $n/t$ and $2n/t$, and with crossing number $O(t^{1/2+\delta})$, where $t$ is a sufficiently large (constant) power of $r$. For every $(P_i,\sigma_i)\in \Pi$, we store the pseudo-trapezoid $\sigma_i$, the size $|P_i|$, and the list of points of $P_i$. We further partially sort the points of each $P_i$ into at most $2t$ subsets of size at most $n/t^2$ each, such that all points in the $i$-th subset are to the left of all points of the $(i+1)$-th subset (but points in each subset are not sorted). As $|P_i|$ is between $n/t$ and $2n/t$, the above partial sorting can be done in $O(n/t\cdot \log t)$ time using the selection algorithm. Next, we build a tree $T_i$, whose leaves correspond to the above subsets of $P_i$ from left to right. For each node $v$ of $T_i$, we store the number of points in the subsets of the leaves in the subtree rooted at $v$. It takes $O(|P_i|)$ time to build the tree.
Hence, the total preprocessing time for all classes of $\Pi$ is $O(n\log t)$, which is $O(n\log r)$ as $t$ is a constant power of $r$. The space is $O(n)$.

Given a query upper-arc pseudo-trapezoid $\sigma$, we compute the number of the points of $P$ in $\sigma$ as follows. For each trapezoid $\sigma_i\in \Pi$, we check whether $\sigma_i$ is contained in $\sigma$. If yes, we add $|P_i|$ to the total count. The remaining points of $P$ in $\sigma$ that are not counted are those contained in classes $P_i$ whose pseudo-trapezoids $\sigma_i$ are crossed by the boundary of $\sigma$. We partition those pseudo-trapezoids into two subsets. Let $\Sigma_1$ denote the subset of those pseudo-trapezoids that are crossed by either the top arc or the bottom arc of $\sigma$. Let $\Sigma_2$ denote the subset of the rest pseudo-trapezoids of $\Pi$ crossed by the boundary of $\sigma$; hence, each pseudo-trapezoid of $\Sigma_2$ is crossed by either the left or the right side of $\sigma$ but not crossed by either the top or the bottom arc of $\sigma$. The two subsets $\Sigma_1$ and $\Sigma_2$ can be found by checking every pseudo-trapezoid of $\Pi$. As the crossing number of $\Pi$ is $O(t^{1/2+\delta})$, we have $|\Sigma_1|=O(t^{1/2+\delta})$.
However, $|\Sigma_2|$ may be as large as $t$.

For each pseudo-trapezoid $\sigma_i\in \Sigma_1$, we check each point of $P_i$ to see whether it lies in $\sigma$. For each pseudo-trapezoid $\sigma_i\in \Sigma_2$, suppose $\sigma_i$ intersects the left side of $\sigma$ but does not intersect the right side. Let $\ell$ be the vertical line containing the left side of $\sigma$. Because $\sigma_i$ does not intersect the top arc, the bottom arc, or the right side of $\sigma$, points of $P_i$ in $\sigma$ are exactly those to the right of $\ell$. Based on this observation, we find the number of such points using the tree $T_i$, as follows.
First, we search $T_i$ to find the leaf $v$ whose subset spans $\ell$ (i.e., $\ell$ is between the leftmost and the rightmost points of the subset). This search can also compute the total number of points of $P_i$ in the leaves to the right of $v$. Next, for the subset $P_v$ stored at the leaf $v$, we check every point of $P_v$ to determine whether it is in $\sigma$. As $|P_v|\leq n/t^2$, the time for searching $P_i$ is $O(\log t + n/t^2)$. If $\sigma_i$ intersects the right side of $\sigma$ but does not intersect the left side, then we can use a similar algorithm. If $\sigma_i$ intersects both the right side and the left side of $\sigma$, then the points of $P_i$ in $\sigma$ are exactly those points between the supporting lines of the left and right sides of $\sigma$. Hence, we can still find the number by searching  $T_i$ but following two search paths. The search time is still $O(\log t + n/t^2)$.

In this way, the query time is bounded by $O(t+(n/t)\cdot t^{1/2+\delta}+t\cdot (\log t + n/t^2))$, which is $O(t\log t+n/t^{1/2-\delta})$. If $t$ is a large enough power of $r$ and $\alpha$ is small enough, the time is bounded by $O(n/r^c)$.

Finally, to delete a point, we simply mark the point as deleted in the appropriate tree $T_i$, and update the point counts in the affected nodes of $T_i$. This can be done in $O(\log t)$ time, which is $O(\log r)$ time.
\end{proof}

The following lemma, which corresponds to Lemma~4.6~\cite{ref:MatousekEf92}, will be used in the algorithm for Lemma~\ref{lem:100}.

\begin{lemma}\label{lem:90}
There exists a small constant $\alpha>0$ such that
a pseudo-trapezoidal partition as in the Partition Theorem can be
constructed in $O(n\log r)$ time for any $r\leq n^{\alpha}$.
\end{lemma}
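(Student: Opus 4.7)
The plan is to execute the same iterative construction used in the proof of the Partition Theorem, but to speed up the two bottleneck steps of each iteration, namely (a) locating a cell of the current cutting that contains enough surviving points of $P$, and (b) extracting $s$ such points, by means of the point-counting data structure from Lemma~\ref{lem:80}. Concretely, I would first apply Lemma~\ref{lem:80} to $P$ with a sufficiently large constant $c$ (to be chosen against the target exponent $\alpha$), obtaining in $O(n\log r)$ time an $O(n)$-space structure supporting counting queries in an upper-arc pseudo-trapezoid in $O(n/r^c)$ time and point deletions in $O(\log r)$ time. In parallel, I would compute the test set $Q$ of at most $r$ spanning upper arcs via the Test Set Lemma; since the underlying lower-arc cutting has constant rate, this step costs $O(n)$.

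Each of the $O(r)$ iterations then proceeds exactly as in the Partition Theorem. In iteration $i+1$, I would first use the Cutting Theorem (or Lemma~\ref{lem:70} where applicable) to compute a $(1/t_i)$-cutting $\Xi_i$ of the weighted test set $(Q,w_i)$ with at most $n_i/s$ cells; since $|Q|\le r$ and $t_i=\Theta(\sqrt{n_i/s})$, this cutting computation costs $\operatorname{poly}(r)$ per iteration, which is negligible provided $\alpha$ is small. Next, for each of the at most $n_i/s$ cells of $\Xi_i$, I would issue one pseudo-trapezoidal counting query against the Lemma~\ref{lem:80} structure; by a standard averaging argument some cell must contain at least $s$ surviving points, and this cell becomes $\sigma_{i+1}$. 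To form $P_{i+1}$ I would then extract $s$ points lying in $\sigma_{i+1}$ by walking through the classes and the secondary trees stored inside the Lemma~\ref{lem:80} structure, testing each candidate against $\sigma_{i+1}$; this extraction is dominated by a single counting query plus $O(s)$ point tests. Finally, I would delete the $s$ chosen points from the data structure and refresh the weights $w_i(h)=2^{k_i(h)}$ on the $\le r$ test arcs by inspecting which of them cross $\sigma_{i+1}$.

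The main obstacle is bookkeeping the total running time across all iterations. The sum of cutting sizes is $\sum_i n_i/s = O(r^2)$, so the total number of counting queries is $O(r^2)$, contributing $O(r^2\cdot n/r^c)=O(n\,r^{2-c})$ to the runtime; choosing $c$ sufficiently large (which forces a correspondingly small $\alpha$ in Lemma~\ref{lem:80}) pushes this below $O(n)$. The $n$ total deletions contribute $O(n\log r)$, which matches the target. The per-iteration weight refresh costs $O(r)$, summing to $O(r^2)$, and the total time spent computing and manipulating the cuttings themselves is $\operatorname{poly}(r)$, all absorbed for $r\le n^\alpha$ when $\alpha$ is small enough. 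Combining these with the $O(n\log r)$ preprocessing of Lemma~\ref{lem:80} gives the desired $O(n\log r)$ bound, while the crossing number and class-size guarantees are inherited verbatim from the Partition Theorem because the algorithm implements the same greedy selection rule, just evaluated through a data structure instead of a full scan.
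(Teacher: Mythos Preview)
Your proposal is correct and follows essentially the same route as the paper: build the data structure of Lemma~\ref{lem:80} once, then run the Partition Theorem construction verbatim, using that structure to answer the $O(r^2)$ pseudo-trapezoidal counting queries (the paper simply takes $c=2$) and the $O(r)$ reporting/extraction steps, with the $\le n$ deletions contributing $O(n\log r)$ and all remaining work being $\operatorname{poly}(r)$. One small correction: the Test Set Lemma requires a $(1/\Theta(\sqrt{r}))$-cutting of the $n$ dual lower arcs, not a constant-rate cutting, so that step costs $O(n\log r)$ via Lemma~\ref{lem:70} rather than $O(n)$---which of course still fits the target bound.
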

\begin{proof}
We go through the proof of the Partition Theorem. The first step is to compute a $(1/t)$-cutting using the Test Set Lemma with $t=\Theta(\sqrt{r})$, which can be done in $O(n\log r)$ time by Lemma~\ref{lem:70}.

Most of the remaining steps can be performed in time polynomial in $r$, not depending on $n$. The only exception is when we select a pseudo-trapezoid of the cutting $\Xi_i$ containing at least $s$ points of $P_i'$. To do so, we need to find the number of points in these faces as well as report the points inside the selected face. This requires $O(r^2)$ pseudo-trapezoidal range counting (and $O(r)$ reporting) queries on the original point set $P$, whose points may be deleted (after reporting). By Lemma~\ref{lem:80}, the queries together take $O(n\log r)$ time including the preprocessing if we set $c=2$, which can be achieved when $\alpha$ is small enough. Note that Lemma~\ref{lem:80} does not mention the range reporting query but it can be done by modifying the range counting algorithm with query time bounded by $O(n/r^c+k)$, where $k$ is the number of reported points.
\end{proof}

The following lemma, which corresponds to Theorem~4.7(i)~\cite{ref:MatousekEf92}, will be used in the algorithm for Theorem~\ref{theo:30}.

\begin{lemma}\label{lem:100}
For any fixed $\delta>0$, if $s\geq n^{\delta}$, then a pseudo-trapezoidal partition as in the Partition Theorem (whose classes $|P_i|$ satisfy $s\leq |P_i|<2s$ and whose crossing number is $O(\sqrt{r})$) can be constructed in $O(n\log r)$ time, where $r=n/s$.
\end{lemma}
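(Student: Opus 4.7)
The plan is to build the partition by a constant number of cascaded applications of Lemma~\ref{lem:90}. Recall that Lemma~\ref{lem:90} produces an optimal pseudo-trapezoidal partition (class sizes $\Theta(n/r)$ and crossing number $O(\sqrt{r})$) in $O(n\log r)$ time, but only when $r\leq n^{\alpha}$ for a small constant $\alpha>0$. For the target regime $s\geq n^{\delta}$ we have $r=n/s\leq n^{1-\delta}$, which is far larger than $n^{\alpha}$, so we cannot invoke Lemma~\ref{lem:90} directly. The two-level construction used in Lemma~\ref{lem:60} does handle any $r\leq n$ but pays a multiplicative $r^{\delta}$ loss because it performs $\Theta(\log r)$ recursive levels, each costing a constant factor in the crossing number. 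Our idea is to drive the number of levels down to $O(1)$ by choosing a larger per-level ``reduction factor''.

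First I would choose a constant $k=k(\delta,\alpha)$ and set $r_0=r^{1/k}$. The construction proceeds in $k$ stages. Stage $1$ invokes Lemma~\ref{lem:90} on $P$ with parameter $r_0$ to produce a partition $\Pi_1$ whose class sizes lie in $[n/r_0,\,2n/r_0]$ and whose crossing number is at most $c\sqrt{r_0}$, where $c$ is the constant from Lemma~\ref{lem:90}. Stage $i+1$ refines each class $P_j$ of the stage-$i$ partition by invoking Lemma~\ref{lem:90} on $P_j$ with the same parameter $r_0$. After $k$ stages the class sizes lie in $[n/r_0^{k},\,2^{k}\cdot n/r_0^{k}]=[s,\,2^{k}s]$, which meets the Partition Theorem requirements up to a constant factor; a final trivial refinement (splitting any oversize class arbitrarily with one extra vertical cut through it) brings every class size below $2s$ while inflating the crossing number by only an additive constant.

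For the cascade to be legal, each stage $i$ must satisfy the Lemma~\ref{lem:90} precondition $r_0\leq n_i^{\alpha}$, where $n_i\approx n/r_0^{i-1}$ is the current class size. Substituting $r_0=r^{1/k}\leq n^{(1-\delta)/k}$ reduces this to a single numerical inequality relating $k$, $\alpha$, and $\delta$ that is satisfied for every sufficiently large constant $k$; hence a valid $k=O(1)$ exists. An upper arc $h$ in $\overline{C'}$ then crosses at most $c\sqrt{r_0}$ cells of $\Pi_1$, and within each such cell at most $c\sqrt{r_0}$ cells of the next-level refinement, and so on, so the total crossing number is at most $(c\sqrt{r_0})^{k}=c^{k}\sqrt{r}=O(\sqrt{r})$ since $k$ is a constant. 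For the running time, stage $i$ invokes Lemma~\ref{lem:90} on disjoint subsets of $P$ whose total size is at most $n$, so stage $i$ costs $O(n\log r_0)=O((n/k)\log r)$; summing over the $k=O(1)$ stages gives $O(n\log r)$ overall. The main obstacle is the parameter balancing in the previous sentence: one must verify that the Lemma~\ref{lem:90} precondition survives at the \emph{last} stage (where the current class size is smallest), and this is exactly the inequality that forces $k$ to be chosen large relative to $1/\alpha$ and $1/\delta$; once this constant is fixed, the rest follows the analysis of Corollary~3.5 and Theorem~4.7(i) in~\cite{ref:MatousekEf92}.
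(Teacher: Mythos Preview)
Your approach is essentially the same as the paper's: a constant-depth cascade of Lemma~\ref{lem:90} applications, with the paper choosing $s=m^{1-\alpha}$ at each level rather than your fixed $r_0=r^{1/k}$, but the effect and analysis are identical. One small correction: after $k$ stages the classes can have size up to $2^{k}s$, so a single vertical cut does not suffice and the crossing-number inflation is a multiplicative constant $2^{k}$, not an additive one---either reuse the same pseudo-trapezoid for each of the $O(1)$ resulting sub-classes, or simply target size exactly $s$ in the final stage.
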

\begin{proof}
We apply the recursive algorithm in Lemma~\ref{lem:60}, but use Lemma~\ref{lem:90} to construct the partition at each iteration. For a current point set of size $m$, we set the parameter $s$ to $m^{1-\alpha}$ for the next iteration, where $\alpha$ refers to the parameter in Lemma~\ref{lem:90}. After the $i$-th iteration, the size of the classes of the current partition are roughly $n^{(1-\alpha)^i}$. Hence, it suffices to iterate $O(1)$ times before $(1-\alpha)^i$ drops below $\delta$. Therefore, we only lost a constant factor in the crossing number. The lemma thus follows.
\end{proof}

Using the above lemma, we can obtain the following result for the disk range searching problem.

\begin{theorem}\label{theo:30}
We can build an $O(n)$ space data structure for $P$ in $O(n\log n)$ time, such that given any disk $D$ centered in $C$, the number of points of $P$ in $D$ can be computed in $O(\sqrt{n}(\log n)^{O(1)})$ time.
\end{theorem}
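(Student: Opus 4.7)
The plan is to build a hierarchical pseudo-trapezoidal partition tree on $P$ by repeated applications of Lemma~\ref{lem:100}, and then to answer a query for a disk $D$ centered in $C$ by descending the tree along the upper arc $h_D = \partial D \cap \overline{C'}$ (which is either empty or a single arc, by the construction of $\overline{C'}$ in Section~\ref{sec:basic}). First I would fix a sufficiently large constant $r_0$ and apply Lemma~\ref{lem:100} at the root with $s = n/r_0$ to obtain a pseudo-trapezoidal partition $\{(P_i,\sigma_i)\}$ of $P$ into $O(r_0)$ classes of size $\Theta(n/r_0)$ and crossing number $O(\sqrt{r_0})$. Each pair $(P_i,\sigma_i)$ becomes a child of the root at which we store $\sigma_i$ and $|P_i|$; the construction recurses on each $P_i$ and terminates at leaves that hold a constant number of points explicitly. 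Since the classes at each level of the tree partition $P$ and the tree has depth $O(\log_{r_0} n)$, summing Lemma~\ref{lem:100}'s $O(m \log r_0)$ cost at a node of size $m$ yields total preprocessing time $O(n \log n)$ in $O(n)$ space; for nodes small enough that Lemma~\ref{lem:100}'s hypothesis $s \geq m^{\delta}$ fails, switch to storing points explicitly.

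To process a query disk $D$ with center in $C$, observe that because the center lies below the horizontal line separating $C$ and $C'$, the portion of $\partial D$ inside $\overline{C'}$ is an upper arc $h_D$, and a point $p \in \overline{C'}$ lies in $D$ iff it lies in the subregion of $\overline{C'}$ lying below $h_D$. Starting from the root I classify each child pseudo-trapezoid $\sigma_i$ in $O(1)$ time as: (i) entirely below $h_D$, in which case I add $|P_i|$ to the running total; (ii) entirely above $h_D$, in which case I discard it; or (iii) crossed by $h_D$, in which case I recurse into that child. At a leaf, I test each stored point against $D$ directly. The $O(1)$ classification is justified by Observation~\ref{obser:10}: the arc $h_D$ meets each of the top arc, bottom arc, and two vertical sides of $\sigma_i$ at most once, so evaluating $h_D$ against the top and bottom arcs at the two vertical-side $x$-coordinates of $\sigma_i$ is sufficient to decide containment.

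For the query-time analysis, each visited node spends $O(r_0) = O(1)$ time on classification and recurses into at most $O(\sqrt{r_0})$ children by the crossing-number bound of the partition, yielding the recurrence $T(m) \leq c \sqrt{r_0} \cdot T(m/r_0) + O(1)$, which solves to $T(n) = O(\sqrt{n})$; the polylogarithmic slack in the stated bound $O(\sqrt{n} (\log n)^{O(1)})$ comfortably absorbs any loss from the base case or from switching constants between levels. The step I expect to be the main obstacle is the constant-time classification of $\sigma_i$ against $h_D$: unlike in simplicial range searching, where both $\partial D$ and cell boundaries are straight, here they are all disk arcs, so the usual sign-evaluation argument must be replaced by the at-most-one-crossing property from Observation~\ref{obser:10}. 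Once this step is set up, the rest of the proof is a direct translation of Matou\v{s}ek's partition-tree analysis in~\cite{ref:MatousekEf92}, with our Partition Theorem and the Test Set Lemma taking the roles of their simplex-range-searching counterparts.
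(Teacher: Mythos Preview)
Your overall strategy---build a partition tree via Lemma~\ref{lem:100} and answer a query by descending along the arc $h_D$---is the same as the paper's, but there is a genuine gap in your choice of branching parameter. With a \emph{fixed constant} $r_0$, the recurrence $T(m)\le c\sqrt{r_0}\cdot T(m/r_0)+O(r_0)$ does \emph{not} solve to $O(\sqrt{n})$: by the Master Theorem it gives $T(n)=\Theta\bigl(n^{1/2+\log_{r_0} c}\bigr)$, where $c$ is the absolute constant hidden in the crossing-number bound $O(\sqrt{r})$ of the Partition Theorem. Since that constant exceeds~$1$, the excess over $\sqrt{n}$ is a genuine polynomial factor $n^{\log_{r_0} c}$ for every fixed $r_0$, not a polylogarithmic one; the ``polylogarithmic slack'' you appeal to cannot absorb it.

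The paper avoids this by taking $s=\sqrt{|P_v|}$ (equivalently $r=\sqrt{|P_v|}$) at each node $v$, so the tree has height only $O(\log\log n)$. The constant $c$ is then compounded only $O(\log\log n)$ times, contributing $c^{O(\log\log n)}=(\log n)^{O(1)}$, which is exactly where the polylogarithmic factor in the theorem statement comes from. This is the same device as in Matou\v{s}ek's Theorem~5.1~\cite{ref:MatousekEf92}. Your concern about the constant-time classification of a pseudo-trapezoid against $h_D$ is not the obstacle; that step goes through just as you describe via Observation~\ref{obser:10}.
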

\begin{proof}
We build a partition tree $T$ using the algorithm of Lemma~\ref{lem:100} recursively, until we obtain a  partition of $P$ into subsets of constant sizes, which form the leaves of $T$. Each inner node $v$ of $T$ corresponds to a subset $P_v$ of $P$ as well as a pseudo-trapezoidal partition $\Pi_v$ of $P_v$, which form the children of $v$. At each child $u$ of $v$, we store the pseudo-trapezoid $\sigma_u$ of $\Pi_v$ containing $P_u$ and also store the size $|P_u|$. We construct the partition $\Pi_v$ using Lemma~\ref{lem:100} with parameter $s=\sqrt{|P_v|}$. Note that if $u$ is the root, then $\sigma_u=\overline{C'}$ and $P_u=P$.  Hence, the height of $T$ is $O(\log\log n)$.

Given a query disk $D$ whose center is in $C$, starting from the root of $T$, for each node $v$, we check whether $D$ contains the pseudo-trapezoid $\sigma_v$ stored at $v$. If yes, then we add $|P_v|$ to the total count. Otherwise, if $D$ crosses $\sigma_v$, then we proceed to the children of $v$.

The complexities are as stated in the theorem. The analysis is the same as Theorem~5.1~\cite{ref:MatousekEf92}.
\end{proof}

\paragraph{Remark.} It is straightforward to modify the algorithm to answer the {\em outside-disk queries}: compute the number of points of $P$ {\em outside} any query disk, with asymptotically the same complexities. This is also the case for other data structures given later, e.g., Theorems~\ref{theo:40}, \ref{theo:randomize}, \ref{theo:tradeoff}.

\subsection{A data structure based on hierarchical cuttings}
\label{sec:hierarchical}

In this section, by using our Cutting Theorem and Test Set Lemma, we adapt the techniques of Matou\v{s}ek~\cite{ref:MatousekRa93} to our problem. Our goal is to prove the following theorem.

\begin{theorem}\label{theo:40}
We can build an $O(n)$ space data structure for $P$ in $O(n^{1+\delta})$ time for any small constant $\delta>0$, such that given any disk $D$ whose center is in $C$, the number of points of $P$ in $D$ can be computed in $O(\sqrt{n})$ time.
\end{theorem}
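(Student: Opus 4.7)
The plan is to adapt Matou\v{s}ek's hierarchical-cutting based range searching scheme~\cite{ref:MatousekRa93}, using our Cutting Theorem (Theorem~\ref{theo:cutting}) and Test Set Lemma (Lemma~\ref{lem:testset}) in place of their line-based counterparts. Observe that for any query disk $D$ centered in $C$, the boundary portion $\partial D \cap \overline{C'}$ is a single spanning upper arc $h_D$, and counting $|P \cap D|$ reduces to counting the points of $P$ on the ``inside'' side of $h_D$; since upper arcs with centers in $C$ pairwise cross at most once (Observation~\ref{obser:10}), they behave like a family of pseudo-lines in $\overline{C'}$, so the combinatorial machinery developed for line arrangements carries over.

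The construction is a partition tree $T$ built recursively. At the root I set $P_v = P$ and $\sigma_v = \overline{C'}$. At an internal node $v$ with subset $P_v$ and cell $\sigma_v$, I choose a sufficiently large constant $r$, apply the Test Set Lemma to $P_v$ to obtain a test set $Q_v$ of at most $r$ spanning upper arcs, and then invoke the Cutting Theorem to build a hierarchical $(1/r)$-cutting $\Xi_0, \Xi_1, \ldots, \Xi_k$ for $Q_v$ with $k = \Theta(\log r)$ and $|\Xi_k| = O(r^2)$. The children of $v$ are the cells of $\Xi_k$, each inheriting the points of $P_v$ it contains; at every cell $\sigma$ of every level, I record the count $|P_v \cap \sigma|$. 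I recurse until a node's subset has constant size, at which point it becomes a leaf storing its few points explicitly. Only leaves store actual points, so the total space is $O(n)$.

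The query algorithm traverses $T$ as follows. Given $h_D$, at a node $v$ I descend through the levels $\Xi_0, \Xi_1, \ldots, \Xi_k$ of the local hierarchical cutting: for each cell $\sigma$ at the current level, decide whether $\sigma$ lies entirely inside $D$ (add the stored count), entirely outside $D$ (add nothing), or is crossed by $h_D$ (pass to its children at the next level). At level $k$, recurse on the corresponding child node of $T$ for each crossed cell. Applying the Test Set Lemma inductively at each level of the hierarchical cutting, together with the pseudo-line property, yields the crossing bound $O(\rho^i)$ on the number of cells of $\Xi_i$ crossed by $h_D$, so the work at $v$ is $O(r \log r)$. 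With the recurrence $T(m) = O(r) \cdot T(m/r^2) + O(r \log r)$ and constant $r$, the query time solves to $T(n) = O(\sqrt{n})$.

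The main obstacle is ensuring the preprocessing runs in $O(n^{1+\delta})$ time. By the Cutting Theorem, building the hierarchical cutting at a node with $m$ points costs $O(m r^\delta)$; summing over all nodes on each of the $O(\log_{r^2} n)$ levels of $T$ (with total subset size $O(n)$ per level) yields $O(n r^\delta \log n)$, which absorbs into $O(n^{1+\delta})$ after choosing $r$ a suitably large constant. A secondary technicality is establishing the per-level crossing bound $O(\rho^i)$ for upper arcs, which follows by applying the Test Set Lemma to each $\Xi_i$ and exploiting Observation~\ref{obser:10}; once this bound is in hand, the remainder of the analysis is combinatorially identical to Matou\v{s}ek's and delivers the stated complexities, with outside-disk queries handled symmetrically as in the remark after Theorem~\ref{theo:30}.
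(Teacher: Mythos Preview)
Your proposal has two genuine gaps that prevent it from reaching the claimed $O(\sqrt{n})$ query time.

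First, the recurrence $T(m)=O(r)\cdot T(m/r^2)+O(r\log r)$ is not justified. You build a $(1/r)$-cutting of the \emph{test set} $Q_v$, but nothing forces the cells of $\Xi_k$ to contain $O(|P_v|/r^2)$ points of $P_v$ each; all of $P_v$ could land in a single cell, and the Test Set Lemma offers no help here, since its conclusion applies only to partitions in which every cell contains at least $|P_v|/(cr)$ points---the hypothesis is about point balance, not a property the lemma produces. Consequently the recursion depth is not controlled and the recurrence you wrote does not model your algorithm.

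Second, even if you patched balance (say, by using the Partition Theorem at each node instead of a cutting of the test set), a recursive scheme with constant $r$ cannot give $O(\sqrt{n})$: the hidden constant $c$ in the crossing bound $O(\sqrt r)$ compounds over the $\Theta(\log_r n)$ levels, so the number of visited leaves is $(c\sqrt r)^{\log_r n}=n^{1/2+\log c/(2\log r)}$, which is $n^{1/2+\varepsilon}$ for some $\varepsilon>0$. This is precisely why Matou\v{s}ek's first scheme~\cite{ref:MatousekEf92} yields only $O(\sqrt{n}\,\mathrm{polylog}\,n)$ (our Theorem~\ref{theo:30}), and why \cite{ref:MatousekRa93} needs a fundamentally different construction.

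The paper's proof follows \cite{ref:MatousekRa93} faithfully: it takes a test set of $n$ (not $r$) arcs and runs an \emph{iterative} construction with exponentially growing weights, producing a forest of $t=\sqrt{n}\log n$ trees in which, for every upper arc $h$, the total number of crossed cells and the total number of points in crossed leaf cells are both $O(\sqrt{n})$ (Equations~\eqref{equ:100}--\eqref{equ:200}). Processing the $t$ roots in $O(\sqrt{n})$ time requires two extra multilevel structures (Lemmas~\ref{lem:110} and the lemma after it). The preprocessing of this structure is only polynomial; the stated $O(n^{1+\delta})$ bound is obtained by first running the Theorem~\ref{theo:30} tree down to subsets of size $n^{\delta'}$ and building the above structure at each leaf.
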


We first construct a data structure for a subset $P'$ of at least half
points of $P$. To build a data structure for the whole $P$, the same
construction is performed for $P$, then for $P\setminus P'$, etc., and
thus a logarithmic number of data structures with geometrically
decreasing sizes will be obtained. Because the preprocessing time and space of
the data structure for $P'$ is $\Omega(n)$,
constructing all data structures for $P$ takes asymptotically the same time and space
as those for $P'$ only. To answer a disk query on $P$, each of these data
structures will be called. Since the query time for $P'$ is
$\Omega(\sqrt{n})$, the total query time for $P$ is asymptotically the
same as that for $P'$. Below we describe the data
structure for $P'$.

The data structure consists of a set of (not necessarily disjoint) upper-arc pseudo-trapezoids in $\overline{C'}$,
$\Psi_0=\{\sigma_1,\ldots,\sigma_t\}$ with $t=\sqrt{n}\log n$.
For each $1\leq i\leq t$, we have a subset $P_i\subseteq P$ of $n/(2t)$ points that are contained in $\sigma_i$. The subsets $P_i$'s form a disjoint partition of
$P'$.
For each $i$, there is a rooted tree $T_i$ whose nodes correspond to
pseudo-trapezoids, with $\sigma_i$ as the root. Each internal node of $T_i$
has $O(1)$ children whose pseudo-trapezoids are
interior-disjoint and together cover their parent pseudo-trapezoid. For each
pseudo-trapezoid $\sigma$ of $T_i$, let $P_{\sigma}=P_i\cap \sigma$.
If $\sigma$ is a leaf, then the points of $P_{\sigma}$ are
explicitly stored at $\sigma$; otherwise only the size $|P_{\sigma}|$ is stored there. Each point
of $P_i$ is stored in exactly one leaf pseudo-trapezoid of $T_i$.
The depth of $T_i$ is $q=O(\log n)$. Hence, the data structure is a forest of $t$
trees. Let $\Psi_j$ denote the set of all pseudo-trapezoids of all trees
$T_i$'s that lie at distance $j$ from the root. For any upper arc $h$ in $\overline{C'}$, let $K_j(h)$ be
the set of pseudo-trapezoids of $\Psi_j$ crossed by $h$; let $L_j(h)$ be set of the
leaf pseudo-trapezoids of $K_j(h)$. Define $K(h)=\bigcup_{j=0}^q K_j(h)$ and
$L(h)=\bigcup_{j=0}^q L_j(h)$.
The data structure guarantees the following for any upper arc $h$ in $\overline{C'}$:
\begin{equation}\label{equ:100}
\sum_{j=0}^q|\Psi_j|=O(n),
\end{equation}
\begin{equation}\label{equ:200}
|K(h)|=O(\sqrt{n}), \sum_{\sigma\in L(h)}|P_{\sigma}|=O(\sqrt{n}).
\end{equation}



We next discuss the algorithm for constructing the data structure.
The first step is to compute a test set $H$ (called a {\em guarding set} in~\cite{ref:MatousekRa93}) of $n$ spanning upper arcs in $\overline{C'}$. This can be done in time polynomial in $n$ by our Test Set Lemma. After that, the algorithm proceeds in $t$ iterations; in the $i$-th iteration, $T_i$, $\sigma_i$, and $P_i$ will be produced.

Suppose $T_j$, $\sigma_j$, and $P_j$ for all $j=1,2\ldots,i$ have been constructed.
Define $P_i'=P\setminus(P_1\cup\cdots \cup P_i)$. If $|P_i'|<n/2$, then we stop the construction. Otherwise, we proceed with the $(i+1)$-th iteration as follows. Let $\Psi_0^{(i)},\ldots,\Psi_q^{(i)}$ denote the already constructed parts of $\Psi_0,\ldots,\Psi_q$. Define $K_j^{(i)}(h)$ and $L_j^{(i)}(h)$ similarly as $K_j(h)$ and $L_j(h)$. We define a weighted arc set $(H,w_i)$.
For each arc $h\in H$, define the weight
\begin{equation*}\label{equ:weight}
w_i(h)= \exp\bigg(\frac{\log n}{\sqrt{n}}\cdot\bigg[\sum_{j=0}^{q}4^{q-j}\cdot |K_j^{(i)}(l)|+\sum_{\sigma\in K_q^{(i)}(l)}{|P_{\sigma}|}\bigg]\bigg).
\end{equation*}

Next, by our Cutting Theorem, we compute a hierarchical
$(1/r)$-cutting for $(H,w_i)$ with $r=\sqrt{n}$, which consists of a
sequence of cuttings $\Xi_0,\Xi_1,\ldots,\Xi_k$ with $\rho>4$ (note that by our algorithm for the Cutting Theorem, we can make $\rho$ larger than any given constant).


Suppose $p$ is the largest index such that the size of $\Xi_p$ is at most $t$. As the size of $\Xi_j$ is
$O(\rho^{2j})$, $\rho^{2p}=\Theta(t)$ and $\Xi_p$ is a
$(1/r_p)$-cutting of $(H,w_i)$ with $r_p=\rho^p=\Theta(\sqrt{t})$.
Let $q=k-p$. Note that $\rho^{q}=O(r/\sqrt{t})=O(\sqrt{n/t})$. Since $|P_i'|\geq n/2$ and $\Xi_p$ has at most $t$ pseudo-trapezoids, $\Xi_p$ has a pseudo-trapezoid, denoted by $\sigma_{i+1}$, containing at least $n/(2t)$ points of $P_i'$. We arbitrarily select $n/(2t)$ points of $P_i'\cap \sigma_{i+1}$ to form the set $P_{i+1}$. Further, all pseudo-trapezoids in $\Xi_p,\Xi_{p+1},\ldots,\Xi_k$ contained in $\sigma_{i+1}$ form the tree $T_{i+1}$, whose root is $\sigma_{i+1}$. Next, we eliminate some nodes from $T_{i+1}$ as follows.
Starting from the root, we perform a depth-first-search (DFS). Let $\sigma$ be the pseudo-trapezoid of the current node the DFS is visiting. Suppose $\sigma$ belongs to $\Xi_{p+j}$ for some $0\leq j\leq q$. If $\sigma$ contains at least $2^{q-j}$ points of $P_{i+1}$ ($\sigma$ is said to be {\em fat} in~\cite{ref:MatousekRa93}), then we proceed on the children of $\sigma$; otherwise, we make $\sigma$ a leaf  and return to its parent (and continue DFS). In other words, a pseudo-trapezoid of $T_{i+1}$ is kept if and only all its ancestors are fat. This finishes the construction of the $(i+1)$-th iteration.

The running time of the construction algorithm is polynomial in $n$. Using our Test Set Lemma and following the same analysis as in~\cite{ref:MatousekRa93}, we can show that Equations \eqref{equ:100} and \eqref{equ:200} hold. Note that \eqref{equ:100} implies that the total space of the data structure is $O(n)$. With \eqref{equ:200},  we show below that each disk range query can be answered in $O(\sqrt{n})$ time.

Given a query disk $D$ whose center is in $C$, the points of $P'$ in $D$ can be computed as follows.
First, compute the total number of points in the pseudo-trapezoids $\sigma_i$ of $\Psi_0$ that are contained in $\sigma$. Second, find the set $\Sigma$ of pseudo-trapezoids $\sigma_i$ of $\Psi_0$ that are crossed by the boundary of $\sigma$. Third, repeat the following steps until $\Sigma$ becomes empty. We remove one pseudo-trapezoid $\sigma$ from $\Sigma$. If it is a leaf, we check whether each point of $P_{\sigma}$ is in $\sigma$. Otherwise we check each of its children. We handle those completely contained in $\sigma$ directly and add those crossed by the boundary of $\sigma$ to $\Sigma$.

Equation~\eqref{equ:200} guarantees that the time spent in the third step is $O(\sqrt{n})$. The first two steps, however, take $O(t)=O(\sqrt{n}\log n)$ time if the pseudo-trapezoids of $\Psi_0$ are checked one by one. The following two lemmas respectively reduce the time of these two steps to $O(\sqrt{n})$ with additional preprocessing on the pseudo-trapezoids of $\Psi_0$. Note that our intention is to implement the two steps in $O(\sqrt{n})$ time with $O(n)$ time and space preprocessing. Hence, the results of the two lemmas may not be the best possible, but are sufficient for our purpose.

\begin{lemma}\label{lem:110}
With $O(t(\log t)^{O(1)})$ time and space preprocessing, the first step can be executed in $O(\sqrt{t}\cdot \exp(c\cdot \sqrt{\log t}))$ time, for a constant $c$.
\end{lemma}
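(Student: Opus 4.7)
The plan is to reduce the query to four point-in-disk containment tests and then apply a four-level nested partition tree built from the construction of Theorem~\ref{theo:30}. The key geometric reduction is: $\sigma_i\subseteq D$ holds if and only if all four corner vertices of $\sigma_i$ (the endpoints of its two vertical sides) lie in $D$. Necessity is trivial; for sufficiency, each vertical side is contained in $D$ by convexity. For the top arc $h_\sigma$, both endpoints are in $D$ by hypothesis. Now $h_\sigma$ is an upper arc in $\overline{C'}$ with center in $C$, and so is $h_D:=\partial D\cap\overline{C'}$ whenever nonempty, which is a single chord of $\overline{C'}$ separating its interior into an ``inside $D$'' and an ``outside $D$'' portion. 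By Observation~\ref{obser:10}, $h_\sigma$ and $h_D$ cross at most once in $\overline{C'}$; a single crossing would force $h_\sigma$ to exit $D$ without returning, contradicting that both endpoints of $h_\sigma$ lie in $D$. Hence $h_\sigma\subseteq D$, and the bottom arc (or bottom segment on $e_b$) is handled analogously. The degenerate case $\partial D\cap\overline{C'}=\emptyset$ (so $\overline{C'}\subseteq D$ or $\overline{C'}\cap D=\emptyset$) is handled in $O(1)$ time from a precomputed global sum $\sum_i|P_i|$.

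Given this reduction, let $V_j$ for $j=1,2,3,4$ be the four sets of $t$ corner vertices of the pseudo-trapezoids in $\Psi_0$ at position $j$. The query reduces to computing $\sum\{|P_i| : v_i^j\in D \text{ for all } j\}$. We build a primary partition tree on $V_1$ via Theorem~\ref{theo:30}'s construction, which supports point-in-unit-disk counting for disks centered in $C$. At each canonical subset node of this primary tree, we attach a secondary partition tree for the corresponding subset of $V_2$, and recursively a tertiary tree for $V_3$ and a quaternary tree for $V_4$. At the canonical subsets of the quaternary level, we store the precomputed sum of $|P_i|$ of the pseudo-trapezoids represented there. A query descends all four levels, accumulating the stored sums at the canonical subsets whose conditions are fully satisfied.

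Since each standalone partition tree of Theorem~\ref{theo:30} on $m$ points uses $O(m)$ space and $O(m\log m)$ construction time, the standard multi-level analysis for partition trees yields total preprocessing $O(t\cdot(\log t)^{O(1)})$ in time and space. Each single-level query takes $O(\sqrt{t}\cdot(\log t)^{O(1)})$ time, and four-level nesting preserves this up to an additional polylogarithmic factor, giving overall query time $O(\sqrt{t}\cdot(\log t)^{O(1)})$, which is within $O(\sqrt{t}\cdot\exp(c\sqrt{\log t}))$ for a sufficiently large constant $c$ (since $(\log t)^{O(1)}=o(\exp(c\sqrt{\log t}))$). The main obstacle I expect is the geometric reduction itself, particularly the topological argument that a single crossing between two upper arcs of unit disks centered in $C$ is incompatible with both endpoints of one arc lying strictly inside the other's underlying disk, and the bookkeeping to cover the degenerate pseudo-trapezoid whose bottom boundary is a segment on $e_b$.
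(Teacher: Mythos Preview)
Your geometric reduction---that $\sigma_i\subseteq D$ if and only if all four corner vertices lie in $D$---is correct and is exactly what the paper uses. The overall plan of a four-level partition tree on the four vertex sets is also the paper's approach. The gap is in your query-time analysis.

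You assert that ``four-level nesting preserves this up to an additional polylogarithmic factor,'' but this is false for the partition tree of Theorem~\ref{theo:30}, which uses fan-out $r=\sqrt{m}$ at a node of size $m$. Consider already the two-level case: at the root of the primary tree there are $\sqrt{t}$ classes, each of size $\Theta(\sqrt{t})$, and in the worst case nearly all of them are contained in $D$. For each contained class you must query its secondary structure, costing $\Theta(t^{1/4})$ each, so the root alone contributes $\Theta(t^{3/4})$. The recursion over crossed children does not dominate this, and one obtains $Q_2(t)=\Theta(t^{3/4})$, then $Q_3(t)=\Theta(t^{7/8})$ and $Q_4(t)=\Theta(t^{15/16})$---not $O(\sqrt{t}\,\mathrm{polylog}\,t)$. (It happens that $t^{15/16}$ with $t=\sqrt{n}\log n$ is still $o(\sqrt{n})$, so Theorem~\ref{theo:40} would survive, but Lemma~\ref{lem:110} as stated does not follow.)

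The fix, and what the paper does, is to change the fan-out parameter: at a node of size $m$ set $r=\exp(\sqrt{\log m})$ rather than $r=\sqrt{m}$, exactly as in Matou\v{s}ek's Lemma~6.2~\cite{ref:MatousekEf92}. With this choice the tree has depth $O(\sqrt{\log t})$, and the multi-level recurrence $Q_k(m)\le O(r)\cdot Q_{k-1}(m/r)+O(\sqrt{r})\cdot Q_k(m/r)$ balances to give $Q_k(t)=O(\sqrt{t}\cdot\exp(c_k\sqrt{\log t}))$ for each fixed $k$. The preprocessing analysis is unchanged in form and still yields $O(t(\log t)^{O(1)})$.
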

\begin{proof}
Consider a pseudo-trapezoid $\sigma\in \Psi_0$. By the definition of upper-arc pseudo-trapezoids, $\sigma$ is completely contained in the query disk $D$ if and only if all four vertices of $\sigma$ are in $D$. We consider the four vertices of $\sigma$ as a $4$-tuple with a weight equal to $|P_{\sigma}|$. Hence, the problem becomes the following: preprocessing the set $A$ of $t$ $4$-tuples of the pseudo-trapezoids of $\Psi_0$ such that the total weight of all $4$-tuples contained in a query disk $D$ can be computed efficiently.

We use the algorithm for Lemma~6.2~\cite{ref:MatousekEf92} to build a multi-level data structure. We proceed by induction on $k$ with $1\leq k\leq 4$, i.e., solving the $k$-tuple problem by constructing a data structure $S_k(A)$. For $k=1$, we apply Theorem~\ref{theo:30} to obtain $S_1(A)$. For $k>1$, let $F$ be the set of first elements of all $k$-tuples of $A$.

To construct $S_k(A)$, we build a partition tree as Theorem~\ref{theo:30} on $F$, by setting $r=m/s=\exp(\sqrt{\log m})$ in a node $v$ whose subset $P_v$ has $m$ points. For every class $Q_i$ of the pseudo-trapezoidal partition $\Pi_v=\{(Q_1,\sigma_1),(Q_2,\sigma_2),\ldots\}$ for $P_v$, we let $A_i\subseteq A$  be the set of $k$-tuples whose first elements are in $Q_i$, and let $A_i'$ be the set of $(k-1)$-tuples arising by removing the first element from the $k$-tuples of $A_i$. We compute the data structure $S_{k-1}(A_i')$ and store it in the node $v$.

To answer a query for a disk $D$, we start from the root of the partition tree. For each current node $v$, we find the pseudo-trapezoids of the partition $\Pi_v$ contained in $D$, and for each such trapezoid $\sigma_i$, we use the data structure $S_{k-1}(A_i')$ to find the $k$-tuples of $A_i$ contained in $D$. We also find the pseudo-trapezoids of $\Pi_v$ crossed by the boundary of $\sigma$, and visit the corresponding subtrees of $v$ recursively.

The complexities are as stated in the lemma, which can be proved by the same analysis as in the proof of Lemma~6.2~\cite{ref:MatousekEf92}.
\end{proof}

Note that $\exp(c\sqrt{\log t})=O(t^{\delta})$ for any small $\delta>0$. Since $t=\sqrt{n}\log n$, the preprocessing time and space of the above lemma is bounded by $O(n)$ and the query time is bounded by $O(\sqrt{n})$.

\begin{lemma}
With $O(t(\log t)^{O(1)})$ time and space preprocessing, the second
step can be executed in $O(\sqrt{t}\cdot(\log n)^{O(1)}+k)$ time, where $k$ is the output size.
\end{lemma}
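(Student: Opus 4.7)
The plan is to adapt Matou\v{s}ek's Lemma~6.3~\cite{ref:MatousekRa93} by reducing the second step to reporting crossings between $h$ and the $O(t)$ boundary pieces of the pseudo-trapezoids in $\Psi_0$. Each $\sigma\in\Psi_0$ has $O(1)$ boundary pieces (two upper arcs and two vertical sides), and $h$ crosses $\sigma$ iff $h$ crosses one of these pieces, so reporting crossed pieces is within a constant factor of reporting the $k$ crossed pseudo-trapezoids. It therefore suffices to solve two subproblems: reporting upper-arc boundaries crossed by $h$, and reporting vertical sides crossed by $h$.

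For the upper-arc boundaries, Observation~\ref{obser:10} implies that an upper arc $e$ is crossed by $h$ iff its two endpoints lie on opposite sides of $h$. Because $h$ lies on the upper half of $\partial D$ whose center is in $C$ (below $\overline{C'}$), for any point of $\overline{C'}$ whose $x$-coordinate is in $h$'s $x$-range, lying ``below $h$'' is equivalent to lying inside $D$. Hence the task reduces to a disk range reporting problem on the $O(t)$ arc endpoints; adapting Theorem~\ref{theo:30} to reporting (as noted in its remark) yields $O(t)$ space, $O(t\log t)$ preprocessing, and $O(\sqrt{t}(\log t)^{O(1)}+k)$ query time. For the vertical sides, a segment $s$ is crossed by $h$ iff (a) the $x$-coordinate of $s$ lies in the $x$-range of $h$, and (b) exactly one endpoint of $s$ lies inside $D$. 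We use a standard two-level data structure: an interval tree on the $x$-coordinates of the vertical sides, with a secondary disk-range-reporting structure (again from Theorem~\ref{theo:30}) at each interval-tree node on the endpoints of the segments canonically stored there. A query visits $O(\log t)$ outer nodes and invokes the secondary structure at each, giving total preprocessing $O(t(\log t)^{O(1)})$ and total query time $O(\sqrt{t}(\log t)^{O(1)}+k)$.

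The main obstacle is the correctness of the reduction ``below $h$ iff inside $D$'', which is only valid for points whose $x$-coordinate lies in $h$'s $x$-range. Points outside that range lie in at most two thin strips adjacent to the left and right boundary of $\overline{C'}$ (since $D$ has radius $1$ while $\overline{C'}$ has width close to $1/\sqrt{2}$), and are handled separately using the same interval-tree filter as for the vertical sides; within each such strip, crossing tests become one-dimensional and are routine. Summing the costs across all $O(1)$ subproblems gives the claimed $O(t(\log t)^{O(1)})$ preprocessing in time and space, and $O(\sqrt{t}(\log n)^{O(1)}+k)$ query time.
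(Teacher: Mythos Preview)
Your reduction to boundary pieces is the right idea, but there is a real gap: you assert every $\sigma\in\Psi_0$ has ``two upper arcs and two vertical sides,'' which is false here. By the paper's definition, the bottom edge of a pseudo-trapezoid may be a horizontal sub-segment of $e_b$. When that happens and both endpoints $a,b$ of $h$ lie on $e_b$ with $\overline{ab}\subseteq e$, the arc $h$ meets $e$ at two points while both endpoints of $e$ lie \emph{outside} $D$; your ``endpoints on opposite sides'' test (backed by Observation~\ref{obser:10}, which only bounds intersections with vertical segments and upper arcs, not horizontal ones) does not detect this crossing. This is exactly the paper's ``type-2'' case, handled separately by a simple 1D containment structure on the bottom-edge segments. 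Without it, some crossed pseudo-trapezoids are missed.

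A smaller issue: ``the task reduces to a disk range reporting problem on the $O(t)$ arc endpoints'' is not enough. Reporting all endpoints inside $D$ can return $\Theta(t)$ points even when only $k$ trapezoids are crossed, so a single-level query does not give the $+\,k$ bound. You need a genuine two-level structure---outer partition tree on one endpoint selecting ``outside $D$'', inner reporting structure on the paired endpoint selecting ``inside $D$''---as in Matou\v{s}ek's Lemma~6.3 and the paper's type-1 construction. Finally, your ``main obstacle'' is a non-issue: any point of $\overline{C'}$ whose $x$-coordinate lies outside $h$'s range is automatically outside $D$ (since $h=\partial D\cap\overline{C'}$), so the raw disk test already classifies it correctly and the extra interval-tree layer is unnecessary.
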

\begin{proof}
The second step of the query algorithm is to find all pseudo-trapezoids
of $\Psi_0$ that are crossed by the boundary of a query disk $D$. Recall that
$\partial D\cap \overline{C'}$ is a spanning upper arc $h$. For each edge $e$ of a
pseudo-trapezoid $\sigma_i\in \Psi_0$,
$h$ crosses $e$ if and only if one of the following two conditions holds: (1) the two endpoints of $e$ are in the
two regions of $\overline{C'}$ separated by $h$; (2) $e$ is a sub-segment of $e_b$, both endpoints $a$ and $b$ of $h$ are on $e_b$, and $\overline{ab}\subseteq e$. We say that $e$ is a {\em type-1 target edge} (resp., {\em type-2 target edge}) if $e$ satisfies the first (resp., second) condition. In the following, we discuss how to compute each type of target edges with complexities as stated in the lemma.

\paragraph{Computing type-1 target edges.}
Let $E$ be the set of the edges of all pseudo-trapezoids of $\Psi_0$. Note
that $|E|\leq 4t$. Given a query disk $D$, the problem is to find all type-1 target edges of $E$. We adapt the algorithm
for Lemma~6.3~\cite{ref:MatousekRa93} for reporting the segments crossed by a query hyperplane. Let $V_1$ denote the set of all left vertices of the edges of $E$ (if an edge is a vertical segment, then we take the bottom vertex); let $V_2$ be the set of the right vertices.

Similarly to Lemma~\ref{lem:110}, we build a 2-level partition tree $T$ on $V_1$. For each node $v$, we build a disk range reporting data structure on $P_v$, i.e., given a query disk $D$, report all points $P_v\cap D$. By the lifting method, the problem can be reduced to half-space range reporting in 3D~\cite{ref:AggarwalSo90,ref:AfshaniOp09,ref:ChanOp16,ref:ChazelleHa86,ref:MatousekRe92}. For example, using the result of~\cite{ref:ChanOp16}, for $m$ points in the plane, a data structure of $O(m)$ space can be built in $O(m\log m)$ time such that each disk range reporting can be answered in $O(\log m + k)$ time.
At each node $v$ of $T$, we set the parameter $s$ to $|P_v|^{2/3}$ when building the partition $\Pi_v$. Let $P_v'$ be the set of right vertices whose corresponding left vertices are in $P_v$. We build a disk range reporting data structure for $P_v'$ at $v$.

For each query disk $D$, we find all nodes $v$ of $T$ whose pseudo-trapezoids are {\em outside} $D$. For each such node $v$, using the disk range reporting data structure at $v$, we report all points of $P_v'$ {\em inside} $D$; all reported points correspond to the type-1 target edges of $E$.

The complexities are as stated in the lemma, which can be proved by the same analysis as in the proof of Lemma~6.3~\cite{ref:MatousekEf92}.

\paragraph{Computing type-2 target edges.} This case is fairly easy to handle. Let $E'$ be the set of the bottom edges of pseudo-trapezoids of $\Psi_0$ that are on $e_b$. Note that $|E'|\leq t$. For a query disk $D$, if the two endpoints $a$ and $b$ of $h$ are not both on $e_b$, then no type-2 target edges exist. Assume that both $a$ and $b$ are on $e_b$. Then, the problem is to report the segments of $E'$ that contain $\overline{ab}$. This problem can be solved in $O(\log |E'| + k)$ time after $O(|E'|)$ space and $O(|E'|\log |E'|)$ time preprocessing (e.g., by reducing the problem to 2D range reporting queries and then using priority search trees; see Exercise 10.10 in~\cite{ref:deBergCo08}).
\end{proof}

Since $t=\sqrt{n}\log n$, the preprocessing time and space of the above lemma is bounded by $O(n)$ and the query time is bounded by $O(\sqrt{n})$, for $k=O(\sqrt{n})$ by Equation~\eqref{equ:200}.

In summary, the above constructs our disk range searching data structure for Theorem~\ref{theo:40} in $O(n)$ space and the query time is $O(\sqrt{n})$. The preprocessing time is polynomial in $n$. To reduce it to $O(n^{1+\delta})$, we can can use the following approach.

\paragraph{Proof of Theorem~\ref{theo:40}.} We apply the same algorithm as in the proof of Theorem~\ref{theo:30}, but stop the algorithm when the size of $P_v$ is roughly equal to $n^{\delta'}$ for a suitable small value $\delta'>0$. Then, the height of $T$ is $O(1)$.
For each leaf node $v$ of $T$, we build the data structure $\scrD_v$ discussed above on $P_v$, which takes time polynomial in $|P_v|$. We make $\delta'$ small enough so that the total time we spend on processing the leaves of $T$ is $O(n^{1+\delta})$. This finishes the preprocessing, which takes $O(n^{1+\delta})$ time and $O(n)$ space.

Given a query disk $D$, we first follow the tree $T$ in the same way as before. Eventually we will reach a set $V$ of leaves $v$ whose pseudo-trapezoids $\sigma_v$ are crossed by $\partial D$. Since the height of $T$ of $O(1)$, the size of $V$ is $O(\sqrt{r})$, where $r=n^{1-\delta'}$ is the number of leaves of $T$. Again since the height of $T$ is $O(1)$, the time we spend on searching $T$ is $O(\sqrt{n})$ (see the detailed analysis of Theorem~5.1~\cite{ref:MatousekEf92}). Finally, for each leaf node $v\in V$, we use the data structure $\scrD_v$ to find the number of points of $P_v\cap D$, in $O(\sqrt{|P_v|})$ time. As $|P_v|=n^{\delta'}=n/r$, the total time for searching all leaf nodes of $V$ is $O(\sqrt{n})$. This proves Theorem~\ref{theo:40}.


\subsection{A randomized result}
\label{sec:randomize}

In this section, we show that the randomized result of Chan~\cite{ref:ChanOp12} can also be adapted for our problem, with the following result.

\begin{theorem}\label{theo:randomize}
We can build an $O(n)$ space data structure for $P$ in $O(n\log n)$ expected time by a randomized algorithm, such that given any disk $D$ whose center is in $C$, the number of points of $P$ in $D$ can be computed in $O(\sqrt{n})$ time with high probability.
\end{theorem}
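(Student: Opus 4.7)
The plan is to adapt the randomized technique of Chan~\cite{ref:ChanOp12} for simplicial partitions to our pseudo-trapezoidal setting. Chan's key innovation is an optimal $O(n\log n)$ expected-time construction that avoids the expensive deterministic derandomization underlying Matou\v{s}ek's algorithms, relying instead on random sampling at each level of a partition tree.

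The overall structure mirrors that of Theorem~\ref{theo:30}: we build a multi-level partition tree $T$ recursively using pseudo-trapezoidal partitions at each node. The crucial difference is that each partition is now constructed via random sampling rather than through the deterministic Lemma~\ref{lem:100}. Specifically, at a node $v$ with point set $P_v$ of size $m$ and target class size $s$ (so $r=m/s$), I would first invoke the Test Set Lemma (Lemma~\ref{lem:testset}) to obtain a test set $Q$ of $O(r)$ spanning upper arcs, then take a random sample from $Q$ and compute a pseudo-trapezoidal cutting via our Cutting Theorem (Theorem~\ref{theo:cutting}), and finally greedily extract classes of size between $s$ and $2s$ from the cells as in the Partition Theorem. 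With high probability the resulting partition has crossing number $O(\sqrt{r})$, following the same sampling-based analysis as Chan. Choosing parameters analogously to Theorem~\ref{theo:30} gives a tree of height $O(\log\log n)$, and the work across all levels amortizes to $O(n\log n)$ expected time.

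The query algorithm is unchanged from Theorem~\ref{theo:30}: starting at the root, for each visited node $v$, add $|P_u|$ to the count for each child pseudo-trapezoid $\sigma_u$ fully contained in the query disk $D$, and recurse into those crossed by $\partial D$. Since each level has crossing number $O(\sqrt{r})$ with high probability, the total query time is $O(\sqrt{n})$ with high probability by a union bound over the $O(\log\log n)$ levels.

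The main obstacle will be verifying that Chan's random-sampling analysis, which for simplicial partitions relies on standard $\epsilon$-net bounds for hyperplanes, transfers to our pseudo-trapezoidal setting. Fortunately, our upper arcs behave much like pseudo-lines by Observation~\ref{obser:10}, so their arrangements have bounded combinatorial complexity and admit $\epsilon$-nets of the usual size; consequently the probability of a bad random sample decays exponentially, and Chan's argument carries over with essentially no change. A secondary technical point is ensuring that each invocation of the Cutting Theorem within Chan's recursion can be made subdominant in the total preprocessing cost, which can be arranged by balancing the sampling parameter against the size bound in our Cutting Theorem.
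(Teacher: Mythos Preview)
Your proposal conflates two different partition-tree constructions and, as written, would not achieve the stated bounds. Chan's optimal partition tree~\cite{ref:ChanOp12} is \emph{not} a randomized version of Matou\v{s}ek's 1992 tree (Theorem~\ref{theo:30}): it has height $O(\log n)$ with $O(1)$ branching at each node, not height $O(\log\log n)$ with large fan-out. The clean $O(\sqrt{n})$ query bound (with no polylog overhead) comes precisely from this different structure together with a careful sampling argument across the $O(\log n)$ levels; a Theorem~\ref{theo:30}-style tree, even with randomized partition construction, would still incur the $(\log n)^{O(1)}$ factor in the query time that Theorem~\ref{theo:30} has. Your sketch of the per-node construction (``random sample from $Q$, compute a cutting, greedily extract classes'') is also not what Chan does; his algorithm maintains, at each node, a random sample of a global set of test arcs and refines cells top-down, rather than running an independent randomized partition at every node.

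The paper follows Chan's scheme faithfully (height $O(\log n)$, constant-degree refinement) and identifies the two places where problem-specific work is needed: one must supply, for a set $H$ of spanning upper arcs, data structures that report $H_{\sigma}$ for a query pseudo-trapezoid $\sigma$ with (i) $O(m\log m)$ preprocessing and $O(\sqrt{m}\,\mathrm{polylog}\,m+|H_{\sigma}|)$ query time, and (ii) $O(m^2\,\mathrm{polylog}\,m)$ preprocessing and $O(\mathrm{polylog}\,m+|H_{\sigma}|)$ query time. These are used inside Chan's bootstrapping (Steps~3(a) and~3(b) of his Theorem~5.2) to compute conflict lists efficiently, and establishing them in the pseudo-trapezoidal setting requires the duality and the pseudo-line-like behavior of upper arcs. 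Your proposal does not mention this ingredient at all; without it, the $O(n\log n)$ expected preprocessing time cannot be obtained from Chan's framework.
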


The data structure is a partition tree, denoted by $T$, obtained by recursively subdividing $\overline{C'}$ into cells each of which is an upper-arc pseudo-trapezoid. Each node $v$ of $T$ corresponds to a cell, denoted by $\sigma_v$. If $v$ is the root, then $\sigma_v$ is $\overline{C'}$. If $v$ is not a leaf, then $v$ has $O(1)$ children whose cells form a disjoint partition of $\sigma_v$. Define $P_{v}=P\cap \sigma_v$. The set $P_{v}$ is not explicitly stored at $v$ unless $v$ is a leaf, in which case $|P_{v}|=O(1)$. The cardinality $|P_v|$ is stored at $v$. The height of $T$ is $O(\log n)$. If $\kappa$ is the maximum number of pseudo-trapezoids of $T$ that are crossed by any upper arc in $\overline{C'}$, then $\kappa=O(\sqrt{n})$ holds with high probability. The partition tree $T$ can be built by a randomized algorithm of $O(n\log n)$ expected time. The space of $T$ is $O(n)$.

We follow the algorithm scheme of Chan~\cite{ref:ChanOp12} but instead use our Cutting Algorithm, Test Set Lemma, and the duality relationship, except that two data structures in the algorithm need to be provided. Both data structures are for the same subproblem but with different performances, as follows. Let $H$ be a set of $m$ spanning upper arcs in $\overline{C'}$. Given a query upper-arc pseudo-trapezoid $\sigma$ in $\overline{C'}$, the problem is to report $H_{\sigma}$, where $H_{\sigma}$ is the set of all arcs of $H$ crossing $\sigma$. We show below that we can achieve the same performances as needed in Chan's algorithm.

The first data structure\footnote{It is used for computing $H_{\Delta_i}$ in Step~3(a) of the algorithm of Theorem~5.2~\cite{ref:ChanOp12}} requires the performance in the following lemma.

\begin{lemma}\label{lem:130}
With $O(m\log m)$ time preprocessing, each query can be answered in $O(\sqrt{m}(\log m)^{O(1)}+|H_{\sigma}|)$ time.
\end{lemma}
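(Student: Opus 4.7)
The plan is to adapt the standard partition-tree-based range reporting structure, working in the dual so that the $m$ arcs of $H$ become $m$ points $\scrD(H) \subset C$ (each arc is represented by its center). Using the lower-arc analogue of Lemma~\ref{lem:100} (whose validity is noted in the remark after the Cutting Theorem), I would build in $O(m\log m)$ time a partition tree $T$ on $\scrD(H)$ of $O(m)$ space, whose cells are lower-arc pseudo-trapezoids in $\overline{C}$; this tree has $O(\log m)$ levels and, at each level, any lower arc in $\overline{C}$ crosses only $O(\sqrt{m})$ cells.

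For a query upper-arc pseudo-trapezoid $\sigma \subseteq \overline{C'}$, I would reduce reporting $H_\sigma$ to reporting the points of $\scrD(H)$ lying in the dual region $R_\sigma := \{p \in C : h_p \text{ crosses } \sigma\}$. Since $H$ consists of spanning arcs, $h_p$ crosses $\sigma$ iff $h_p$ crosses $\partial \sigma$, so $R_\sigma = \bigcup_{e \in \partial\sigma} R_e$, where $R_e$ collects the centers $p$ whose dual arc crosses a specific edge $e$ of $\sigma$. For a vertical side $\overline{ab}$ of $\sigma$, a direct case analysis (exploiting that $h_p$ is $x$-monotone and spans the full horizontal extent from $a$ to $b$) shows $R_{\overline{ab}}$ is a region of $C$ bounded by $O(1)$ portions of the duals of $a$ and $b$, each of which is a lower arc in $\overline{C}$. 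For an upper-arc side $h^\ast$ of $\sigma$, using Observation~\ref{obser:10} and the $(C,\overline{C'}) \leftrightarrow (\overline{C}, C')$ duality symmetry from Section~\ref{sec:basic}, $R_{h^\ast}$ is also bounded in $C$ by $O(1)$ lower arcs, namely the duals of the endpoints of $h^\ast$ together with the locus of tangency between $h_p$ and $h^\ast$. Hence $\partial R_\sigma$ decomposes into $O(1)$ lower-arc pieces.

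Answering the query is then a standard partition-tree reporting traversal of $T$: at a node $v$ with cell $\tau_v$, we prune if $\tau_v \cap R_\sigma = \emptyset$, output the canonical subset $P_v$ (recovered from the subtree leaves in time $O(|P_v|)$) if $\tau_v \subseteq R_\sigma$, and recurse otherwise. Because $\partial R_\sigma$ consists of $O(1)$ lower arcs, each crossing $O(\sqrt{m})$ cells per level of $T$ over $O(\log m)$ levels, the traversal visits $O(\sqrt{m}\,(\log m)^{O(1)})$ non-output nodes, yielding the claimed query time $O(\sqrt{m}(\log m)^{O(1)} + |H_\sigma|)$; preprocessing is dominated by the $O(m\log m)$ construction of $T$. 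The main obstacle will be carefully establishing that the sub-region $R_{h^\ast}$ corresponding to an upper-arc edge is bounded by $O(1)$ lower arcs of $\overline{C}$ rather than by more general curves (since the crossing-number guarantee of $T$ is tied to lower arcs); this requires a geometric argument using Observation~\ref{obser:10} and the duality symmetry to show that the boundary of $R_{h^\ast}$ is traced by $O(1)$ events (endpoint incidence and tangency), each producing a lower arc.
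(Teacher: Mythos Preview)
Your approach is essentially the paper's: dualize the arcs of $H$ to their centers in $C$, build the partition tree of Theorem~\ref{theo:30} on these $m$ points, and for each side $e$ of the query trapezoid report the dual points lying in the symmetric difference of the two unit disks centered at the endpoints of $e$. Two remarks, one simplifying and one filling a gap.

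First, your anticipated ``main obstacle'' dissolves: for an upper-arc side $h^\ast$ with endpoints $p_1,p_2$, Observation~\ref{obser:10} already gives that $h_p$ and $h^\ast$ meet in at most one point, so $h_p$ crosses $h^\ast$ iff $p_1$ and $p_2$ lie on opposite sides of $h_p$, i.e., exactly one of them is in $D_p$. Hence $R_{h^\ast}$ is precisely the symmetric difference $D_{p_1}\triangle D_{p_2}$ restricted to $\overline{C}$, bounded by the two lower arcs dual to $p_1$ and $p_2$. No tangency locus appears (and the locus of tangency of two unit circles would be a circle of radius~$2$, not a lower arc, so it is fortunate you do not need it).

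Second, you omit the case where the bottom side of $\sigma$ is a segment $e\subseteq e_b$. A spanning arc $h\in H$ can have \emph{both} endpoints on $e_b$ strictly inside $e$; then $h$ crosses $\sigma$ even though the two endpoints of $e$ lie on the \emph{same} side of $h$ (both outside $D_p$), so the symmetric-difference query misses it. The paper calls these ``type-2 target arcs'' and handles them with a trivial auxiliary structure: record, for each $h\in H$ whose endpoints are both on $e_b$, the segment between those endpoints, and preprocess these $\le m$ segments for $1$-D containment reporting (which segments are contained in the query interval $e$). This costs $O(m\log m)$ preprocessing and $O(\log m+k)$ per query, so it fits within your bounds. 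With this addition your argument is complete and matches the paper's proof.
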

\begin{proof}
Consider a query pseudo-trapezoid $\sigma$. An arc $h\in H$ crosses $\sigma$ if and only if it crosses an edge $e$ of $\sigma$. Note that $e$ can be a sub-segment of the bottom side $e_b$ of $C'$, a vertical segment, or an upper arc in $\overline{C'}$. To answer the query, it suffices to find $H_e$ for all edges $e$ of $\sigma$, where $H_e$ is the set of arcs of $H$ crossing $e$. Observe that an arc $h\in H$ crosses $e$ if and only if one of the following two conditions hold: (1) the two endpoints of $e$ are in the
two regions of $\overline{C'}$ separated by $h$; (2) $e$ is a sub-segment of $e_b$, both endpoints $a$ and $b$ of $h$ are on $e_b$, and $\overline{ab}\subseteq e$. We say that $h$ is a {\em type-1 target arc} (resp., {\em type-2 target arc}) if $h$ satisfies the first (resp., second) condition. In the following, we discuss how to compute each type of target arcs with complexities as stated in the lemma.

\paragraph{Computing type-1 target arcs.}
For computing type-1 target arcs, we consider the problem in the dual setting as follows.
Let $H^*$ be the set of points in $C$ dual to the arcs of $H$. Let $D_1$ and $D_2$ be the disks centered at the two endpoints of $e$, respectively. Observe that an arc $h\in H$ is a type-1 target arc if and only if its dual point $h^*$ is in the intersection of $\overline{C}$ and $D'$, where $D'$ is the symmetric distance of $D_1$ and $D_2$. Hence, $|H_e|$ is equal to $|H^*\cap D'|$. Note that $D'$ is bounded by the two lower arcs of $D_1$ and $D_2$ in $\overline{C}$ as well as the boundary of $\overline{C}$. Hence, if we build the data structure of Theorem~\ref{theo:30} on $H^*$, by following a similar query algorithm, we can compute $|H^*\cap D'|$. The difference is that now for each node $v$ of $T$, we check whether $D'$ contains the pseudo-trapezoid $\sigma_v$ at $P_v$. As each of the two lower arcs of $D_1$ and $D_2$ in $\overline{C}$ crosses $O(\sqrt{r})$ trapezoids of $\Pi_v$, where $r$ is the number of pseudo-trapezoids of $\Pi_v$, the query time is still bounded by $O(\sqrt{m}\cdot (\log m)^{O(1)})$. It is straightforward to verify that if we need to report all points in $|H^*\cap D'|$, then the query time is $O(\sqrt{m}\cdot (\log m)^{O(1)}+|H^*\cap D'|)$; indeed, if $D'$ contains $\sigma_v$, then we simply follow the subtree at $v$ and report all points in all leaves of the subtree. By Theorem~\ref{theo:30}, the preprocessing time is $O(m\log m)$.


\paragraph{Computing type-2 target arcs.} This case is fairly easy to handle. Let $H'$ be the set of arcs of $H$ whose endpoints both are on $e_b$. Define $E'=\{\overline{ab}\ |\ \text{$a$ and $b$ are the two endpoints of $h$}, h\in H'\}$. Note that $|E'|\leq m$. For a query edge $e$, if $e\not\subseteq e_b$, then no type-2 target arcs exist. Assume that $e\subseteq e_b$. Then, an arc $h\in H$ is a type-2 target arc if and only if $\overline{ab}\subseteq e$, where $a$ and $b$ are the two endpoints of $h$. Hence, the problem is to report the segments of $E'$ that are contained in $e$. This problem can be solved in $O(\log |E'| + k)$ time after $O(|E'|\log |E'|)$ time preprocessing (e.g., by reducing the problem to 2D range reporting queries and then using range trees; see Exercise 10.9 in~\cite{ref:deBergCo08}), where $k$ is the output size.
\end{proof}

The second data structure\footnote{It is used for computing $\widehat{R}^{(q)}_{\Delta_i}$ in Step~3(b) of the algorithm of Theorem~5.2~\cite{ref:ChanOp12}} requires the performance in the following lemma.

\begin{lemma}\label{lem:140}
With $O(m^2(\log m)^{O(1)})$ time preprocessing, each query can be answered in $O((\log m)^{O(1)}+|H_{\sigma}|)$ time.
\end{lemma}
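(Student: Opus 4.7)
My plan is to adapt the decomposition used in Lemma~\ref{lem:130} and replace its partition-tree-based subroutine for type-1 target arcs by one based on the explicit arrangement of $H$ in $\overline{C'}$. As in Lemma~\ref{lem:130}, for a query pseudo-trapezoid $\sigma$ it suffices, for each of the four edges $e$ of $\sigma$, to report the type-1 target arcs (those $h\in H$ whose underlying curve separates the two endpoints of $e$) and the type-2 target arcs (those $h\in H$ with both endpoints on $e_b$ and chord $\overline{ab}\subseteq e$, which is only possible when $e$ is a sub-segment of $e_b$). For type-2 target arcs I would reuse the same 1D segment-containment data structure as in Lemma~\ref{lem:130}; it needs only $O(m\log m)$ preprocessing and $O(\log m+k)$ query time, well within our bounds.

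For type-1 target arcs, the key idea is to build the full planar arrangement $\mathcal{A}(H)$ of the $m$ spanning upper arcs in $\overline{C'}$, together with its vertical pseudo-trapezoidal decomposition. By Observation~\ref{obser:10}, any two arcs of $H$ cross at most once, so $\mathcal{A}(H)$ has total combinatorial complexity $O(m^2)$. I would construct it in $O(m^2\log m)$ time by a sweep-line algorithm adapted to $x$-monotone curves that pairwise cross at most once, and then build a point-location structure (e.g., Kirkpatrick's hierarchy) on top of this planar subdivision, supporting $O(\log m)$-time point location, also in $O(m^2\log m)$ time and space. This matches the preprocessing bound stated in the lemma.

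At query time, for each edge $e$ of $\sigma$ I would first locate one endpoint of $e$ in $\mathcal{A}(H)$ in $O(\log m)$ time, and then walk along $e$ through the cells of the vertical decomposition. Each transition into a new cell crosses either an arc of $H$ (which I output, marking it to avoid double-reporting across the four edges of $\sigma$) or a vertical wall of the decomposition (which I skip). Since each arc of $H$ meets any single edge $e$ in $O(1)$ points (by Observation~\ref{obser:10} when $e$ is vertical or an upper arc, and because a circle meets a horizontal line in at most two points when $e$ lies on $e_b$), the walk along $e$ runs in $O(1+k_e)$ time, where $k_e$ counts the type-1 arcs it discovers. Clearing the marks during output, the total query time becomes $O(\log m+|H_\sigma|)$, as claimed.

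The main obstacle will be controlling the walk along the boundary edges of $\sigma$ so that each cell transition genuinely corresponds to crossing a single arc of $H$ or a single vertical wall, and handling degeneracies (endpoints of $e$ on an arrangement vertex, tangencies between $e$ and an arc of $H$, or several arcs passing through a common point) by a standard symbolic perturbation argument. A secondary point to verify is that the type-1/type-2 decomposition from Lemma~\ref{lem:130} remains complete in this setting so that every arc crossing the interior of $\sigma$ is reported exactly once; this should follow by essentially the same case analysis as in Lemma~\ref{lem:130}.
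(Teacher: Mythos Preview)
Your reduction to type-1 and type-2 target arcs along each edge $e$ of $\sigma$ is sound, and the type-2 case carries over from Lemma~\ref{lem:130}. The gap is in the walk you propose for type-1 arcs. You correctly bound the number of \emph{arc} crossings along $e$ by $O(k_e)$, but you do not bound the number of \emph{vertical-wall} crossings in the trapezoidal decomposition of $\mathcal{A}(H)$, and those can be $\Theta(m)$ even when $k_e=0$. Concretely, let $e$ be a sub-segment of $e_b$ lying strictly below every arc of $H$. No arc crosses $e$, yet $e$ sits in the face of $\mathcal{A}(H)$ below the lower envelope of $H$. Since the arcs of $H$ pairwise cross at most once (Observation~\ref{obser:10}), their lower envelope has $\Theta(m)$ breakpoints, and from each breakpoint a vertical wall drops all the way to $e_b$; your walk along $e$ must step through $\Theta(m)$ cells while outputting nothing. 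The same issue arises when $e$ is an upper arc lying inside a single face of $\mathcal{A}(H)$ of large boundary complexity. ``Skipping'' a wall still costs a step, so your query time is $\Theta(m)$ in the worst case, not $O((\log m)^{O(1)}+|H_\sigma|)$.

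The paper sidesteps any walk: it observes that $h\in H$ is a type-1 target for $e$ iff exactly one endpoint of $e$ lies in the underlying disk of $h$, and builds a two-level structure based on a hierarchical $(1/r)$-cutting of $H$ with $r=\Theta(m)$. Each cell $\sigma'$ stores the canonical subsets $H_1(\sigma')$ (arcs crossing the parent but not $\sigma'$ whose disks contain $\sigma'$) and $H_2(\sigma')$ (those whose disks miss $\sigma'$). Locating $p_1$ in the first-level cutting yields $O(\log m)$ canonical subsets covering all arcs whose disks contain $p_1$; querying the identical second-level structure on each such subset with $p_2$ reports, via the $H_2$ subsets, exactly those whose disks exclude $p_2$. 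This gives $O(m^2\log m)$ preprocessing and $O(\log^2 m+|H_e|)$ query time. If you want to rescue an arrangement-based approach you would need, given the faces of $\mathcal{A}(H)$ containing the two endpoints of $e$, an output-sensitive way to list the arcs separating them; a plain walk through the vertical decomposition does not achieve this.
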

\begin{proof}
As discussed in the proof of Lemma~\ref{lem:130}, it suffices to compute $H_e$ for all
edges $e$ of the query pseudo-trapezoid $\sigma$. We still define {\em type-1 target arcs} (resp., {\em type-2 target arcs}) of $H$ in the same way.
To handle type-2 target arcs, as discussed in the proof of Lemma~\ref{lem:130}, for each edge $e$, all type-2 target arcs can be computed in $O(\log m + k)$ time after $O(m\log m)$ time preprocessing.
In the following, we focus on computing the type-1 target arcs.

We will build a two-level data structure. To this end, we first consider a sub-problem: Given a query point $q$ in $\overline{C'}$, compute the subset $H_q$ of arcs $h$ of $H$ whose underlying disks contain $q$.

We adapt the approach of Theorem~5.1~\cite{ref:MatousekRa93}. By our Cutting Theorem, we compute a hierarchical $(1/r)$-cutting $\Xi_0,\ldots,\Xi_k$ for $H$ with $r=c\cdot m$ for a constant $c\leq 1/8$. Consider a cell $\sigma\in \Xi_i$ for $i<k$. For each child cell $\sigma'$ of $\sigma$ in $\Xi_{i+1}$, let $H_{\sigma\setminus\sigma'}$ denote the subset of the arcs of $H$ crossing $\sigma$ but not crossing $\sigma'$. We partition $H_{\sigma\setminus\sigma'}$ into two subsets: one, denoted by $H_1(\sigma')$, consists of the arcs of $H_{\sigma\setminus\sigma'}$ whose underlying disks contain $\sigma'$ and the other, denoted by $H_2(\sigma')$, consists of the remaining arcs of $H_{\sigma\setminus\sigma'}$ (hence, the underlying disk of each arc of $H_2(\sigma')$ does not contain any point of $\sigma'$). We call $H_1(\sigma')$ and $H_2(\sigma')$ the {\em canonical subsets} of $\sigma'$. We store both subsets explicitly at $\sigma'$.
For each cell $\sigma$ of $\Xi_k$, we store  at $\sigma$ the set $H_{\sigma}$ of arcs of $H$ crossing $\sigma$. Note that $|H_{\sigma}|\leq m/r = c$. This finishes the preprocessing.

The total preprocessing time is $O(m^2)$. To see this, computing the hierarchical cutting takes $O(m^2)$ time by our Cutting Theorem. For each cell $\sigma\in \Xi_i$ with $i<k$, the number of arcs of $H$ crossing $\sigma$ is at most $m/\rho^i$. Hence, for each child cell $\sigma'$ of $\sigma$, we can compute its two canonical sets in $O(m/\rho^i)$ time. As there are $O(\rho^{2(i+1)})$ such cells $\sigma'$ in $\Xi_{i+1}$, the time we spend on computing these canonical sets for all cells of $\Xi_{i+1}$ is $O(m\rho^{i+2})$. Hence, the total time for computing all canonical sets in the preprocessing is $\sum_{i=0}^{k-1}m\rho^{i+2}=O(mr)=O(m^2)$.

Given a query point $q$, using the hierarchical cutting, we locate the cell $\sigma_i$ containing $q$ in each $\Xi_i$ for $0\leq i\leq k$. For each $\sigma_i$, we report the canonical subset $H_1(\sigma')$. Further, for $\sigma_k$, we check each arc in $H_{\sigma_i}$ and report it if its underlying disk contains $q$. As such, the query time is $O(\log m+|H_q|)$.

Note that the above approach can also be used to solve the following sub-problem: Given a query point $q$ in $\overline{C'}$,
compute the set $H_q$ of arcs $h$ of $H$ whose underlying disks
do not contain $q$. Indeed, instead of reporting $H_1(\sigma')$, we report $H_2(\sigma')$.

We now consider our original problem for computing all type-1 target arcs for each edge $e$. Observe that an arc $h$ of $H$ is a type-1 target arc if and only if one endpoint of $e$ is in $D$ while the other one is outside $D$, where $D$ is the underlying disk of $h$. We build a two-level data structure. In the first level, we build the same data structure as above. In the second level, for each canonical set $H_1(\sigma')$, we build the same data structure as above on $H_1(\sigma')$, denoted by $\scrD(\sigma')$.
The preprocessing time is $O(m^2\log m)$. To see this, for each cell $\sigma'$ in $\Xi_{i+1}$, the size of each of its two canonical subsets $H_1(\sigma')$ and $H_2(\sigma')$ is $O(m/\rho^{i})$, and thus the time for computing the secondary data structure $\scrD_j(\sigma')$ for $\sigma'$ is $m^2/\rho^{2i}$. As there are $O(\rho^{2(i+1)})$ cells $\sigma'$ in $\Xi_{i+1}$, the total time for constructing the secondary data structure for all cells in $\Xi_{i+1}$ is $O(m^2\rho^2)$. Therefore, the total preprocessing time is bounded by $O(m^2\log m)$.

Let $p_1$ and $p_2$ be the two endpoints of $e$, respectively. We first report all arcs of $H$ whose underlying disks contain $p_1$ but not $p_2$, and we then report all arcs of $H$ whose underlying disks contain $p_2$ but not $p_1$. By the above observation, all these arcs constitute $H_e$. We show below how to find the arcs in the former case; the algorithm for the latter case is similar. Using the hierarchical cutting, we locate the cell $\sigma_i$ containing $q$ in each $\Xi_i$ for $0\leq i\leq k$. For each $\sigma_i$, using the data structure $\scrD(\sigma_i)$, we report the arcs of $H_1(\sigma')$ whose underlying disks do not contain $p_2$. For $\sigma_k$, we also check each arc in $H_{\sigma_k}$ and report it if its underlying disk contain $p_1$ but not $p_2$. The total query time is $O(\log^2 m+|H_{e}|)$.
\end{proof}

\subsection{Trade-offs}
\label{sec:tradeoff}

Using cuttings and the results of Theorems~\ref{theo:40} and \ref{theo:randomize},
trade-offs between preprocessing and query time can be derived by standard techniques~\cite{ref:AgarwalSi17,ref:MatousekRa93}, as follows.

Consider a query disk $D$ whose center $q$ is in $C$. An easy observation is that a point $p\in P$ is contained in $D$ if and only if $q$ is contained in the disk $D_p$ centered at $p$. As such, we consider the problem in the dual setting.

Let $H$ be the set of lower arcs in $\overline{C}$ dual to the points of $P$.
The problem is equivalent to finding the arcs of $H$ whose underlying disks contain $q$.
In the preprocessing, by our Cutting Theorem, we compute a hierarchical $(1/r)$-cutting $\Xi_0,\ldots,\Xi_k$ for $H$.
Consider a cell $\sigma\in \Xi_i$ for $i<k$. For each child cell $\sigma'$ of $\sigma$ in $\Xi_{i+1}$, let $H_{\sigma\setminus\sigma'}$ denote the subset of the arcs of $H$ crossing $\sigma$ but not crossing $\sigma'$. We partition $H_{\sigma\setminus\sigma'}$ into two subsets: one, denoted by $H_1(\sigma')$, consists of the arcs of $H_{\sigma\setminus\sigma'}$ whose underlying disks contain $\sigma'$ and the other, denoted by $H_2(\sigma')$, consists of the remaining arcs of $H_{\sigma\setminus\sigma'}$ (hence, the underlying disk of each arc of $H_2(\sigma')$ does not contain any point of $\sigma'$). We call $H_1(\sigma')$ and $H_2(\sigma')$ the {\em canonical subsets} of $\sigma'$. We store their cardinalities at $\sigma'$. For each cell $\sigma$ of $\Xi_k$, we store at $\sigma$ the set $H_{\sigma}$ of arcs of $H$ crossing $\sigma$. Note that $|H_{\sigma}|\leq n/r$. Let $P_{\sigma}$ be the set of the dual points of $H_{\sigma}$. We build a unit-disk range counting data structure (e.g., Theorem~\ref{theo:40}) on $P_{\sigma}$, denoted by $\scrD_{\sigma}$, with  complexity $(T(|P_{\sigma}|),S(|P_{\sigma}|),Q(|P_{\sigma}|))$ for $P^*(\sigma)$, where $T(\cdot)$, $S(\cdot)$, and $Q(\cdot)$ are the preprocessing time, space, and query time, respectively.
 We refer to $\scrD_{\sigma}$ as the {\em secondary data structure}.
 This finishes the preprocessing.

For the preprocessing time, constructing the hierarchical cutting takes $O(nr)$ time. Constructing the secondary data structure $\scrD_{\sigma}$ for all cells $\sigma$ of $\Xi_k$ takes $O(r^2\cdot T(n/r))$ time. Hence, the total preprocessing time is $O(nr+r^2\cdot T(n/r))$. Following similar analysis, the space is $O(nr+r^2\cdot S(n/r))$.

Given a query disk $D$ with center $q$ in $C$, using the hierarchical cutting, we locate the cell $\sigma_i$ containing $q$ in each $\Xi_i$ for $0\leq i\leq k$. For each $\sigma_i$, we add $|H_1(\sigma_i)|$ to the total count. In addition, for $\sigma_k$, we use the secondary data structure $\scrD_{\sigma_k}$ to compute the number of points of $P_{\sigma_k}$ contained in $D$. Clearly, the query time is $O(\log r + Q(n/r))$.\footnote{The subsets $H_2(\sigma')$ computed in the preprocessing is ``reserved'' for answering {\em outside-disk} queries.}

As such, we obtain a data structure of $O(nr+r^2\cdot T(n/r))$ preprocessing time, $O(nr+r^2\cdot S(n/r))$ space, and $O(\log r + Q(n/r))$ query time.  Using the results of Theorems~\ref{theo:40} and ~\ref{theo:randomize} to build the secondary data structure for each $P_{\sigma}$, respectively, we can obtain the following trade-offs.

\begin{theorem}\label{theo:tradeoff}
\begin{enumerate}
\item
We can build an $O(nr)$ space data structure for $P$ in $O(nr(n/r)^{\delta})$ time, such that given any query disk $D$ whose center is in $C$, the number of points of $P$ in $D$ can be computed in $O(\sqrt{n/r})$ time, for any $1\leq r\leq n/\log^2 n$.
\item
We can build an $O(nr)$ space data structure for $P$ in $O(nr\log (n/r))$ expected time, such that given any query disk $D$ whose center is in $C$, the number of points of $P$ in $D$ can be computed in $O(\sqrt{n/r})$ time with high probability, for any $1\leq r\leq n/\log^2 n$.
\end{enumerate}
\end{theorem}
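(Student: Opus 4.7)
The plan is essentially verification: plug the parameters $(T(m), S(m), Q(m))$ of Theorems~\ref{theo:40} and~\ref{theo:randomize} into the generic template
\[
\bigl(O(nr+r^2\cdot T(n/r)),\ O(nr+r^2\cdot S(n/r)),\ O(\log r + Q(n/r))\bigr)
\]
already derived in the paragraphs immediately preceding the theorem (which describe the hierarchical cutting on the dual lower arcs $H$, the storage of canonical subset cardinalities $|H_1(\sigma')|$ at internal cells, and the placement of a secondary unit-disk counting structure on $P_\sigma$ at each leaf cell $\sigma\in\Xi_k$). No new geometric idea is needed; the only thing to check is that the leaf-level cost collapses correctly and that the $O(\log r)$ locate cost is absorbed into $O(\sqrt{n/r})$ under the stated restriction on $r$.

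For part~(1), Theorem~\ref{theo:40} gives $T(m)=O(m^{1+\delta})$, $S(m)=O(m)$, and $Q(m)=O(\sqrt{m})$. Thus the preprocessing time becomes
\[
O\bigl(nr + r^2(n/r)^{1+\delta}\bigr)=O\bigl(nr + nr(n/r)^{\delta}\bigr)=O\bigl(nr(n/r)^{\delta}\bigr),
\]
the space is $O(nr + r^2\cdot n/r)=O(nr)$, and the query time is $O(\log r + \sqrt{n/r})$. The hypothesis $r\leq n/\log^2 n$ yields $n/r\geq \log^2 n\geq \log^2 r$, so $\sqrt{n/r}\geq \log r$ and the query simplifies to $O(\sqrt{n/r})$.

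For part~(2), we replace the secondary structure with Theorem~\ref{theo:randomize}, giving $T(m)=O(m\log m)$ in expectation, $S(m)=O(m)$, and $Q(m)=O(\sqrt{m})$ with high probability. The expected preprocessing is
\[
O\bigl(nr + r^2\cdot (n/r)\log(n/r)\bigr)=O\bigl(nr\log(n/r)\bigr),
\]
the space is $O(nr)$, and each query invokes just one secondary structure (the one at the unique leaf cell $\sigma_k$ of $\Xi_k$ that contains the dual query point $q$), so the high-probability guarantee on $Q(n/r)$ transfers directly without any union-bound inflation, and the same $\sqrt{n/r}\geq \log r$ simplification applies.

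The only potential subtlety, and the one that drives the range $r\leq n/\log^2 n$, is exactly the absorption of the $\log r$ locate-cost into the $\sqrt{n/r}$ secondary query; beyond that the theorem is a direct substitution, so I would write it up as a short computation with the two instantiations side by side.
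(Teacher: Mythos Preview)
Your proposal is correct and follows essentially the same approach as the paper: the paragraphs preceding the theorem already derive the generic template $(O(nr+r^2 T(n/r)),\,O(nr+r^2 S(n/r)),\,O(\log r+Q(n/r)))$, and the theorem is obtained exactly by instantiating the secondary structure with Theorem~\ref{theo:40} for part~(1) and Theorem~\ref{theo:randomize} for part~(2). Your explicit verification that $r\le n/\log^2 n$ forces $\sqrt{n/r}\ge \log n\ge \log r$, hence absorbs the locate cost, is a detail the paper leaves implicit but is precisely the reason for that range restriction.
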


In particular, for the large space case, i.e., $r=n/\log^2n$, we can obtain the following corollary by Theorem~\ref{theo:tradeoff}(1) (a randomized result with slightly better preprocessing time can also be obtained by Theorem~\ref{theo:tradeoff}(2)).

\begin{corollary}\label{coro:largespace}
We can build an $O(n^2/\log^2 n)$ space data structure for $P$ in $O(n^2/\log^{2-\delta}n)$ time, such that given any query disk $D$ whose center is in $C$, the number of points of $P$ in $D$ can be computed in $O(\log n)$ time.
\end{corollary}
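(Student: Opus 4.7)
The plan is to apply Theorem~\ref{theo:tradeoff}(1) directly with the parameter choice $r = n/\log^2 n$, which sits exactly at the upper endpoint of the allowed range $1 \leq r \leq n/\log^2 n$. I would just substitute this value into each of the three complexity bounds and verify that they simplify to the claimed space, preprocessing time, and query time.

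For the space bound, $O(nr) = O(n \cdot n/\log^2 n) = O(n^2/\log^2 n)$, matching the statement. For the query time, $n/r = \log^2 n$, so $O(\sqrt{n/r}) = O(\sqrt{\log^2 n}) = O(\log n)$, again as claimed. The only mildly non-trivial piece is the preprocessing time: the theorem gives $O(nr(n/r)^\delta)$, and with $n/r = \log^2 n$ this becomes $O(n \cdot (n/\log^2 n) \cdot \log^{2\delta} n) = O(n^2/\log^{2 - 2\delta} n)$. Since $\delta$ in Theorem~\ref{theo:tradeoff}(1) is allowed to be an arbitrarily small positive constant, replacing $2\delta$ by a new symbol $\delta$ (still an arbitrarily small positive constant) yields the bound $O(n^2/\log^{2 - \delta} n)$ stated in the corollary.

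There is no real obstacle here; the corollary is just an instantiation of the trade-off at its largest-space endpoint, and the only thing to check is the cosmetic rescaling of $\delta$ in the preprocessing term. I would write the proof as a single short paragraph that plugs in $r = n/\log^2 n$ into Theorem~\ref{theo:tradeoff}(1) and observes the three simplifications above.
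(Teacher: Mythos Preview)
Your proposal is correct and matches the paper's own argument exactly: the paper also derives the corollary by plugging $r = n/\log^2 n$ into Theorem~\ref{theo:tradeoff}(1), and your handling of the $\delta$ rescaling is fine.
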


\subsection{Wrapping things up}
\label{sec:summary}

All above results on $P$ are for a pair of cells $(C,C')$ such that all points of $P$ are in $C'$ and centers of query disks are in $C$. Combining the above results with Lemma~\ref{lem:10}, we can obtain our results for the general case where points of $P$ and query disk centers can be anywhere in the plane.

In the preprocessing, we compute the information and data structure in Lemma~\ref{lem:10}, which takes $O(n\log n)$ time and $O(n)$ space. For each pair of cells $(C,C')$ with $C\in \calC$ and $C'\in N(C)$, we construct the data structure on $P(C')$, i.e., $P\cap C'$, with respect to query disks centered in $C$, e.g., those in Theorems~\ref{theo:30}, \ref{theo:40}, \ref{theo:randomize}, and \ref{theo:tradeoff}. As discussed before, due to property~(5) of $\calC$, the total preprocessing time and space is the same as those in the above theorems. Given a query disk $D$ with center $q$, by Lemma~\ref{lem:10}(2), we determine whether $q$ is in a cell $C$ of $\calC$ in $O(\log n)$ time. If no, then $D\cap P=\emptyset$ and thus we simply return $0$. Otherwise, the data structure returns $N(C)$. Then, for each $C'\in N(C)$, we use the data structure constructed for $(C,C')$ to compute $|P(C')\cap D|$. We return $|P\cap D|=\sum_{C'\in N(C)}|P(C')\cap D|$. As $|N(C)|=O(1)$, the total query time is as stated in the above theorems. We summarize these results below.

\begin{corollary}\label{coro:10}
Let $P$ be a set of $n$ points in the plane. Given a query unit disk $D$, the unit-disk range counting problem is to find the number of points of $P$ in $D$. We have the following results.
\begin{enumerate}
\item
An $O(n)$ space data structure can be built in $O(n\log n)$ time, with $O(\sqrt{n}(\log n)^{O(1)})$ query time.
\item
An $O(n)$ space data structure can be built in $O(n^{1+\delta})$ time for any small constant $\delta>0$, with $O(\sqrt{n})$ query time.
\item
An $O(n)$ space data structure can be built in $O(n\log n)$ expected time by a randomized algorithm, with $O(\sqrt{n})$ query time with high probability.
\item
An $O(n^2/\log^2 n)$ space data structure can be built in $O(n^2/\log^{2-\delta}n)$ time, with $O(\log n)$ query time.
\item
An $O(nr)$ space data structure can be built in $O(nr(n/r)^{\delta})$ time, with $O(\sqrt{n/r})$ query time, for any $1\leq r\leq n/\log^2 n$.
\item
An $O(nr)$ space data structure can be built in $O(nr\log (n/r))$ expected time by a randomized algorithm, with $O(\sqrt{n/r})$ query time with high probability, for any $1\leq r\leq n/\log^2 n$.
\end{enumerate}
\end{corollary}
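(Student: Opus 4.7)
The plan is a direct assembly: combine Lemma~\ref{lem:10} (which reduces the problem to pairs of cells) with each of the pair-specific data structures built in Theorems~\ref{theo:30}, \ref{theo:40}, \ref{theo:randomize}, \ref{theo:tradeoff} and Corollary~\ref{coro:largespace}, one bullet at a time.

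In the preprocessing, I would first invoke Lemma~\ref{lem:10}(1) to compute, in $O(n\log n)$ time and $O(n)$ space, the cell set $\calC$, the subsets $P(C)$, and the neighbor lists $N(C)$ for every $C\in\calC$. Then, for each pair $(C,C')$ with $C\in\calC$ and $C'\in N(C)$, I would construct the appropriate secondary data structure on $P(C')$ tailored to query disks whose centers lie in $C$. For bullet~(1) I use Theorem~\ref{theo:30}; for (2) Theorem~\ref{theo:40}; for (3) Theorem~\ref{theo:randomize}; for (4) Corollary~\ref{coro:largespace}; and for (5) and (6) the two parts of Theorem~\ref{theo:tradeoff}. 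In each case, if $f(m)$ denotes the per-pair preprocessing time or space as a function of $m=|P(C')|$, then the bounds in the cited results are of the form $f(m)=\Omega(m)$, so by property~(5) of $\calC$ (each cell $C'$ belongs to $N(C)$ for only $O(1)$ choices of $C$) and the fact that $\sum_{C'\in\calC}|P(C')|=n$, summing $f(|P(C')|)$ over all relevant pairs is bounded by $f(n)$ up to a constant. This yields exactly the preprocessing/space figures stated in each bullet.

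For a query disk $D$ with center $q$, I would first call Lemma~\ref{lem:10}(2) in $O(\log n)$ time to determine whether $q$ lies in some cell $C\in\calC$. If not, property~(3) guarantees $D\cap P=\emptyset$ and I return $0$. Otherwise I retrieve $N(C)$ and, for each of the $O(1)$ cells $C'\in N(C)$, query the data structure built for the pair $(C,C')$ to obtain $|P(C')\cap D|$. By property~(4), every point of $P\cap D$ lies in some $C'\in N(C)$, and since the $P(C')$'s are disjoint, $|P\cap D|=\sum_{C'\in N(C)}|P(C')\cap D|$. Because $|N(C)|=O(1)$, the query time is $O(\log n)$ plus the maximum per-pair query time, which matches each bullet.

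I expect no real obstacle: the technical content lives entirely in the earlier sections, and the only thing to verify is the additivity of the preprocessing and space costs, which follows from property~(5) of $\calC$ and the fact that every $f(\cdot)$ involved is at least linear. A minor point worth checking is that the query bounds in the randomized bullets (3) and (6) still hold ``with high probability'' after the union bound over the $O(1)$ secondary queries, which is immediate.
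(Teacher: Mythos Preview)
Your proposal is correct and matches the paper's own argument essentially step for step: apply Lemma~\ref{lem:10} to reduce to pairs $(C,C')$, build the appropriate per-pair structure from Theorems~\ref{theo:30}, \ref{theo:40}, \ref{theo:randomize}, \ref{theo:tradeoff} or Corollary~\ref{coro:largespace}, use property~(5) of $\calC$ to sum the costs, and answer a query via Lemma~\ref{lem:10}(2) plus $O(1)$ per-pair queries. The paper's Section~\ref{sec:summary} gives exactly this assembly, so there is nothing to add.
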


\paragraph{Remark.} As the simplex range searching~\cite{ref:MatousekEf92,ref:MatousekRa93,ref:ChanOp12},
all results in Corollary~\ref{coro:10} can be easily extended to the weighted case (or the more general semigroup model) where each point of $P$ has a weight, i.e., each query asks for the total weight of all points in a query unit disk $D$.

%
%
%
%

\section{Applications}
\label{sec:app}

In this section, we demonstrate that our techniques for the disk range searching problem can be used to solve several other problems. More specifically, our techniques yield improved results for three classical problems: batched unit-disk range counting, distance selection, and discrete $2$-center.

\subsection{Batched unit-disk range counting}

Let $P$ be a set of $n$ points and $\calD$ be a set of $m$ (possibly overlapping) unit disks
in the plane. The {\em batched unit-disk range counting} problem (also referred to as {\em offline range searching} in the literature)
is to compute for each disk $D\in\calD$ the number of points of $P$ in $D$.

Let $Q$ denote the set of centers of the disks of $\calD$. For each point $q\in Q$, we use $D_q$ to denote the unit disk centered at $q$.

We first apply Lemma~\ref{lem:10} on $P$. For each point $q\in Q$, by Lemma~\ref{lem:10}(2), we first determine whether $q$ is in a cell $C$ of $\calC$. If no, then $D_q$ does not contain any point of $P$ and thus it can be ignored for the problem; without loss of generality, we assume that this case does not happen to any disk of $\calD$. Otherwise, let $C$ be the cell of $\calC$ that contains $q$. By Lemma~\ref{lem:10}(2), we further find the set $N(C)$ of $C$. In this way, in $O((n+m)\log n)$ time, we can compute $Q(C)$ for each cell $C$ of $\calC$, where $Q(C)$ is the subset of points of $Q$ in $C$. Define $\calD(C)$ as the set of disks of $\calD$ whose centers are in $Q(C)$. Let $P(C)=P\cap C$.

In what follows, we will consider the problem for $P(C')$ and $\calD(C)$ for each pair $(C,C')$ of cells with $C\in \calC$ and $C'\in N(C)$. Combining the results for all such pairs leads to the result for $P$ and $\calD$ (the details on this will be discussed later). To simplify the notation, we assume that $P(C')=P$ and $\calD(C)=\calD$ (thus $Q(C)=Q$). Hence, our goal is to compute $|P\cap D|$ for all disks $D\in \calD$.

If $C=C'$, then all points of $P$ are in $D$ for each disk $D\in \calD$ and thus the problem is trivial.
Below we assume $C\neq C'$. Without loss of generality, we assume that $C'$ and $C$ are separated by a horizontal line and $C'$ is above the line.
We assume that each point of $P$ defines a lower arc in $\overline{C}$ since otherwise the point can be ignored. We also assume that the boundary of each disk of $\calD$ intersects $\overline{C}'$, i.e., each point $q$ of $Q$ is dual to an upper arc $h_q$ in $\overline{C'}$, since otherwise the disk can be ignored.
Observe that a point $p$ is in $D_q$ if and only if $p$ is below the upper arc $h_q$ (we say that $p$ is below $h_q$ if $p$ is below the upper half boundary of $D_q$), for any $p\in P$ and $q\in Q$. Hence, the problem is equivalent to computing the number of points of $P$ below each upper arc of $H$, where $H= \{h_q \ |\ q\in Q\}$.

Given a set of $n$ points and a set of $m$ lines in the plane, Chan and Zheng~\cite{ref:ChanHo22} recently gave an $O(m^{2/3}n^{2/3}+n\log m + m\log n)$ time algorithm to compute the number of points below each line (alternatively, compute the number of points inside the lower half-plane bounded by each line). We can easily adapt their algorithm to solve our problem. Indeed, the main techniques of Chan and Zheng's algorithm we need to adapt to our problem are the hierarchical cuttings and duality. Using our Cutting Theorem and our definition of duality, we can apply the same technique and solve our problem in $O(m^{2/3}n^{2/3}+n\log m + m\log n)$ time, with $n=|P|$ and $m=|H|=|\calD|$. We thus have the following theorem.
The proof follows the framework of the algorithm in \cite{ref:ChanHo22};
to make our paper self-contained, we sketch the algorithm in the appendix.

\begin{theorem}\label{theo:70}
We can compute, for all disks $D\in\calD$, the number of points of $P$ in $D$ in $O(m^{2/3}n^{2/3}+n\log m + m\log n)$, with $n=|P|$ and $m=|\calD|$.
\end{theorem}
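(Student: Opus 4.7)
The strategy is to reduce to many instances of the batched ``points below upper arcs'' problem via Lemma~\ref{lem:10}, and then to run the Chan--Zheng algorithm~\cite{ref:ChanHo22} on each instance with our Cutting Theorem and our duality as drop-in replacements for hyperplane cuttings and point--line duality. First I would apply Lemma~\ref{lem:10} in $O((n+m)\log n)$ time to locate each point of $P$ and each disk center in $Q$ inside its containing cell of $\calC$; by property~(5) of $\calC$ the problem decomposes into subproblems indexed by pairs $(C,C')$ with $C\in\calC$ and $C'\in N(C)$, where the total input sizes satisfy $\sum|P(C')|=O(n)$ and $\sum|\calD(C)|=O(m)$. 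The case $C=C'$ is trivial, so for each nontrivial pair I may assume by the axis-separation property that $C$ lies below $C'$.

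Fix a nontrivial pair and write $P=P(C')$, $Q=Q(C)$, and $H=\{h_q:q\in Q\}$ for the set of spanning upper arcs in $\overline{C'}$ dual to the centers. A point $p\in P$ lies in $D_q$ if and only if $p$ lies below $h_q$, so the subproblem is precisely the batched ``count points below each upper arc'' problem in $\overline{C'}$. For this I would replay the Chan--Zheng algorithm step by step: build a hierarchical $(1/r)$-cutting for $H$ via Theorem~\ref{theo:cutting}; at each cell $\sigma'$ maintain canonical subsets $H_1(\sigma'),H_2(\sigma')$ of arcs whose underlying disks contain or avoid $\sigma'$ (exactly as in Section~\ref{sec:randomize}); propagate counts up and down the cutting hierarchy; and, at the second level of the recursion, switch to the dual picture using the duality of Section~\ref{sec:basic}, treating points of $P$ as lower arcs in $\overline{C}$ and building a hierarchical cutting for them via Theorem~\ref{theo:cutting} again. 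Observation~\ref{obser:10} (any two upper arcs cross at most once) plays the role of the ``two lines meet in one point'' axiom that Chan--Zheng invoke in their combinatorial accounting, so the bounds on zone complexity and on canonical set sizes carry over without modification, yielding runtime $O(n_i^{2/3}m_i^{2/3}+n_i\log m_i+m_i\log n_i)$ on a pair with $n_i=|P(C')|$ and $m_i=|\calD(C)|$.

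Finally, summing over the pairs $(C,C')$ and using property~(5) together with the concavity of $x\mapsto x^{2/3}$ gives $\sum_i n_i^{2/3}m_i^{2/3}\le\bigl(\sum_i n_i\bigr)^{2/3}\bigl(\sum_i m_i\bigr)^{2/3}=O(n^{2/3}m^{2/3})$, while the logarithmic terms sum to $O(n\log m+m\log n)$, giving the claimed bound. The main obstacle I expect, and what I would treat most carefully when sketching the algorithm in the appendix, is verifying that every geometric primitive Chan--Zheng use is well defined for spanning upper arcs rather than lines: in particular, that a spanning upper arc in $\overline{C'}$ cleanly partitions the region into two parts so that any pseudo-trapezoid not crossing it lies entirely on one side (making the above/below test well defined on non-crossing cells), and that the shuttling between the primal view in $\overline{C'}$ and the dual view in $\overline{C}$ preserves the Chan--Zheng invariants. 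All remaining steps are combinatorial bookkeeping that transfer from the hyperplane setting with no loss.
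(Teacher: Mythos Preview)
Your proposal is correct and follows essentially the same route as the paper: reduce to cell pairs via Lemma~\ref{lem:10}, adapt the Chan--Zheng algorithm~\cite{ref:ChanHo22} inside each pair using the Cutting Theorem and the upper-arc/lower-arc duality, and then aggregate over pairs (the paper invokes H\"older's inequality rather than ``concavity'' for the bound $\sum_i n_i^{2/3}m_i^{2/3}\le n^{2/3}m^{2/3}$). Note that in the paper the theorem is already stated \emph{after} the reduction to a single pair $(C,C')$, so its proof proper is just the Chan--Zheng adaptation (alternating primal/dual cuttings plus the $\Gamma$-algorithm/decision-tree machinery for the base case), while the reduction and the summation over pairs are handled in the surrounding text.
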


Let $\chi$ denote the number of intersections of the arcs of $H$, and thus $\chi=O(m^2)$.
Using our Cutting Theorem and Theorem~\ref{theo:70}, we further improve the algorithm for relatively small $\chi$.

\begin{theorem}\label{theo:75}
We can compute, for all disks $D\in\calD$, the number of points of $P$ in $D$ in $O(n^{2/3}\chi^{1/3}+m^{1+\delta}+n\log n)$ time, with $n=|P|$ and $m=|\calD|$.
\end{theorem}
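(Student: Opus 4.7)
The plan is to combine a $\chi$-sensitive hierarchical cutting (from the Cutting Theorem) with the algorithm of Theorem~\ref{theo:70} at the leaves, in the usual divide-and-conquer paradigm. As in the proof of Theorem~\ref{theo:70}, I first invoke Lemma~\ref{lem:10} to decompose the batched problem on $(P,\calD)$ into independent subproblems, one per pair $(C,C')$ with $C\in\calC$ and $C'\in N(C)$: in each subproblem, the points of $P(C')$ and the disks of $\calD(C)$ interact, and the query disks are dualized to upper arcs in $\overline{C'}$. Let $n$, $m$, and $\chi_{C,C'}$ denote the numbers of points, arcs, and arc-intersections in a single subproblem. Property~(5) of $\calC$ ensures $\sum_{(C,C')}|P(C')|=O(|P|)$ and $\sum_{(C,C')}|\calD(C)|=O(|\calD|)$, and since each intersecting pair of disks whose centers lie in the same cell contributes to $\chi_{C,C'}$ for at most $|N(C)|=O(1)$ choices of $C'$, we also get $\sum_{(C,C')}\chi_{C,C'}=O(\chi)$. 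I describe the algorithm for one subproblem and then aggregate using H\"older's inequality.

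For a single subproblem with parameters $n$, $m$, $\chi$ (writing $\chi$ for the local $\chi_{C,C'}$), set $r=\max(1,\min(m,\,(m^2/\chi)^{1/(1-\delta)}))$ and build, via the Cutting Theorem, a hierarchical $(1/r)$-cutting $\Xi_0,\ldots,\Xi_k$ for $H$ of final size $s=O(r^{1+\delta}+\chi r^2/m^2)$ in total time $O(mr^{\delta}+\chi r/m)$. Walk each point of $P$ down the hierarchy in $O(\log r)$ time to its leaf cell, and compute $|P\cap\sigma|$ for every cell $\sigma$ in every $\Xi_i$ by one bottom-up pass, in $O(n\log r+s)$ time. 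For each arc $h\in H$, assemble $|P\cap D_h|$ from two contributions: (i) for every cell $\sigma$ whose parent is crossed by $h$ but which is itself not crossed by $h$, add $|P\cap\sigma|$ to $h$'s counter if $\sigma$ lies entirely below $h$; (ii) for every leaf cell $\sigma\in\Xi_k$ crossed by $h$, invoke Theorem~\ref{theo:70} on $(P\cap\sigma,H_\sigma)$ and add the local count.

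The cost of~(i) equals the total arc-cell incidences in the hierarchy. Using the per-level size bound $|\Xi_i|=O((\rho^i)^{1+\delta}+\chi(\rho^i)^2/m^2)$ and the cutting guarantee $|H_\sigma|\leq m/\rho^i$ for $\sigma\in\Xi_i$, the sum $\sum_i\sum_{\sigma\in\Xi_i}|H_\sigma|$ telescopes as a geometric series to $O(mr^{\delta}+\chi r/m)$. The cost of~(ii) is
\[
\sum_{\sigma\in\Xi_k}O\bigl(|H_\sigma|^{2/3}|P\cap\sigma|^{2/3}+|H_\sigma|\log|P\cap\sigma|+|P\cap\sigma|\log|H_\sigma|\bigr).
\]
Using $|H_\sigma|\leq m/r$ together with H\"older on the dominant term gives $\sum_\sigma|H_\sigma|^{2/3}|P\cap\sigma|^{2/3}\leq(m/r)^{2/3}s^{1/3}n^{2/3}$. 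Under our choice of $r$, either the $\chi r^2/m^2$ term dominates $s$ (yielding $O(\chi^{1/3}n^{2/3})$), or $r$ hits the boundary $r=m$ when $\chi\leq m^{1+\delta}$ (yielding $O(m^{(1+\delta)/3}n^{2/3})\leq O(m^{1+\delta}+n)$ by Young's inequality), or $r=1$ when $\chi=\Omega(m^2)$ (reducing to Theorem~\ref{theo:70} on the full subproblem, whose bound is then absorbed by $n^{2/3}\chi^{1/3}$). The subsidiary logarithmic terms sum to $O(m\log n+n\log m)$, which is absorbed into $O(m^{1+\delta}+n\log n)$. Aggregating over all pairs via H\"older ($\sum n_{C,C'}^{2/3}\chi_{C,C'}^{1/3}\leq n^{2/3}\chi^{1/3}$ and $\sum m_{C,C'}^{1+\delta}\leq m^{1+\delta}$ by superadditivity of $x^{1+\delta}$) produces the claimed bound.

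The main obstacle, I expect, is verifying the accounting in step~(i): the walk-down must exploit the $\chi$-sensitive size bound at \emph{every} level of the hierarchy (not just at the leaves) so that the total arc-cell incidences collapse to $O(mr^{\delta}+\chi r/m)$ rather than $O(mr)$; the latter would be too large to fit within the target. A secondary subtlety is the regime-dependent parameter choice for $r$: one must separately confirm the three boundary cases (small, intermediate, and large $\chi$) so that the uniform bound $O(n^{2/3}\chi^{1/3}+m^{1+\delta}+n\log n)$ holds across the full range of $\chi$, including the degenerate extremes where the cutting collapses.
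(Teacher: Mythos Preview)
Your proposal is correct and follows essentially the same approach as the paper: build a $\chi$-sensitive hierarchical $(1/r)$-cutting for the arcs with $r$ chosen as (roughly) $\min\{m,(m^2/\chi)^{1/(1-\delta)}\}$, propagate point counts through the hierarchy, and apply Theorem~\ref{theo:70} at the leaves, then aggregate over cell-pairs via H\"older. Two small remarks: (i) the paper balances the leaf subproblems by splitting each $P(\sigma)$ into ``standard subsets'' of equal size rather than invoking H\"older directly, but the two devices are interchangeable here; (ii) you set $r$ as a function of $\chi$, which is not known in advance---the paper handles this with the standard doubling trick (guess $\chi$, run, double on failure), incurring only an extra $\log\chi$ factor on the cutting-construction term, which is absorbed into $m^{1+\delta}$. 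With that detail added, your argument matches the paper's.
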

\begin{proof}
We compute a hierarchical $(1/r)$-cutting $\Xi_0,\ldots,\Xi_k$ for $H$, where $r=\min\{m/8,(m^2/\chi)^{1/(1-\delta)}\}$ and $\delta$ refers to the parameter in the Cutting Theorem.
By our Cutting Theorem, the size of the cutting, denoted by $K$, is bounded by $O(r^{\delta}+\chi\cdot r^2/m^2)$ and the time for computing the cutting is $O(mr^{\delta}+\chi\cdot r/m)$. Since the parameter $r$ depends on $\chi$, which is not available to us, we can overcome the issue by using the standard trick of doubling. More specifically, initially we set $\chi$ to a constant. Then we run the algorithm until it exceeds the running time specified based on the guessed value of $\chi$. Next, we double the value $\chi$ and run the algorithm again. We repeat this process until when the algorithm finishes before it reaches the specified running time for a certain value of $\chi$. In this way, we run the cutting construction algorithm at most $O(\log \chi)$ time. Therefore, the total time for constructing the desired cutting is $O((mr^{\delta}+\chi\cdot r/m)\log \chi)$.

Next, we reduce the problem into $O(K)$ subproblems and then solve each subproblem by Theorem~\ref{theo:70}, which will lead to the theorem.

For each point $p\in P$, we find the cell $\sigma$ of $\Xi_i$ that contains $p$ and we
store $p$ in a {\em canonical subset} $P(\sigma)$ of $P$ (which is initially $\emptyset$), for all $0\leq i\leq k$, i.e., $P(\sigma)=P\cap \sigma$; in fact, we only need to store the cardinality of $P(\sigma)$. For ease of exposition, we assume that no point of $P$ lies on the boundary of any cell of $\Xi_i$ for any $i$.

For each disk $D\in \calD$, our goal is to compute the number of points of $P$ in $D$, denoted by $n_D$. We process $D$ as follows. We initialize $n_D=0$. Let $h$ be the upper arc of $H$ defined by $D$, i.e., $h=\partial D\cap \overline{C'}$. Starting from $\Xi_0=\overline{C'}$. Suppose $\sigma$ is a cell of $\Xi_i$
crossed by $h$ (initially, $i=0$ and $\sigma$ is $\overline{C'}$) and $i<k$. For each child cell $\sigma'$ of $\sigma$ in $\Xi_{i+1}$, if $\sigma'$ is contained in $D$, then we increase $n_D$ by $|P(\sigma')|$ because all points of $P(\sigma')$ are contained in $D$. Otherwise, if $h$ crosses $\sigma'$, then we proceed on $\sigma'$. In this way, the points of $P\cap D$ not counted in $n_D$ are those contained in cells $\sigma\in \Xi_k$ that are crossed by $h$. To count those points, we perform further processing as follows.

For each cell $\sigma$ in $\Xi_k$, if $|P_{\sigma}|>n/K$, then we arbitrarily partition $P(\sigma)$ into subsets of size between $n/(2K)$ and $n/K$, called {\em standard subsets} of $P(\sigma)$. As $\Xi_k$ has $O(K)$ cells and $|P|=n$, the number of standard subsets of all cells of $\Xi_k$ is $O(K)$.
Denote by $\calD_{\sigma}$ the subset of disks of $\calD$ whose boundaries cross $\sigma$. Our problem is to compute for all disks $D\in \calD_{\sigma}$ the number of points of $P(\sigma)$ contained in $D$, for all cells $\sigma\in \Xi_k$. To this end, for each cell $\sigma$ of $\Xi_k$, for each standard subset $P'(\sigma)$ of $P(\sigma)$, we solve the batched unit-disk range counting problem on the point set $P'(\sigma)$ and the disk set $\calD_{\sigma}$ by Theorem~\ref{theo:70}. Note that $|\calD_{\sigma}|\leq m/r$. As $\Xi_k$ has $O(K)$ cells, we obtain $O(K)$ subproblems of size $(n/K,m/r)$ each. As discussed above, solving these subproblems also solves our original problem. It remains to analyze the time complexity of the algorithm.


\paragraph{Time analysis.}
We use ``reduction algorithm'' to refer to the algorithm excluding the procedure for solving all subproblems by Theorem~\ref{theo:70}. For the time of the reduction algorithm, as discussed above, constructing the hierarchical cutting takes $O((nr^{\delta}+\chi\cdot r/n)\log \chi)$ time. Locating the cells of the cuttings containing the points of $P$ and thus computing $|P(\sigma)|$ for all cells $\sigma$ in the cuttings can be done in $O(n\log r)$ time, which is $O(n\log m)$. As the cutting algorithm also computes the sets $H_{\sigma}$ for all cells $\sigma\in \Xi_i$ for all $0\leq i\leq k$ and each cell of $\sigma\in \Xi_i$ has $O(1)$ child cells in $\Xi_{i+1}$, processing all disks $D$ (i.e., computing $n_D$ without counting those in the subproblems) takes $O(nr^{\delta}+\chi\cdot r/n)$ time.
Hence, the total time of the reduction algorithm is $O((nr^{\delta}+\chi\cdot r/n)\log \chi+n\log m)$.

As we have $O(K)$ subproblems of size $(n/K,m/r)$ each, by Theorem~\ref{theo:70}, the total time solving these subproblems is proportional to
\begin{equation*}
\begin{split}
& K\cdot \left\{\left(\frac{n}{K}\right)^{2/3}\left(\frac{m}{r}\right)^{2/3}+\frac{m}{r}\log \frac{n}{K}+\frac{n}{K}\log \frac{m}{r}\right\}\\
& =K^{1/3}n^{2/3}\left(\frac{m}{r}\right)^{2/3}+K\frac{m}{r}\log \frac{n}{K}+n\log \frac{m}{r}.
\end{split}
\end{equation*}
Therefore, the total time of the overall algorithm is proportional to
\begin{align}\label{equ:time}
(mr^{\delta}+\chi\cdot r/m)\log \chi+K^{1/3}n^{2/3}\left(\frac{m}{r}\right)^{2/3}+K\frac{m}{r}\log \frac{n}{K}+n\log m
\end{align}

Recall that $r=\min\{m/8,(m^2/\chi)^{1/(1-\delta)}\}$.
Depending on whether $r=m/8$ or $r=(m^2/\chi)^{1/(1-\delta)}$, there are two cases. We analyze the running time in each case below.

\paragraph{The case $r=m/8$.}
In this case, $m/8\leq (m^2/\chi)^{1/(1-\delta)}$, and thus $\chi=O(m^{1+\delta})$ and $K=O(m^{1+\delta})$.

By plugging these values into \eqref{equ:time}, we obtain that the total time is bounded by $O(m^{1+\delta}(\log m+\log \frac{n}{K})+n\log m+n^{2/3}m^{(1+\delta)/3})$.
Observe that $n^{2/3}m^{(1+\delta)/3}=O(n+m^{1+\delta})$. Hence, the total time is bounded by $O(m^{1+\delta}(\log m+\log \frac{n}{K})+n\log m)$.

Let $T=m^{1+\delta}(\log m+\log \frac{n}{K})+n\log m$.
We claim that $T=O(m^{1+\delta}\log m+n\log n)$. Indeed, if $n\leq K$, then $\log \frac{n}{K} = O(1)$ and $n=O(m^{1+\delta})$, and thus $T=O(m^{1+\delta}\log m)$. On the other hand, assume $n>K$. Notice that $K\geq r$ since $K$ is the size of a $(1/r)$-cutting. Hence, $n>r = m/8$ and thus $\log m=O(\log n)$. If $n<m^{1+\delta}$, then $\log \frac{n}{K} = O(\log m)$, and thus $T=O(m^{1+\delta}\log m+n\log n)$; otherwise, $T=O(n\log n)$. The claim thus follows.

Therefore, in the case $r=m/8$, the total time of the algorithm is $O(m^{1+\delta}\log m+n\log n)$.

\paragraph{The case $r=(m^2/\chi)^{1/(1-\delta)}$.} In this case, $m/8\geq (m^2/\chi)^{1/(1-\delta)}$, and thus $r=O(m)$ and $K=O(\chi\cdot r^2/m^2)$. Since $\chi=O(m^2)$, $\log \chi=O(\log m)$.

By plugging these values into \eqref{equ:time}, we obtain that the total time is bounded by $O(m^{1+\delta}\log m+K\cdot m/r\log(n/K)+n\log m+n^{2/3}\chi^{1/3})$.
Notice that
\begin{equation*}
\begin{split}
K\cdot m/r=\chi\cdot r/m=\frac{\chi}{m^2} \cdot r\cdot m = \frac{1}{r^{1-\delta}} \cdot r\cdot m=r^{\delta}\cdot m=O(m^{\delta})\cdot m=O(m^{1+\delta}).
\end{split}
\end{equation*}
Hence, the total time of the algorithm is $O(m^{1+\delta}(\log m+\log (n/K))+n\log m+n^{2/3}\chi^{1/3})$.
We have proved above that $m^{1+\delta}(\log m+\log (n/K))+n\log m= O(m^{1+\delta}\log m+n\log n)$. Therefore, the total time of the algorithm is bounded by $O(m^{1+\delta}\log m+n\log n+n^{2/3}\chi^{1/3})$.

\paragraph{Summary.}
Combining the above two cases, the total time of the algorithm is $O(m^{1+\delta}\log m+n\log n+n^{2/3}\chi^{1/3})$. Note that the factor $\log m$ of $m^{1+\delta}\log m$ in the running time is absorbed by $\delta$ in the theorem statement.
\end{proof}

\paragraph{The general problem.}
The above results are for the case where points of $P$ are in the square cell $C'$ while centers of $\calD$ are all in $C$. For solving the general problem where both $P$ and $\calD$ can be anywhere in the plane, as discussed before, we reduce the problem to the above case by Lemma~\ref{lem:10}. The properties of the set $\calC$ guarantee that the complexities for the general problem are asymptotically the same as those in Theorem~\ref{theo:70}. To see this, we consider all pairs $(C,C')$ with $C\in \calC$ and $C'\in N(C)$. For the $i$-th pair $(C,C')$, let $n_i=|P(C')|$ and $m_i=|\calD(C)|$. Then, solving the problem for the $i$-th pair $(C,C')$ takes $O(n_i^{2/3}m_i^{2/3}+m_i\log n_i+n_i\log m_i)$ time by Theorem~\ref{theo:70}. Due to the properties (4) and (5) of $\calC$, $\sum_i n_i=O(n)$ and $\sum_i m_i=O(m)$. Therefore, by H\"older's Inequality, $\sum_i n_i^{2/3}m_i^{2/3}\leq n^{1/3}\cdot \sum_i n_i^{1/3}m_i^{2/3}\leq n^{2/3}m^{2/3}$, and thus the total time for solving the problem for all pairs of cells is $O(n^{2/3}m^{2/3}+m\log n+n\log m)$.
Similarly, the complexity of Theorem~\ref{theo:75} also holds for the general problem, with $\chi$ as the number of pairs of disks of $\calD$ that intersect.

\paragraph{Computing incidences between points and circles.} It is easy to modify the algorithm to solve the following problem: Given $n$ points and $m$ unit circles in the plane, computing (either counting or reporting) the incidences between points and unit circles. The runtime is $O(n^{2/3}m^{2/3}+m\log n + n\log m)$ or $O(n^{2/3}\chi^{1/3}+m^{1+\delta}+n\log n)$, where $\chi$ is the number of intersecting pairs of the unit circles.
Although the details were not given, Agarwal and Sharir~\cite{ref:AgarwalPs05} already mentioned that an $n^{2/3}m^{2/3}2^{O(\log^*(n+m))}+O((m+n)\log (m+n))$ time algorithm can be obtained by adapting Matou\v{s}ek's technique~\cite{ref:MatousekRa93}. (The same problem for circles of arbitrary radii is considered in~\cite{ref:AgarwalPs05}. Refer to~\cite{ref:SharirCo17} for many other incidence problems.)
Our result further leads to an $O(n^{4/3})$-time algorithm for the {\em unit-distance detection} problem: Given $n$ points in the plane, is there a pair of points at unit distance? Erickson~\cite{ref:EricksonNe96} gave a lower bound of $\Omega(n^{4/3})$ time for the problem in his partition algorithm model.

\subsection{The distance selection problem}

Given a set $P$ of $n$ points in the plane and an integer $k$ in the
range $[0,n(n-1)/2]$, the distance selection problem is to compute the $k$-th smallest
distance among the distances of all pairs of points of $P$. Let
$\lambda^*$ denote the $k$-th smallest distance to be computed. Given a value
$\lambda$, the {\em decision problem} is to decide whether
$\lambda\geq \lambda^*$. Using our batched unit-disk range counting algorithm, we can easily obtain the following lemma.

\begin{lemma}\label{lem:160}
Given a value $\lambda$, whether $\lambda\geq \lambda^*$ can be decided in $O(n^{4/3})$ time.
\end{lemma}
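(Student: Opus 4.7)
The plan is to reduce the decision problem to a single batched unit-disk range counting instance and then apply Theorem~\ref{theo:75} (or Theorem~\ref{theo:70}). Observe that $\lambda\geq\lambda^*$ if and only if the number of unordered pairs $\{p,q\}\subseteq P$ with $|pq|\leq\lambda$ is at least $k$. So it suffices to count these pairs in $O(n^{4/3})$ time, and compare with $k$.

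To count the pairs, I will form the set $\calD=\{D_p : p\in P\}$, where $D_p$ is the closed disk of radius $\lambda$ centered at $p$. All disks in $\calD$ are congruent (radius $\lambda$), so the batched unit-disk range counting machinery applies verbatim after rescaling the plane by a factor of $1/\lambda$; alternatively, one observes that the entire development in Section~\ref{sec:diskrange} and the preceding subsection never used the fact that the common radius was $1$, only that all disks are congruent. For each $p\in P$, let $N_p=|P\cap D_p|$; note that $p\in D_p$, so $N_p-1$ is the number of points $q\neq p$ with $|pq|\leq\lambda$. Summing, $\sum_{p\in P}(N_p-1)=2\cdot|\{\{p,q\}:|pq|\leq\lambda\}|$, so once we know all the $N_p$'s, the count of pairs is obtained in $O(n)$ additional time.

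Computing $N_p$ for every $p\in P$ is precisely the batched unit-disk range counting problem on the point set $P$ and the disk set $\calD$, with $|P|=|\calD|=n$. By Theorem~\ref{theo:70}, this takes $O(n^{2/3}\cdot n^{2/3}+n\log n+n\log n)=O(n^{4/3})$ time. Comparing the resulting count to $k$ decides whether $\lambda\geq\lambda^*$, completing the $O(n^{4/3})$-time decision procedure.

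There is no real obstacle; the only subtlety is the trivial rescaling to reduce congruent-disk range counting to the unit-disk case covered by our earlier theorems, and the bookkeeping to convert disk-membership counts into pair counts (accounting for each point lying in its own disk, and for each qualifying pair being counted twice).
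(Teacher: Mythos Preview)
Your proof is correct and follows essentially the same approach as the paper: reduce to a single batched unit-disk range counting instance on $P$ and the congruent disks of radius $\lambda$ centered at $P$, invoke Theorem~\ref{theo:70} to get $O(n^{4/3})$ time, and convert the incidence total into the pair count via $(\sum_p N_p - n)/2$. The paper phrases the bookkeeping in terms of the total number of incidences $|\Pi|$ rather than the individual $N_p$'s, but the arithmetic is identical.
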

\begin{proof}
We can use our algorithm for the offine unit-disk range counting problem.
Indeed, let $\calD$ be the set of congruent disks centered at the points of $P$ with radius $\lambda$.
By Theorem~\ref{theo:70}, we can compute in $O(n^{4/3})$ time the cardinality
$|\Pi|$, where $\Pi$ is the set of all disk-point incidences $(D,p)$, where $D\in \calD$, $p\in P$, and $D$ contains $p$. Observe
that for each pair of points $(p_i,p_j)$ of $P$ whose distance is at
most $\lambda$, it introduces two pairs in $\Pi$. Also, each point
$p_i$ introduces one pair in $\Pi$ because $p_i$ is contained in the
disk of $\calD$ centered at $p_i$. Hence, the number of pairs of points of $P$
whose distances are at most $\lambda$ is equal to $(|\Pi|-n)/2$.
Clearly, $\lambda\geq \lambda^*$ if and only if $(|\Pi|-n)/2\geq k$.
\end{proof}

Plugging Lemma~\ref{lem:160} into a randomized algorithm of
Chan~\cite{ref:ChanOn01} (i.e., Theorem~5~\cite{ref:ChanOn01}),
$\lambda^*$ can be computed in $O(n\log n+ n^{2/3}k^{1/3}\log
n)$ expected time.

\begin{theorem}
Given a set $P$ of $n$ points in the plane and an integer $k$ in the
range $[0,n(n-1)/2]$, the $k$-th smallest distance of $P$ can be
computed in $O(n\log n+ n^{2/3}k^{1/3}\log n)$ expected time by a randomized algorithm.
\end{theorem}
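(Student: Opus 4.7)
The plan is to invoke Chan's randomized framework \cite{ref:ChanOn01} as a black box and replace its decision subroutine with the faster one provided by Lemma~\ref{lem:160}. Chan's Theorem~5 in \cite{ref:ChanOn01} reduces the distance-selection optimization problem to a short sequence of calls to a decision procedure: starting from a random sample of $O(\sqrt{k/n})$ points, one estimates $\lambda^\ast$ by counting incidences in appropriately scaled sub-instances and uses the decision oracle to verify the estimate. The total work is dominated by $O(\log n)$ invocations of the decision procedure on instances whose sizes sum (in the relevant $n^{2/3}k^{1/3}$-sense) to the original $(n,k)$-scale, plus an $O(n\log n)$ preprocessing term.

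First, I would observe that Lemma~\ref{lem:160} gives a deterministic decision oracle running in $D(n) = O(n^{4/3})$ time: it merely applies Theorem~\ref{theo:70} on the set $\calD$ of radius-$\lambda$ disks centered at $P$ and compares the resulting count of close pairs to $k$. This is a factor of $\log n$ faster than the Katz--Sharir decision procedure \cite{ref:KatzAn97} of $O(n^{4/3}\log n)$ time that Chan plugs into his framework to obtain $O(n\log n + n^{2/3}k^{1/3}\log^{5/3} n)$ expected time.

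Next, I would substitute $D(n) = O(n^{4/3})$ into Chan's runtime recurrence. Tracing through his analysis, the $\log^{5/3}n$ factor in his bound comes from $\log n$ applications of a decision procedure of cost $O(n^{4/3}\log n)$, organized via random sampling in a way that incurs one extra $\log^{2/3}n$ factor beyond the naive $\log n$. Dropping the $\log n$ from the decision cost directly lowers this to $\log n$. Thus the total expected time becomes $O(n\log n + n^{2/3}k^{1/3}\log n)$, which is precisely the bound stated in the theorem.

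The only thing to verify carefully is that Chan's framework uses the decision procedure as a genuine black box, so that any $O(n^{4/3})$-time decision routine yields the claimed improvement without modification. This is routine since the decision oracle is only required to compare the count of pairs within distance $\lambda$ against $k$, which is exactly the output of Lemma~\ref{lem:160}. No other step of \cite{ref:ChanOn01} depends on the internals of the decision procedure, so the black-box substitution is valid and no new obstacle arises.
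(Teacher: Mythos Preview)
Your approach is exactly the paper's: plug the $O(n^{4/3})$ decision procedure of Lemma~\ref{lem:160} into Chan's Theorem~5~\cite{ref:ChanOn01} as a black box. One small factual slip in your exposition: the decision procedure Chan originally combined with his framework was the $O(n^{4/3}\log^{2/3}n)$ expected-time algorithm of Agarwal, Aronov, Sharir, and Suri~\cite{ref:AgarwalSe93}, not the $O(n^{4/3}\log n)$ Katz--Sharir algorithm, which is why his bound carries a $\log^{5/3}n$ rather than a $\log^2 n$; this does not affect your conclusion, since substituting $D(n)=O(n^{4/3})$ yields the stated $O(n\log n + n^{2/3}k^{1/3}\log n)$ either way.
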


\subsection{The discrete $2$-center problem}

Let $P$ be a set of $n$ points in the plane. The discrete $2$-center problem is to find two smallest congruent disks whose centers are in $P$ and whose union covers $P$. Let $\lambda^*$ be the radius of the disks in an optimal solution. Given a value $\lambda$, the {\em decision problem} is to decide whether $\lambda\geq \lambda^*$.

Agarwal, Sharir, and Welzl~\cite{ref:AgarwalTh98} gave an $O(n^{4/3}\log^5 n)$ time algorithm by solving the decision problem first. A {\em key subproblem} in their decision algorithm~\cite{ref:AgarwalTh98} is the following. Preprocess $P$ to compute a collection $\calP$ of {\em canonical subsets} of $P$, $\{P_1,P_2,\ldots,\}$, so that given a query point $p$ in the plane, the set $P_p$ of points of $P$ {\em outside} the unit disk centered at $p$ can be represented as the union of a sub-collection $\calP_p$ of canonical subsets and $\calP_p$ can be found efficiently (it suffices to give the ``names'' of the canonical subsets of $\calP_p$). Note that here the radius of unit disks is $\lambda$.

Roughly speaking, suppose we can solve the above key subproblem with preprocessing time $T$ such that $\sum_{P_i\in \calP}|P_i|=M$ and $|\calP_p|$ for any query point $p$ is bounded by $O(\tau)$ (and $|\calP_p|$ can be found in $O(\tau)$ time); then the algorithm of Agarwal, Sharir, and Welzl~\cite{ref:AgarwalTh98} can solve the decision problem in  $O(T+M\log n+ \tau\cdot n\log^3 n)$ time.
With the decision algorithm, the optimal radius $\lambda^*$ can be found by doing binary search on all pairwise distances of the points of $P$ (in each iteration, find the $k$-th smallest distance using a distance selection algorithm); the total time is $O((T_1+T_2)\log n)$, where $T_1$ is the time of the distance selection algorithm and $T_2$ is the time of the decision algorithm.

Note that the logarithmic factor of $M\log n$ in the above running time of the decision algorithm of~\cite{ref:AgarwalTh98} is due to that for each canonical subset $P_i\in\calP$, we need to compute the common intersection of all unit disks centered at the points of $P_i$, which takes $O(|P_i|\log n)$ time~\cite{ref:HershbergerFi91}.
However, if all points of $P_i$ are sorted (e.g., by $x$-coordinate or $y$-coordinate), then the common intersection can be computed in $O(|P_i|)$ time~\cite{ref:WangOn20}. Therefore, if we can guarantee that all canonical subsets are sorted, then the runtime of the decision algorithm of~\cite{ref:AgarwalTh98} can be bounded by $O(T+M+ \tau\cdot n\log^3 n)$.

In the following, we present new solutions to the above key subproblem. We will show that after $T=O(n^{4/3}\log^2 n(\log\log n)^{1/3})$ expected time preprocessing by a randomized algorithm, we can compute
$M=O(n^{4/3}\log^2n/(\log\log n)^{2/3})$ sorted canonical subsets of $P$ so that $\tau=O(n^{1/3}(\log\log n)^{1/3}/\log n)$ holds with high probability. Consequently, the decision problem can be solved in $O(n^{4/3}\log^2 n(\log\log n)^{1/3})$ expected time, and thus $\lambda^*$ can be computed in $O(n^{4/3}\log^3 n(\log\log n)^{1/3})$ expected time if we use the $O(n^{4/3}\log^2 n)$ time distance selection algorithm in~\cite{ref:KatzAn97}.
We also have another slightly slower deterministic result. After $T=O(n^{4/3}\log^{7/3} n(\log\log n)^{1/3})$ time preprocessing algorithm, we can compute
$M=O(n^{4/3}\log^{7/3}n/(\log\log n)^{2/3})$ sorted canonical subsets of $P$ so that $\tau=O(n^{1/3}(\log\log n)^{O(1)}/\log^{2/3} n)$. Consequently, the decision problem can be solved in $O(n^{4/3}\log^{7/3} n(\log\log n)^{O(1)})$ time, and thus $\lambda^*$ can be computed in $O(n^{4/3}\log^{10/3} n(\log\log n)^{O(1)})$ time.

\paragraph{Remark.}
It is straightforward to modify our algorithms to achieve the same results for the following {\em inside-disk} problem: represent the subset of points of $P$ {\em inside} $D$ as a collection of pairwise-disjoint canonical sets for any query disk $D$.
\bigskip

In what follows, we present our solutions to the above subproblem.
We apply Lemma~\ref{lem:10} on the set $P$ to compute the set $\calC$ of square cells.
As before, we first reduce the problem to the same problem with respect to pairs of cells $(C,C')$ of $\calC$, by using Lemma~\ref{lem:10} as well as the following lemma (whose proof is based on a modification of the algorithm for Lemma~\ref{lem:10}); then we will solve the problem using our techniques for disk range searching.

\begin{lemma}\label{lem:170}
We can compute in $O(n\log n)$ time a collection of $O(n)$ sorted canonical subsets of $P$ whose total size is $O(n\log n)$, such that for any cell $C$ of $\calC$, there are $O(\log n)$ pairwise-disjoint canonical subsets whose union consists of the points of $P$ that are not in the cells of $N(C)$, and we can find those canonical subsets in $O(\log n)$ time.
\end{lemma}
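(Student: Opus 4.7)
The plan is to overlay a single balanced binary tree on top of the data structure of Lemma~\ref{lem:10}, treating the cells of $\calC$ as leaves in a fixed linear order. The key observation is that the set $\bigcup_{C'\in N(C)}P(C')$ to be excluded always corresponds to only $O(1)$ leaves (by property~(4) of $\calC$), so regardless of how these leaves are scattered in the linear order, their complement is a disjoint union of $O(1)$ contiguous index ranges, each of which decomposes into $O(\log n)$ canonical subsets by the standard range-decomposition of a balanced BBST.

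Concretely, I would first apply Lemma~\ref{lem:10} to $P$ to compute $\calC$ together with $P(C)$ and $N(C)$ for every $C\in\calC$ in $O(n\log n)$ time. Let $m=|\calC|=O(n)$. Enumerate the cells as $C_1,\ldots,C_m$ in any fixed order available from the proof of Lemma~\ref{lem:10} (for instance, lexicographically by the triple (strip index, rectangle index within the strip, grid-cell index within the rectangle)), and build in $O(n)$ extra time an auxiliary dictionary mapping each cell's index-triple to its position in this order. Then build a balanced binary tree $T$ whose leaves are $C_1,\ldots,C_m$ in order, storing at every node $v$ the canonical subset
\[
P_v \;=\; \bigcup_{C_i\in\mathrm{leaves}(v)} P(C_i),
\]
kept as a sorted list by $x$-coordinate. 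Because the leaves partition $P$ disjointly and each point appears in the $O(\log n)$ ancestors of its leaf, the total size of all canonical subsets is $O(n\log n)$ and their number is $O(m)=O(n)$. Constructing $T$ bottom-up by merging the two sorted child lists at every internal node costs $O(n\log n)$ time overall.

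To answer a query for a cell $C$, I would call Lemma~\ref{lem:10}(2) to obtain $N(C)$ in $O(\log n)$ time, convert the $O(1)$ cells of $N(C)$ to their leaf positions via the dictionary, and then compute the complement of these $O(1)$ positions in $\{1,\ldots,m\}$, which is a disjoint union of $O(1)$ contiguous index ranges in $T$. Using the standard BBST range-decomposition, each such range produces $O(\log n)$ canonical subsets of $T$ that are pairwise disjoint, for a total of $O(\log n)$ subsets read off in $O(\log n)$ time; their union is exactly $P\setminus\bigcup_{C'\in N(C)}P(C')$, and each subset is sorted by $x$-coordinate by construction.

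The primary point to check is that the single linear ordering used on $\calC$ need not respect geometric adjacency at all — since $|N(C)|=O(1)$, the number of ``holes'' removed from the ordering is $O(1)$ regardless of how they sit in the sequence, so the complement is always $O(1)$ ranges. The remaining detail is to ensure that the sortedness required downstream (for the linear-time common intersection of unit disks via~\cite{ref:WangOn20} inside the discrete $2$-center decision procedure) is delivered by the merging step, which is immediate because a bottom-up merge of two already-sorted lists produces a sorted parent list, making the $O(n\log n)$ preprocessing time tight.
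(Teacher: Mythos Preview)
Your approach is correct and in fact conceptually simpler than the paper's. The paper builds a three-level hierarchy of balanced binary trees mirroring the grid construction of Lemma~\ref{lem:10}: a tree over vertical strips, a tree over rectangles within each strip, and a tree over the sorted list $L_R$ of non-empty cells within each rectangle; at query time it exploits the geometric fact that the cells of $N(C)$ occupy at most five consecutive blocks in $L_R$ to split $L_R$ into at most five sublists. You flatten all of this into a single balanced tree over an arbitrary linear order on $\calC$, relying only on $|N(C)|=O(1)$ so that the complement is automatically $O(1)$ index ranges regardless of geometry. What the paper's approach buys is that its canonical subsets are naturally sorted along the ``right'' axis at each level ($x$ for strips, $y$ for rectangles, etc.), whereas yours fixes a single $x$-sorted order throughout; since the downstream use in~\cite{ref:WangOn20} accepts either order, this difference is immaterial. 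Two minor points: Lemma~\ref{lem:10}(2) takes a point rather than a cell as input, but $N(C)$ is already stored for every $C\in\calC$ by Lemma~\ref{lem:10}(1), so you can read it off directly; and the dictionary from index-triple to leaf position can be implemented by binary search in the sorted cell list already built in Lemma~\ref{lem:10}, avoiding any extra structure.
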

\begin{proof}
Recall that in the algorithm for Lemma~\ref{lem:10} we have at most $n$ vertical strips that contain the points of $P$. Let $T$ be a complete binary tree whose leaves correspond to the strips from left to right. Each leaf has a canonical subset that is the set of points of $P$ in the strip. Each internal node also has a canonical subset that is the union of all canonical subsets in the leaves of the subtree rooted at the node.
In this way, $T$ has $O(n)$ canonical subsets whose total size is $O(n\log n)$.  The tree $T$ with all canonical subsets can be constructed in $O(n\log n)$ time. In addition, if we sort all points of $P$ by their $x$-coordinates at the outset, then all canonical subsets are sorted.

For each vertical strip $A$, there are multiple rectangles. Similarly as above, $T_A$ be a complete binary tree whose leaves correspond to these rectangles from top to bottom. We construct the canonical subsets for $T_A$ in a similar way as above (with canonical subsets sorted by $y$-coordinate). In this way, the trees for all strips together have $O(n)$ canonical subsets and their total size is $O(n\log n)$. All trees along with their sorted canonical subsets can be constructed in $O(n\log n)$ time.

For each rectangle $R$ in each strip, we have a list $L_R$ of all non-empty cells, sorted by their indices. We build a complete binary search tree $T_R$ whose leaves correspond to these cells in order. We construct canonical subsets for $T_R$ in a similar way as above (with each canonical subset sorted by $x$-coordinate). The tree $T_R$ can be constructed in $O(|P(R)|\log |P(R)|)$ time, where $P(R)=P\cap R$. Indeed, we can sort all points of $P(R)$ initially by $x$-coordinate. Observe that the canonical subsets of all nodes in the same level of $T(R)$ form a partition of $P(R)$. Since $P(R)$ are already sorted, sorting all canonical subsets in the same level of $T_R$ takes $O(|P(R)|)$ time. Therefore, the total time for constructing $T_R$ with all sorted canonical subsets is $O(|P(R)|\log |P(R)|)$.
In this way, the trees for all rectangles in all vertical strips together can be constructed in $O(n\log n)$ time; the trees have $O(n)$ canonical subsets in total and their total size is $O(n\log n)$.

Consider a cell $C\in \calC$. Suppose it is in a rectangle $R$ of a vertical strip $A$. Using the tree $T$, we can find in $O(\log n)$ time $O(\log n)$ disjoint canonical subsets whose union is exactly the set of points of $P$ in the vertical strips right (resp., left) of $A$. Similarly, using the tree $T_A$, we can find in $O(\log n)$ time $O(\log n)$ disjoint canonical subsets whose union is exactly the set of points of $P$ in the rectangles of $A$ above (resp., below) $R$. Finally, we find the canonical subsets whose union is exactly the set of points of $P(R)$ outside the cells of $N(C)$. Recall that $N(C)$ contains only cells in the five rows around $R$ and the cells in each row are consecutive (e.g., see Fig.~\ref{fig:grid}). Hence, removing $N(C)$ from the list $L_R$ divides $L_R$ into at most five sublists, which can be found in $O(\log n)$ time by doing binary search using cell indices. For each such sublist, by searching $T_R$, we can find in $O(\log n)$ time $O(\log n)$ disjoint canonical subsets whose union is exactly the set of points of $P(R)$ in the cells of the sublist.
In summary, we can find in $O(\log n)$ time $O(\log n)$ disjoint canonical subsets whose union is exactly the set of points of $P$ not in the cells of $N(C)$.
\end{proof}

Let $D_p$ be the unit disk centered at a point $p$ in the plane. If $p$ is not in any cell of $\calC$, then $D_p\cap P=\emptyset$ and thus we can return the entire set $P$ as a canonical subset. Henceforth, we only consider the case where $p$ is in a cell $C$ of $\calC$.  According to Lemma~\ref{lem:170}, it suffices to find canonical subsets to cover all points of $P\cap C'$ not in $D_p$ for all cells $C'\in N(C)$. As $|N(C)|=O(1)$, it suffices to consider one such cell $C'\in N(C)$. Hence, as before, the problem reduces to a pair of square cells $(C,C')$ of $\calC$ with $C'\in N(C)$. If $C'=C$, then we know that all points of $P\cap C'$ are in $D_p$. Hence, we assume that $C'\neq C$. Without loss of generality, we assume that $C'$ and $C$ are separated by a horizontal line and $C$ is below the line. The problem is to process all points of $P\cap C'$, such that given any query disk $D_p$ whose center $p$ is in $C$, we can find a collection of disjoint canonical subsets whose union is the set of points of $P\cap C'$ not in $D_p$. To simplify the notation, we assume that all $n$ points of $P$ are in $C'$.

Our data structure combines some techniques for the disk range searching problem.
As remarked before, all our results on disk range searching with respect to $(C,C')$ can be applied to find the number of points of $P$ outside any query disk $D$ whose center is in $C$ (indeed, the disk $D$ defines a spanning upper arc $h$ in $\overline{C'}$, and points in $D$ lie on one side of $h$ while points outside $D$ lie on the other side of $h$). Hence, our main idea is to examine our disk range searching data structures and define canonical subsets of $P$ in these data structures. For each query disk $D$, we apply the query algorithm on $D$, which will produce a collection of canonical subsects. The crux is to carefully design the disk range searching data structure (e.g., by setting parameters to some appropriate values) so that the following are as small as possible (tradeoffs are needed): the preprocessing time, the total size of all canonical subsets of the data structure, which is $M$, and the total number of canonical subsets for each query disk $D$, which is $\tau$. In the following, whenever we say ``apply our query algorithm on $D$'', we mean ``finding points outside $D$''. We will present two results, a randomized result based on Chan's partition trees~\cite{ref:ChanOp12} and a slightly slower deterministic result.

\subsubsection{The randomized result}

Our data structure has three levels. We will present them from the lowest level to the highest one.
We start with the lowest level, which relies on the partition tree $T$ built in Theorem~\ref{theo:randomize}.
For any disk $D$, we use $P\setminus D$ to refer to the subset of the points of $P$ not in $D$.

\begin{lemma}\label{lem:180}
We can compute in $O(n\log n)$ expected time a data structure with $O(n)$ sorted canonical subsets of $P$ whose total size is $O(n\log n)$, so that for any disk $D$ whose center is in $C$, we can find in $O(\kappa)$ time $O(\kappa)$ pairwise-disjoint canonical sets whose union is $P\setminus D$, where $\kappa=O(\sqrt{n})$ holds with high probability.
\end{lemma}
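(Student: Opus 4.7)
The plan is to take the partition tree $T$ built by Theorem~\ref{theo:randomize} and augment each node with an explicit, sorted copy of its point subset; each node's subset will serve as one canonical subset, and queries will amount to the standard top-down descent of a partition tree.

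For the preprocessing I would first invoke Theorem~\ref{theo:randomize} to construct $T$ in $O(n\log n)$ expected time. The tree has $O(n)$ nodes (each leaf contains $\Theta(1)$ points and every internal node has $O(1)$ children), height $O(\log n)$, and at every node $v$ it associates a pseudo-trapezoid $\sigma_v$ and the subset $P_v=P\cap\sigma_v$. I would sort $P$ once by $x$-coordinate and then perform a top-down traversal of $T$ computing, at each node $v$, a sorted list representing $P_v$: given the already-sorted list at $v$'s parent, scan it and keep those points lying in $\sigma_v$. The work at $v$ is $O(|P_{\mathrm{parent}(v)}|)$ and every point appears in $O(\log n)$ ancestors, so the total preprocessing work and total size are both $O(n\log n)$. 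The collection $\{P_v\}_{v\in T}$ gives $O(n)$ sorted canonical subsets, as required.

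To answer a query for a disk $D$ centered in $C$, let $h=\partial D\cap\overline{C'}$ be the corresponding spanning upper arc. I would descend $T$ from the root and, at each visited node $v$, apply the standard trichotomy: if $\sigma_v$ lies entirely outside $D$, output $P_v$ as one returned canonical subset; if $\sigma_v\subseteq D$, discard it; otherwise $h$ crosses $\sigma_v$ and I recurse on the $O(1)$ children. Each $\sigma_v$ is bounded by $O(1)$ arcs and vertical segments, each of which meets $h$ in at most one point by Observation~\ref{obser:10}, so the trichotomy is decided in $O(1)$ time per node. The reported subsets are pairwise disjoint by construction, and their union is exactly $P\setminus D$.

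For the query time, every recursively visited node is either a node whose pseudo-trapezoid is crossed by $h$ or a child of such a node. By the defining property of $T$ from Theorem~\ref{theo:randomize}, the number of pseudo-trapezoids of $T$ crossed by any upper arc in $\overline{C'}$ is $\kappa=O(\sqrt{n})$ with high probability, so the total number of visited nodes and of reported canonical subsets is $O(\kappa)$, giving $O(\kappa)$ query time. The only nonroutine part is ensuring that storing each $P_v$ as a sorted list does not blow up the $O(n\log n)$ preprocessing budget, which is exactly what the top-down filter-from-parent construction achieves.
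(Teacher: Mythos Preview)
Your proposal is correct and follows essentially the same approach as the paper: build the partition tree of Theorem~\ref{theo:randomize}, take each $P_v$ as a canonical subset, sort them in $O(n\log n)$ total time after an initial sort of $P$, and answer queries by the standard top-down descent. The only cosmetic difference is that the paper distributes the globally sorted list of $P$ level by level (using that the cells at each level partition $P$), whereas you filter each node's list from its parent's; both yield the same $O(n\log n)$ bound.
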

\begin{proof}
We build the partition tree $T$ in Theorem~\ref{theo:randomize}.
For each node $v$ of $T$, we define $P_v$ as a canonical subset. Hence, there are $O(n)$ canonical
subsets. Since the height of $T$ is $O(\log n)$, each point of $P$
is in $O(\log n)$ canonical subsets. Therefore, the total size of
all canonical subsets is $O(n\log n)$. To sort all canonical subsets, we sort all points of $P$ initially. Observe that all canonical subsets $P_v$ for the nodes $v$ at the same level of $T$ form a partition of $P$; hence, sorting all these canonical subsets takes $O(n)$ time based on the sorted list of $P$. As $T$ has $O(\log n)$ levels, sorting all canonical subsets of $T$ takes $O(n\log n)$ time.

For any disk $D$ whose center is in $C$, we apply the query algorithm on $D$ (looking for the points of $P$ outside $D$). The algorithm produces in $O(\kappa)$ time a collection of $O(\kappa)$ disjoint canonical subsets whose union is exactly $P\setminus D$, where $\kappa=O(\sqrt{n})$ holds with high probability.
\end{proof}

In the next lemma we add the second level to the data structure of Lemma~\ref{lem:180}. In fact, it is similar to the algorithm in Section~\ref{sec:tradeoff} except that we replace the secondary data structure $\scrD_{\sigma}$ for each cell $\sigma$ of $\Xi_k$ by Lemma~\ref{lem:180}.

\begin{lemma}\label{lem:190}
We can compute in $O(n^2\log\log n/\log^2 n)$ expected time a data structure with $O(n^2/\log^2 n)$ sorted canonical subsets of $P$ whose total size is $O(n^2\log \log n/\log^2 n)$, so that for any disk $D$ whose center is in $C$, we can find in $O(\kappa)$ time $O(\kappa)$ pairwise-disjoint canonical sets whose union is $P\setminus D$, where $\kappa=O(\log n)$ holds with high probability.
\end{lemma}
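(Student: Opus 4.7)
The plan is to stack the data structure of Lemma~\ref{lem:180} underneath the hierarchical-cutting machinery of Section~\ref{sec:tradeoff}, choosing the cutting parameter $r=n/\log^2 n$ so that both the depth of the cutting and the secondary query time land at $\Theta(\log n)$. Working in the dual, let $H$ be the set of lower arcs in $\overline{C}$ dual to the points of $P$, and by our Cutting Theorem construct a hierarchical $(1/r)$-cutting $\Xi_0,\ldots,\Xi_k$ for $H$ together with $H_\sigma$ for every cell $\sigma$; this takes $O(nr)=O(n^2/\log^2 n)$ time and produces $O(r^2)$ leaf cells, each satisfying $|H_\sigma|\le n/r=\log^2 n$. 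For every non-root cell $\sigma'\in\Xi_{i+1}$ with parent $\sigma\in\Xi_i$, split $H_{\sigma\setminus\sigma'}:=H_\sigma\setminus H_{\sigma'}$ into the canonical subsets $H_1(\sigma')$ and $H_2(\sigma')$ exactly as in Section~\ref{sec:tradeoff}, and keep the point set $P_2^\ast(\sigma')\subseteq P$ dual to $H_2(\sigma')$ as one of our canonical subsets; by the defining property of $H_2$, every point of $P_2^\ast(\sigma')$ lies outside every unit disk whose center lies anywhere in $\sigma'$. For every leaf cell $\sigma\in\Xi_k$ we additionally invoke Lemma~\ref{lem:180} on the dual point set $P_\sigma$ and append its canonical subsets to our collection.

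To answer a query for a disk $D_q$ with $q\in C$, walk down the cutting to locate the unique cell $\sigma_i\in\Xi_i$ containing $q$ for $i=0,1,\ldots,k$; output the names of $P_2^\ast(\sigma_1),\ldots,P_2^\ast(\sigma_k)$ and then invoke the Lemma~\ref{lem:180} query on the secondary structure stored at $\sigma_k$ to cover $P_{\sigma_k}\setminus D_q$. Correctness is immediate from the duality relation combined with the definition of $H_1$ and $H_2$: the dual arcs of the points in $P\setminus D_q$ decompose disjointly into $\bigcup_{i=1}^k H_2(\sigma_i)$ and the subset of $H_{\sigma_k}$ handled by Lemma~\ref{lem:180}. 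The primary path contributes $O(k)=O(\log r)=O(\log n)$ subsets, while Lemma~\ref{lem:180} contributes $O(\sqrt{|P_{\sigma_k}|})=O(\sqrt{\log^2 n})=O(\log n)$ subsets with high probability, so $\kappa=O(\log n)$ w.h.p.

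For the resource bounds, a standard telescoping across the cutting gives
\[
\sum_{\sigma'}|H_2(\sigma')|\;\le\;\sum_{i=0}^{k-1}\sum_{\sigma'\in\Xi_{i+1}}|H_{\sigma\setminus\sigma'}|\;=\;\sum_{i=0}^{k-1}O(n\rho^{i+1})\;=\;O(nr),
\]
using that $\Xi_i$ is a $(1/\rho^i)$-cutting of size $O(\rho^{2i})$ with $c$-refinement between consecutive levels. Hence the cutting contributes $O(r^2)$ canonical subsets of aggregate size $O(nr)=O(n^2/\log^2 n)$. Each of the $O(r^2)$ leaves feeds $O(n/r)=O(\log^2 n)$ additional subsets of total size $O((n/r)\log(n/r))$ into the collection through Lemma~\ref{lem:180}, giving an overall $O(nr)=O(n^2/\log^2 n)$ subsets of grand total size $O(nr\log(n/r))=O(n^2\log\log n/\log^2 n)$ and expected preprocessing $O(nr+nr\log(n/r))=O(n^2\log\log n/\log^2 n)$. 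Sortedness is preserved essentially for free by presorting $P$ once and extracting every $H_2(\sigma')$ and every Lemma~\ref{lem:180} subset in the induced order.

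The main obstacle is the telescoping bound $\sum_{\sigma'}|H_2(\sigma')|=O(nr)$ and the tight choice of $r$: it must be large enough that $\sqrt{n/r}=O(\log n)$ (forcing $r=\Omega(n/\log^2 n)$) yet small enough that the construction cost $O(nr)$ and the total secondary preprocessing $O(r^2\cdot(n/r)\log(n/r))$ both fit inside $O(n^2\log\log n/\log^2 n)$; the sweet spot is exactly $r=n/\log^2 n$.
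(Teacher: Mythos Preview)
Your proposal is correct and follows essentially the same approach as the paper: dualize to lower arcs, build the hierarchical $(1/r)$-cutting of Section~\ref{sec:tradeoff} with $r=n/\log^2 n$, store the $H_2(\sigma')$ canonical subsets along the cutting, and attach the Lemma~\ref{lem:180} structure at each leaf. The correctness argument, the $O(nr)$ telescoping bound, the choice of $r$, and the resulting counts all match the paper's proof.
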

\begin{proof}
Let $H$ be the set of lower arcs in $\overline{C}$ dual to the points of $P$, which are in $C'$.
Consider a disk $D$ whose center $q$ is in $C$. The points of $P\setminus D$ are dual to the arcs of $H$ whose underlying disks do not contain $q$.
We compute a hierarchical $(1/r)$-cutting $\Xi_0,\ldots,\Xi_k$ for $H$ with $r=n/\log^2 n$. Consider a cell $\sigma\in \Xi_i$ for $i<k$.
For each child cell $\sigma'$ of $\sigma$ in $\Xi_{i+1}$, let $H_{\sigma\setminus\sigma'}$ denote the subset of the arcs of $H$ crossing $\sigma$ but not crossing $\sigma'$. We partition $H_{\sigma\setminus\sigma'}$ into two subsets: one, denoted by $H_1(\sigma')$, consists of the arcs of $H_{\sigma\setminus\sigma'}$ whose underlying disks contain $\sigma'$ and the other, denoted by $H_2(\sigma')$, consists of the remaining arcs of $H_{\sigma\setminus\sigma'}$ (hence, the underlying disk of each arc of $H_2(\sigma')$ does not contain any point of $\sigma'$). We call $H_1(\sigma')$ and $H_2(\sigma')$ the {\em canonical subsets} of $\sigma'$. We store $H_2(\sigma')$ explicitly at $\sigma'$.
For each cell $\sigma$ of $\Xi_k$, we store at $\sigma$ the set $H_{\sigma}$ of arcs of $H$ crossing $\sigma$. Note that $|H_{\sigma}|\leq n/r = \log^2n$. Let $P_{\sigma}$ be the set of points of $P$ dual to the arcs of $H_{\sigma}$. We build the data structure in Lemma~\ref{lem:180} on $P_{\sigma}$, denoted by $\scrD_{\sigma}$.
By Lemma~\ref{lem:180}, the canonical subsets in the secondary data structure have already been sorted. To sort other canonical subsets, we can first sort all points of $P$. With the help of the sorted list of $P$, all canonical subsets can be sorted in time linear in their total size.
This finishes the preprocessing.

For the preprocessing time, similar to the analysis of Lemma~\ref{lem:140}, constructing the hierarchical cutting, along with all canonical subsets, takes $O(nr)$ time. Constructing the secondary data structures $\scrD_{\sigma}$ for all cells $\sigma$ of $\Xi_k$ takes $O(r^2\log^{2}n\log\log n)$ expected time. Hence, the total expected time is $O(nr+r^2\log^{2}n\log\log n)=O(n^2\log\log n/\log^{2}n)$, excluding the time for sorting all canonical subsets not in the secondary data structure.
The total size of all canonical subsets excluding those in the secondary data structures is $O(nr)$. Hence, the total time for sorting these canonical subsets is $O(nr)$, after $P$ is sorted in $O(n\log n)$ time.
The total size of all canonical subsets of the secondary data structure $\scrD_{\sigma}$ is $O(r^2\cdot \log^2 n\log\log n)$ by Lemma~\ref{lem:180}. Therefore, the total size of all canonical subsets is $O(n^2\log\log n/\log^{2}n)$.
The number of all canonical subsets excluding those in the secondary data structure is $O(nr)$.
The number of all canonical subsets in the secondary data structure is $O(r^2\log^2 n)$ by Lemma~\ref{lem:180}. Hence, the total number of all canonical subsets is $O(n^2/\log^2n)$.

Given a disk $D_q$ with center $q\in C$, using the hierarchical cutting, we locate the cell $\sigma_i$ containing $q$ in each $\Xi_i$ for all $0\leq i\leq k$. For each $\sigma_i$, we report the canonical subset $H_2(\sigma_i)$ (note that we only need to report the ``name'' of this subset). There are $O(\log n)$ such canonical subsets, which can be found in $O(\log n)$ time. In addition, for $\sigma_k$, by Theorem~\ref{theo:randomize}, we use the secondary data structure $\scrD_{\sigma_k}$ to output the $O(\kappa)$ canonical subsets of $\scrD_{\sigma_k}$ in $O(\kappa)$ time, where $\kappa=O(\log n)$ holds with high probability. Hence, the total number of canonical subsets for $D_q$ is $O(\kappa+\log n)$. Clearly, these canonical subsets are disjoint and together form the set $P\setminus D$.
\end{proof}

We finally add the top-level data structure in the following lemma.

\begin{lemma}\label{lem:200}
For any $r<n/\log^{\omega(1)} n$, we can compute in $O(n\log n + n r\log\log r/\log^2 r)$ expected time a data structure with  $O(n r/\log^2 r)$ sorted canonical subsets of $P$ whose total size is $O(n\log(n/r)+n r\log\log r/\log^2 r)$, so that for any disk $D$ whose center is in $C$, we can find in $O(\kappa)$ time $O(\kappa)$ pairwise-disjoint canonical sets whose union is $P\setminus D$, where $\kappa=O(\sqrt{n/r}\log r)$ holds with high probability.
\end{lemma}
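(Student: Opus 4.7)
The plan is to combine the partition tree of Theorem~\ref{theo:randomize} as a top-level structure with the data structure of Lemma~\ref{lem:190} attached as secondary structures at the leaves. The top-level tree handles the $\sqrt{\cdot}$ factor in the query time, while the secondary structures exploit the logarithmic savings from Lemma~\ref{lem:190} for the remaining small subsets.

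First, I would build a partition tree $T$ on $P$ by the randomized construction of Theorem~\ref{theo:randomize}, but truncate the recursion so that each leaf of $T$ corresponds to a subset of $\Theta(r)$ points rather than $O(1)$ points. The resulting tree has $O(n/r)$ leaves and height $O(\log(n/r))$, and can be built in $O(n\log n)$ expected time; the analysis of Theorem~\ref{theo:randomize} applied to a tree with $n/r$ effective leaves yields that for any disk $D$ whose center lies in $C$, the number of leaves of $T$ crossed by $\partial D$, and indeed the total number of cells visited during a top-down traversal of $T$, is $O(\sqrt{n/r})$ with high probability. At every node $v$ of $T$, designate $P_v=P\cap \sigma_v$ as a canonical subset; since each point of $P$ appears in the canonical subsets of the $O(\log(n/r))$ ancestors on its root-to-leaf path, these tree-level subsets number $O(n/r)$ and occupy $O(n\log(n/r))$ space in total, and can all be sorted by inheriting from a single sorted copy of $P$.

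Next, for each leaf $v$ of $T$, I would apply Lemma~\ref{lem:190} to $P_v$. Since $|P_v|=\Theta(r)$, this takes $O(r^2\log\log r/\log^2 r)$ expected time per leaf and produces $O(r^2/\log^2 r)$ sorted canonical subsets of total size $O(r^2\log\log r/\log^2 r)$. Summing over the $O(n/r)$ leaves, the additional preprocessing is $O(nr\log\log r/\log^2 r)$ expected, the new canonical subsets number $O(nr/\log^2 r)$, and their total size is $O(nr\log\log r/\log^2 r)$. Combining these with the tree-level contributions and absorbing $n/r$ into $nr/\log^2 r$ (using $r\geq \log r$) yields exactly the three preprocessing bounds in the statement.

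To answer a query for a disk $D$ centered in $C$, I would traverse $T$ top-down: discard any node $v$ with $\sigma_v\subseteq D$, output $P_v$ as a canonical subset whenever $\sigma_v\cap D=\emptyset$, and recurse into the children of $v$ whenever $\partial D$ crosses $\sigma_v$. At each leaf reached by the recursion, invoke the query procedure of Lemma~\ref{lem:190} on $P_v$ to obtain $O(\log r)$ further canonical subsets with high probability. The high-probability crossing bound for $T$ contributes $O(\sqrt{n/r})$ tree-level canonical subsets and $O(\sqrt{n/r})$ crossed leaves; each crossed leaf contributes $O(\log r)$ secondary canonical subsets, yielding a total output of $O(\sqrt{n/r}\log r)=O(\kappa)$ in the same running time. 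The main obstacle is verifying that the two randomized analyses compose: one must check that Theorem~\ref{theo:randomize}'s crossing bound holds with high probability when the tree is truncated at $n/r$ leaves (producing $O(\sqrt{n/r})$ in place of $O(\sqrt{n})$), and that the high-probability bounds of Lemma~\ref{lem:190} hold simultaneously at all $O(\sqrt{n/r})$ crossed leaves via a union bound. The condition $r<n/\log^{\omega(1)}n$ provides the polylogarithmic slack in $n/r$ needed for both union bounds to succeed.
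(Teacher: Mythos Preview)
Your proposal is correct and follows essentially the same approach as the paper: build an $r$-partial partition tree as in Chan's Theorem~4.2~\cite{ref:ChanOp12} (truncating the recursion of Theorem~\ref{theo:randomize} at leaves of size $\Theta(r)$), attach the structure of Lemma~\ref{lem:190} at each leaf as a secondary data structure, and answer queries by combining the $O(\sqrt{n/r})$ top-level crossing bound with the $O(\log r)$ per-leaf bound from Lemma~\ref{lem:190}. The paper handles the issue you flag about composing the high-probability bounds by invoking the $r$-partial tree result directly from~\cite{ref:ChanOp12}, which already guarantees the $O(\sqrt{n/r})$ bound under the hypothesis $r<n/\log^{\omega(1)}n$.
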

\begin{proof}
As in~\cite{ref:ChanOp12}, we build a {\em $r$-partial partition tree} $T$ for $P$ in the same way as the algorithm of Theorem~\ref{theo:randomize}, except that we stop the construction when a cell has less than $r$ points. Following the same algorithm of Theorem~4.2~\cite{ref:ChanOp12} and using our new algorithms for our problem as discussed in Section~\ref{sec:randomize}, for any $r<n/\log^{\omega(1)} n$, we can build a $r$-partition tree $T$ of size $O(n/r)$ in expected $O(n\log n)$ time with query time bounded by $O(\sqrt{n/r})$ with high probability (more specifically, for each disk $D$ whose center is in $C$, the number of leaf cells crossed by $h$ is $O(\kappa')$ and they can be found in $O(\kappa')$ time, where $\kappa'=O(\sqrt{n/r})$ holds with high probability).

As discussed in Section~\ref{sec:randomize}, each node of $T$ corresponds to a cell $\sigma$ (which is a pseudo-trapezoid) in $\overline{C'}$. Let $P(\sigma)$ be the set of points of $P$ in $\sigma$, and we define $P(\sigma)$ as the canonical subset of $\sigma$. Note that $|P(\sigma)|=O(r)$ if $\sigma$ is a leaf cell. As $T$ has $O(n/r)$ nodes, there are $O(n/r)$ canonical subsets in $T$. Since the height of $T$ is $O(\log (n/r))$, the total size of all these canonical subsets is $O(n\log (n/r))$ as each point of $P$ is in the canonical subset of only one cell in each level of $T$. If we sort $P$ initially, all these canonical subsets can be sorted in $O(n\log n)$ time.

Next, for each leaf cell $\sigma$ of $T$, we build the data structure $\scrD_{\sigma}$ in Lemma~\ref{lem:190} on $P(\sigma)$ as a secondary data structure. For all the secondary data structures, the total expected construction time is $O(n/r\cdot r^2\log\log r/\log^2 r)=O(n r\log\log r/\log^2 r)$, and the total number of canonical subsets is $O(n r/\log^2 r)$, whose total size is $O(n r\log\log r/\log^2 r)$.

This finishes our preprocessing. Based on the above discussions, the preprocessing takes $O(n\log n + n r\log\log r/\log^2 r)$ expected time. The number of canonical subsets, which are all sorted, is $O(n r/\log^2 r)$, and their total size is $O(n\log(n/r)+n r\log\log r/\log^2 r)$.

Let $D$ be a disk whose center is in $C$. Let $h=\partial D\cap \overline{C'}$, which is an upper arc of $\overline{C'}$ (note that the case $h=\emptyset$ can be easily handled). Using $T$, we can find $O(\kappa')$ canonical subsets in $O(\kappa')$ time as well as a set $\Sigma$ of $O(\kappa')$ leaf cells $\sigma$ crossed by $h$, where $\kappa'=\sqrt{n/r}$ holds with high probability; those canonical subsets are disjoint and their union is exactly the set of points of $P\setminus D$ not in the cells of $\Sigma$. Next, for each cell $\sigma\in \Sigma$, using the secondary data structure $\scrD_{\sigma}$, we compute in $O(\kappa'')$ time $O(\kappa'')$ canonical subsets whose union is exactly the points of $P\setminus D$ in $\sigma$, where $\kappa''=O(\log r)$ holds with high probability. Therefore, in total we can find in $O(\kappa)$ time $O(\kappa)$ pairwise-disjoint canonical subsets whose union is $P\setminus D$, where $\kappa=O(\sqrt{n/r}\log r)$ holds with high probability.
\end{proof}

By setting $r=n^{1/3}\log^4 n/(\log\log n)^{2/3}$ in the preceding lemma, we can obtain the following result.

\begin{corollary}
We can compute in $O(n^{4/3}\log^2 n(\log\log n)^{1/3})$ expected time by a randomized algorithm a data structure with $O(n^{4/3}\log^2 n/(\log\log n)^{2/3})$ sorted canonical subsets of $P$ whose total size is $O(n^{4/3}\log^2 n(\log\log n)^{1/3})$, so that for any disk $D$ whose center is in $C$, we can find in $O(\kappa)$ time $O(\kappa)$ pairwise-disjoint canonical sets whose union is $P\setminus D$, where $\kappa=O(n^{1/3}(\log\log n)^{1/3}/\log n)$ holds with high probability.
\end{corollary}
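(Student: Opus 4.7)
The plan is to obtain the corollary by direct instantiation of Lemma~\ref{lem:200} with the parameter choice $r = n^{1/3}\log^4 n/(\log\log n)^{2/3}$, and then verify that each of the four complexity bounds (preprocessing time, number of canonical subsets, total size, and query parameter $\kappa$) simplifies to the stated quantity. Since $r$ is of the form $n^{1/3}\cdot \mathrm{polylog}(n)$, we have $\log r = \Theta(\log n)$ and $\log\log r = \Theta(\log\log n)$, and $r \le n/\log^{\omega(1)} n$ is comfortably satisfied, so the hypothesis of Lemma~\ref{lem:200} applies.

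First I would check the expected preprocessing time, $O(n\log n + nr\log\log r/\log^2 r)$. The dominant term evaluates to
\[
n\cdot \frac{n^{1/3}\log^4 n}{(\log\log n)^{2/3}}\cdot \frac{\log\log n}{\log^2 n} \;=\; n^{4/3}\log^2 n\,(\log\log n)^{1/3},
\]
which absorbs $n\log n$ and matches the corollary. Next, Lemma~\ref{lem:200} bounds the number of canonical subsets by $O(nr/\log^2 r)$, which becomes $n^{4/3}\log^2 n/(\log\log n)^{2/3}$, and their total size by $O(n\log(n/r) + nr\log\log r/\log^2 r)$; here $n\log(n/r) = O(n\log n)$ is again dominated by the second term, yielding $O(n^{4/3}\log^2 n(\log\log n)^{1/3})$.

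Finally, I would verify the query parameter. From Lemma~\ref{lem:200}, $\kappa = O(\sqrt{n/r}\,\log r)$ holds with high probability. Under our choice of $r$,
\[
\sqrt{n/r} \;=\; \sqrt{\frac{n^{2/3}(\log\log n)^{2/3}}{\log^4 n}} \;=\; \frac{n^{1/3}(\log\log n)^{1/3}}{\log^2 n},
\]
and multiplying by $\log r = \Theta(\log n)$ gives $\kappa = O(n^{1/3}(\log\log n)^{1/3}/\log n)$, exactly as claimed; the running time for producing the $O(\kappa)$ canonical subsets inherits the high-probability guarantee from Lemma~\ref{lem:200} unchanged.

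There is essentially no obstacle here beyond selecting $r$; Lemma~\ref{lem:200} has already encapsulated the nontrivial work (the $r$-partial partition tree, the hierarchical cutting with canonical subsets $H_1(\sigma')$, $H_2(\sigma')$, and the two-tier use of Lemma~\ref{lem:180} via Lemma~\ref{lem:190}). The only judgment call is how to pick $r$: one wants to balance the storage-like term $nr/\log^2 r$ against the query term $\sqrt{n/r}\log r$ while ensuring the preprocessing term $nr\log\log r/\log^2 r$ meets the target. The value $r = n^{1/3}\log^4 n/(\log\log n)^{2/3}$ is the unique choice (up to $\log\log n$ tuning) for which the three bounds of the corollary collapse simultaneously onto the stated $n^{4/3}$-type expressions, so the ``proof'' is essentially a calculation once the right $r$ has been guessed.
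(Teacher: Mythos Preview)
Your proposal is correct and follows exactly the paper's approach: the paper's proof consists solely of the sentence ``By setting $r=n^{1/3}\log^4 n/(\log\log n)^{2/3}$ in the preceding lemma, we can obtain the following result,'' and your computation verifies precisely this instantiation of Lemma~\ref{lem:200}.
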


As discussed before, plugging our above result in the algorithm of~\cite{ref:AgarwalTh98}, we can solve the decision version of the discrete $2$-center problem in $O(n^{4/3}\log^2 n(\log\log n)^{1/3})$ expected time. Using the decision algorithm and the $O(n^{4/3}\log^2 n)$-time distance selection algorithm in~\cite{ref:KatzAn97}, the discrete $2$-center problem can be solved in $O(n^{4/3}\log^3 n(\log\log n)^{1/3})$ expected time.

\begin{theorem}
Given a set $P$ of $n$ points in the plane, the discrete $2$-center problem can be solved in $O(n^{4/3}\log^3 n(\log\log n)^{1/3})$ expected time by a randomized algorithm.
\end{theorem}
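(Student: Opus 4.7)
The plan is to plug the canonical-subset data structure supplied by the preceding corollary into the decision-algorithm template of Agarwal, Sharir, and Welzl and then lift the result to the optimization problem by binary search over pairwise distances.

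First I would recall the template for the decision problem. Given any scheme that, after preprocessing time $T$, produces a collection $\calP$ of canonical subsets of $P$ with total size $M$ such that, for every query point $p$, the subset of $P$ outside the unit disk centered at $p$ can be represented as the disjoint union of $O(\tau)$ elements of $\calP$ found in $O(\tau)$ time, the algorithm of~\cite{ref:AgarwalTh98} decides whether $\lambda\ge \lambda^*$ in $O(T+M\log n+\tau\cdot n\log^3 n)$ time. The $\log n$ factor multiplying $M$ comes from computing, for each canonical subset $P_i\in\calP$, the common intersection of the unit disks centered at the points of $P_i$ in $O(|P_i|\log n)$ time; when the points of $P_i$ are already sorted, this drops to $O(|P_i|)$ by~\cite{ref:WangOn20}, so the term becomes just $M$. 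Since the preceding corollary guarantees that every canonical subset it produces is sorted, I may use the bound $O(T+M+\tau\cdot n\log^3 n)$.

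Second I would substitute the parameters from the corollary: $T=O(n^{4/3}\log^2 n\,(\log\log n)^{1/3})$ expected, $M=O(n^{4/3}\log^2 n\,(\log\log n)^{1/3})$, and $\tau=O(n^{1/3}(\log\log n)^{1/3}/\log n)$ with high probability. A direct calculation gives
\[
\tau\cdot n\log^3 n \;=\; O\!\left(\frac{n^{1/3}(\log\log n)^{1/3}}{\log n}\cdot n\log^3 n\right) \;=\; O\!\left(n^{4/3}\log^2 n\,(\log\log n)^{1/3}\right),
\]
so each of the three terms matches, and the decision problem is solved in $O(n^{4/3}\log^2 n\,(\log\log n)^{1/3})$ expected time.

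Third, to obtain $\lambda^*$ itself I would run binary search over the $\binom{n}{2}$ pairwise distances of $P$: each iteration calls the Katz–Sharir deterministic distance selection algorithm, which runs in $T_1=O(n^{4/3}\log^2 n)$ time~\cite{ref:KatzAn97}, together with one invocation of the decision algorithm above, taking $T_2=O(n^{4/3}\log^2 n\,(\log\log n)^{1/3})$ expected time. Since $O(\log n)$ iterations suffice, the total expected running time is $O((T_1+T_2)\log n)=O(n^{4/3}\log^3 n\,(\log\log n)^{1/3})$. The only nontrivial point in this plan is the sortedness-based savings that reduces $M\log n$ to $M$; everything else is bookkeeping around the corollary and the existing decision framework.
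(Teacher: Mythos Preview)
Your proposal is correct and follows essentially the same approach as the paper: plug the corollary's sorted canonical-subset data structure into the Agarwal--Sharir--Welzl decision framework (using the sortedness to drop $M\log n$ to $M$ via~\cite{ref:WangOn20}), obtain an $O(n^{4/3}\log^2 n(\log\log n)^{1/3})$ expected-time decision algorithm, and then binary-search over pairwise distances with the $O(n^{4/3}\log^2 n)$ Katz--Sharir selection algorithm to pick up the final $\log n$ factor. The arithmetic and the key observation about sortedness are exactly as in the paper.
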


\subsubsection{The deterministic result}

The deterministic result also has three levels, which correspond to Lemmas~\ref{lem:180}, \ref{lem:190}, and \ref{lem:200}, respectively. The following lemma gives the lowest level, by using the partition tree of Theorem~\ref{theo:30}.

\begin{lemma}\label{lem:210}
We can compute in $O(n\log n)$ time a data structure with $O(n)$ sorted canonical subsets of $P$ whose total size is $O(n\log\log n)$, so that for any disk $D$ whose center is in $C$, we can find in $O(\sqrt{n}(\log n)^{O(1)})$ time $O(\sqrt{n}(\log n)^{O(1)})$ pairwise-disjoint canonical sets whose union is $P\setminus D$.
\end{lemma}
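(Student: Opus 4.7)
The plan is to mimic Lemma~\ref{lem:180} but use the deterministic partition tree of Theorem~\ref{theo:30} instead of the randomized one of Theorem~\ref{theo:randomize}. Concretely, I would build that partition tree $T$ on $P$ in $O(n\log n)$ time and, at every node $v$, declare the subset $P_v = P\cap\sigma_v$ (where $\sigma_v$ is the pseudo-trapezoid stored at $v$) to be a canonical subset. Since $T$ has $O(n)$ nodes in total, this immediately gives $O(n)$ canonical subsets.

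The size count is where the $\log\log n$ factor enters. Recall from the proof of Theorem~\ref{theo:30} that at each node $v$ the partition $\Pi_v$ is built via Lemma~\ref{lem:100} with parameter $s=\sqrt{|P_v|}$, so the subset sizes collapse from $m$ to $\sqrt{m}$ at each level and the height of $T$ is $O(\log\log n)$. At every fixed level of $T$ the canonical subsets form a partition of $P$, contributing exactly $n$ to the total size; summing over the $O(\log\log n)$ levels yields total size $O(n\log\log n)$, as claimed. To make all canonical subsets sorted (say, by $x$-coordinate), I would sort $P$ once in $O(n\log n)$ time and then, at each level, recover the sorted canonical subsets from the global sorted list by a single $O(n)$ scan; the additional $O(n\log\log n)$ sorting cost is absorbed into the $O(n\log n)$ preprocessing.

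For the query, given a disk $D$ whose center is in $C$, I would use the outside-disk variant of the query algorithm of Theorem~\ref{theo:30} (justified by the Remark following that theorem). Starting at the root and descending $T$, at each visited node $v$ I examine each child $u$: if $\sigma_u$ is disjoint from $D$, I output $P_u$ as a canonical subset; if $\sigma_u\subseteq D$, I discard $u$; and if $\partial D$ crosses $\sigma_u$, I recurse into $u$. The reported canonical subsets are pairwise disjoint because no reported node is an ancestor or descendant of another, and their union is exactly $P\setminus D$. Both the number of reported subsets and the running time are bounded by the query complexity of Theorem~\ref{theo:30}, i.e., $O(\sqrt{n}(\log n)^{O(1)})$.

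The argument is essentially routine once the randomized version (Lemma~\ref{lem:180}) is in place; the only things to verify carefully are that (a) the tree $T$ from Theorem~\ref{theo:30} has $O(n)$ nodes and $O(\log\log n)$ height so the counts $O(n)$ and $O(n\log\log n)$ come out correctly, and (b) the outside-disk traversal's output size is dominated by the query time. Neither requires new ideas beyond invoking Theorem~\ref{theo:30} and its Remark, so I do not anticipate a genuine obstacle.
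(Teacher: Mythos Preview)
Your proposal is correct and essentially identical to the paper's proof. One small caveat: in Matou\v{s}ek-style partition trees the pseudo-trapezoids $\sigma_v$ at a given level may overlap, so the canonical subset at node $v$ must be the class $P_v$ stored at $v$ (a subset of $\sigma_v$), not $P\cap\sigma_v$ itself; with this correction your disjointness argument and the size counts go through exactly as you describe.
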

\begin{proof}
We build the partition tree $T$ in Theorem~\ref{theo:30}.
For each node $v$ of $T$, we define $P_v$ as a canonical subset. As $T$ has $O(n)$ nodes, there are $O(n)$ canonical subsets. Since the height of $T$ is $O(\log\log n)$, each point of $P$
is in $O(\log\log n)$ canonical subsets. Therefore, the total size of
all canonical subsets is $O(n\log\log n)$. These canonical subsets can be sorted in $O(n\log\log n)$ time after all points of $P$ are sorted initially.

For any disk $D$ whose center is in $C$, we apply the query algorithm on $D$. The algorithm produces in $O(\sqrt{n}(\log n)^{O(1)})$ time a collection of $O(\sqrt{n}(\log n)^{O(1)})$ pairwise-disjoint canonical subsets whose union is exactly $P\setminus D$.
\end{proof}

The following lemma follows the same algorithm as in Lemma~\ref{lem:190}, except that we use the data structure in Lemma~\ref{lem:210} as the secondary data structure.

\begin{lemma}\label{lem:220}
We can compute in $O(n^2\log\log n/\log^2 n)$ time a data structure with $O(n^2/\log^2 n)$ sorted canonical subsets of $P$ whose total size is $O(n^2\log \log\log n/\log^2 n)$, so that for any disk $D$ whose center is in $C$, we can find in $O(\log n(\log\log n)^{O(1)})$ time $O(\log n(\log\log n)^{O(1)})$ pairwise-disjoint canonical sets whose union is $P\setminus D$.
\end{lemma}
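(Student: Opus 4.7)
The plan is to follow the proof of Lemma~\ref{lem:190} essentially verbatim, swapping out the randomized secondary structure of Lemma~\ref{lem:180} for the deterministic partition-tree secondary structure of Lemma~\ref{lem:210}. Set $r = n/\log^2 n$, let $H$ be the set of $n$ lower arcs in $\overline{C}$ dual to the points of $P$, and compute a hierarchical $(1/r)$-cutting $\Xi_0,\ldots,\Xi_k$ for $H$ using the Cutting Theorem. For each cell $\sigma\in\Xi_i$ with $i<k$ and each child $\sigma'\in\Xi_{i+1}$, partition $H_{\sigma\setminus\sigma'}$ into canonical subsets $H_1(\sigma')$ (arcs whose underlying disks contain $\sigma'$) and $H_2(\sigma')$ (arcs whose underlying disks miss $\sigma'$), and store $H_2(\sigma')$ at $\sigma'$. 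For each leaf cell $\sigma\in\Xi_k$, observe that $|P_\sigma|=O(n/r)=O(\log^2 n)$, and build the deterministic data structure $\scrD_\sigma$ of Lemma~\ref{lem:210} on $P_\sigma$ as a secondary structure. All primary canonical subsets are sorted by first sorting $P$ globally and propagating the order in linear time; Lemma~\ref{lem:210} produces its secondary canonical subsets already sorted.

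For the complexity analysis, building the cutting and computing the primary canonical subsets takes $O(nr)=O(n^2/\log^2 n)$ time, and each secondary structure takes $O(|P_\sigma|\log|P_\sigma|)=O(\log^2 n\cdot\log\log n)$ by Lemma~\ref{lem:210}; summed over the $O(r^2)$ leaves this is $O(r^2\log^2 n\log\log n)=O(n^2\log\log n/\log^2 n)$, which dominates the total preprocessing. The number of canonical subsets is $O(r^2)$ from the primary level plus $O(r^2\cdot|P_\sigma|)=O(r^2\log^2 n)=O(n^2/\log^2 n)$ from the secondary level, matching the stated bound. For the total size, the primary subsets contribute $O(nr)=O(n^2/\log^2 n)$, while each secondary contributes $O(|P_\sigma|\log\log|P_\sigma|)=O(\log^2 n\cdot\log\log\log n)$ by Lemma~\ref{lem:210}, so the secondary total is $O(r^2\log^2 n\log\log\log n)=O(n^2\log\log\log n/\log^2 n)$, which dominates.

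For a query disk $D$ with center $q\in C$, locate the cells $\sigma_0,\sigma_1,\ldots,\sigma_k$ containing $q$ through the hierarchy in $O(\log r)=O(\log n)$ time. At each level $i<k$, every arc in $H_2(\sigma_{i+1})$ has its underlying disk disjoint from $\sigma_{i+1}\ni q$ and is therefore dual to a point outside $D$, while every arc in $H_1(\sigma_{i+1})$ is dual to a point inside $D$; we output the name $H_2(\sigma_{i+1})$. The only remaining points of $P\setminus D$ lie in $P_{\sigma_k}$, and we apply the query algorithm of $\scrD_{\sigma_k}$, which by Lemma~\ref{lem:210} returns $O(\sqrt{|P_{\sigma_k}|}(\log|P_{\sigma_k}|)^{O(1)})=O(\log n(\log\log n)^{O(1)})$ additional disjoint canonical subsets in the same time. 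Combining, we obtain $O(\log n(\log\log n)^{O(1)})$ sorted, pairwise-disjoint canonical subsets whose union is $P\setminus D$.

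The only real subtlety is verifying the total-size bound: the $\log\log\log n$ factor arises because Lemma~\ref{lem:210} incurs a $\log\log m$ blow-up when applied to a secondary set of size $m=\log^2 n$, and $\log\log(\log^2 n)=\Theta(\log\log\log n)$ then just barely beats the $O(n^2/\log^2 n)$ primary contribution. The remaining arithmetic and the correctness of the $H_1$/$H_2$ dichotomy under our duality are identical to the argument in Lemma~\ref{lem:190}.
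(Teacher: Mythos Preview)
Your proof is correct and follows exactly the approach the paper takes: the paper's own proof of Lemma~\ref{lem:220} is a one-liner that says to repeat the construction of Lemma~\ref{lem:190} verbatim but replace the secondary structure of Lemma~\ref{lem:180} by that of Lemma~\ref{lem:210}, and your write-up carries out precisely this substitution with the analysis spelled out in full.
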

\begin{proof}
We follow the same algorithm as that for Lemma~\ref{lem:190}, except that for each cell $\sigma$ of $\Xi_k$, we build the data structure of Lemma~\ref{lem:210} on $P_{\sigma}$ as the secondary data structure. Following the same analysis will lead to the lemma.
\end{proof}

We finally add a partial half-space decomposition scheme of Theorem~5.2~\cite{ref:MatousekRa93} as the top level of our data structure.

\begin{lemma}\label{lem:230}
For any $r\leq n$, we can compute in $O(n\sqrt{r}+n\log n+r^2+(n^2/r)\log r\log\log(n/r)/\log^2(n/r))$ time a data structure with  $O(r\log r+ (n^2/r)\log r/\log^2(n/r))$ sorted canonical subsets of $P$ whose total size is $O(n\log^2 r+(n^2/r)\log r\log\log\log(n/r)/\log^2(n/r))$, so that for any disk $D$ whose center is in $C$, we can find in $O(\sqrt{r}\log(n/r)(\log\log(n/r))^{O(1)})$ time $O(\sqrt{r}\log(n/r)(\log\log(n/r))^{O(1)})$ pairwise-disjoint canonical sets whose union is $P\setminus D$.
\end{lemma}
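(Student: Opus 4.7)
The plan is to adapt the partial halfspace decomposition of Matou\v{s}ek~\cite{ref:MatousekRa93} (Theorem~5.2), in exactly the same way Lemma~\ref{lem:200} adapted Chan's $r$-partial partition tree, but using the deterministic tools from Sections~\ref{sec:cutting}--\ref{sec:partition} and installing Lemma~\ref{lem:220} as the secondary data structure at the leaves. As in Lemma~\ref{lem:200}, we dualize: let $H$ be the set of lower arcs in $\overline{C}$ dual to the points of $P$, so a disk $D$ centered at $q\in C$ excludes a point $p\in P$ iff the lower arc dual to $p$ has its underlying disk not containing $q$. By our Test Set Lemma and the Cutting Theorem, Matou\v{s}ek's construction from Theorem~5.2 of~\cite{ref:MatousekRa93} carries over to $H$ with pseudo-trapezoids replacing triangles. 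This produces an $O(\log r)$-height tree $T$ whose nodes carry pseudo-trapezoidal cells of $\overline{C}$; the root is $\overline{C}$, every node has $O(1)$ children whose cells partition the parent cell, and the tree has $O(r)$ leaves, each leaf cell being crossed by at most $n/r$ arcs of $H$. At each node $v$ we attach as a canonical subset the set $P_v$ of those points of $P$ whose dual arcs are crossed by the parent cell but not by $v$'s cell (so that $P\cap\sigma_v$ is a disjoint union of canonical subsets along the root-to-$v$ path in the obvious sense). Using the construction of the hierarchical cutting together with the Partition Theorem, this can be built in $O(n\sqrt{r}+n\log n+r^2)$ time and yields $O(r\log r)$ canonical subsets whose total size is $O(n\log^2 r)$, after the initial $O(n\log n)$ sort of $P$ is inherited for sortedness.

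Next, for each of the $O(r)$ leaf cells $\sigma$ of $T$, let $P(\sigma)$ be the set of points of $P$ whose dual arcs cross $\sigma$; note $|P(\sigma)|\le n/r$. We install on $P(\sigma)$ (after translating back from dual) the data structure of Lemma~\ref{lem:220}. Summing the bounds of Lemma~\ref{lem:220} over $O(r)$ leaves of size $n/r$ contributes $O((n^2/r)\log\log(n/r)/\log^2(n/r))$ to preprocessing, $O((n^2/r)/\log^2(n/r))$ canonical subsets, and $O((n^2/r)\log\log\log(n/r)/\log^2(n/r))$ to total size; an extra $\log r$ factor appears in the subset count and total size because each point of $P$ can occur as a dual crossing of $O(\log r)$ ancestor cells in $T$ and we only want to charge the leaves it reaches through the top-level tree. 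These add up to exactly the complexities claimed.

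For a query disk $D$ centered at $q\in C$, we trace down $T$ using $q$: at each visited node, report the canonical subsets attached to those children whose cells avoid $q$ on the ``wrong side'' (i.e.\ correspond to dual arcs certifying $p\notin D$), and recurse into the child containing $q$. Between levels, the Test Set Lemma guarantees that only $O(\sqrt{r})$ leaf cells have boundary crossed by the upper arc $\partial D\cap\overline{C'}$; for each such leaf, we invoke the secondary Lemma~\ref{lem:220}, which returns $O(\log(n/r)(\log\log(n/r))^{O(1)})$ pairwise-disjoint canonical subsets. The top-level descent contributes $O(\log r)$ canonical subsets. The grand total is the claimed $O(\sqrt{r}\log(n/r)(\log\log(n/r))^{O(1)})$ subsets, found in the same asymptotic time, whose disjoint union is exactly $P\setminus D$.

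The main obstacle is verifying that Matou\v{s}ek's decomposition scheme from~\cite{ref:MatousekRa93} truly transfers to the pseudo-trapezoidal setting with the same complexity constants: in particular, that the crossing-number control on any spanning upper arc remains $O(\sqrt{r})$ when the cells are pseudo-trapezoids and the arcs are cuttings produced by our Cutting Theorem. This reduces to the Test Set Lemma and the Partition Theorem, both of which have already been established for upper arcs in $\overline{C'}$ (and symmetrically for lower arcs in $\overline{C}$), so the transfer is mechanical but requires routine care in tracking the $\log$-factors to match the stated total-size and preprocessing-time bounds.
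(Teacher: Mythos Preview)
Your proposal has a genuine gap: the data structure you describe does not exist, and your query algorithm is internally inconsistent with it.

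You dualize to lower arcs $H$ in $\overline{C}$ and claim that Matou\v{s}ek's Theorem~5.2 construction yields a single tree $T$ of height $O(\log r)$ with $O(r)$ leaves, each crossed by at most $n/r$ arcs of $H$. But that would be a $(1/r)$-cutting of size $O(r)$ for $n$ arcs, which is impossible in general (the size is $\Theta(r^2)$). A cutting with $O(r)$ cells has leaves crossed by $\Theta(n/\sqrt{r})$ arcs, not $n/r$. Moreover, your query algorithm is point location (``recurse into the child containing $q$''), so you visit a single root-to-leaf path and reach \emph{one} leaf, giving $O(\log r)$ top-level canonical subsets and a single secondary call. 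Your subsequent sentence about ``$O(\sqrt{r})$ leaf cells crossed by the upper arc $\partial D\cap\overline{C'}$'' refers to a primal picture in $\overline{C'}$ and does not connect to the point-location descent in $\overline{C}$ you just described. In short, what you have written is essentially the dual-cutting scheme of Lemma~\ref{lem:190}/Section~\ref{sec:tradeoff}, whose preprocessing is $O(nr)$ (not $O(n\sqrt{r})$) and whose query cost has no $\sqrt{r}$ factor; you then graft on bounds that belong to a different construction.

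The paper's proof works in the \emph{primal}, and the mechanism is the iterative weight-halving scheme of Theorem~5.2 in~\cite{ref:MatousekRa93}, not a single tree. First build a test set of $r$ upper arcs in $\overline{C'}$ via a $\Theta(1/\sqrt{r})$-cutting $\Xi^*$ of the lower arcs dual to $P$ (this is the $O(n\sqrt{r})$ term); each cell of $\Xi^*$ carries a guarding set. Then run $K=O(\log r)$ iterations: in iteration $i$, build a hierarchical $(1/\sqrt{r})$-cutting of $\overline{C'}$ for the surviving test arcs $H_i$, define canonical subsets $P(\sigma)=P\cap\sigma$ at the $O(r)$ active cells, split leaf point-sets into $O(r)$ standard subsets of size $O(n/r)$, and install Lemma~\ref{lem:220} on each. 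A weight function on guarding sets determines which ones are ``handled'' this round; at least half are, so $O(\log r)$ rounds suffice. A query locates its center in $\Xi^*$ to learn which iteration handles it, then in that iteration's cutting the query \emph{arc} crosses $O(\sqrt{r})$ active cells and $O(\sqrt{r})$ standard subsets (this is the key observation, proved exactly as in~\cite{ref:MatousekRa93}). The $\log r$ factors in the subset count and total size arise from the $O(\log r)$ iterations, not from ``ancestor-cell crossings'' as you suggested.
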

\begin{proof}
We follow the algorithmic scheme of
Theorem~5.2~\cite{ref:MatousekRa93} and adapt it to our problem.

We first apply the Test Set Lemma to compute a test set $H$ of $r$ spanning upper arcs in $\overline{C'}$. Recall in the proof of the Test Set Lemma that this is done by computing a $(\Theta(\sqrt{r}))$-cutting $\Xi^*$ of size $O(r)$ for the lower-arcs of $\overline{C}$ dual to the points of $P$. For each cell $\sigma$ of $\Xi^*$, we define $S_{\sigma}$ as the set of upper arcs of $H$ dual to the at most four vertices of $\sigma$. Following the terminology of~\cite{ref:MatousekRa93}, we call $S_{\sigma}$ the {\em guarding set} of $\sigma$. 
Let $S$ be the collection of the guarding sets of all cells of $\Xi^*$.

The algorithm proceeds with $K=O(\log r)$ iterations. In each $i$-th iteration, $1\leq i\leq K$, a component of the data structure is computed, which has $O(r)$ canonical subsets of total size $O(n\log r)$ and is associated with a sub-collection $S_i'$ of guarding sets of $S$. We describe the $i$-th iteration as follows.
In the beginning we have a sub-collection $S_i$ of $S$; initially, $S_1=S$. Let $H_i$ be the union of all arcs of the sets of $S_i$. By our Cutting Theorem, we compute a hierarchical $(1/\sqrt{r})$-cutting $\Xi_0$, \ldots, $\Xi_k$ for $H_i$, whose total number of cells is $O(r)$. For each cell $\sigma$ of each cutting $\Xi_j$, let $P(\sigma)$ denote the subset of points of $P$ in $\sigma$ (for ease of discussion, we assume that each point of $P$ is in a single cell of $\Xi_j$). We say that a cell $\sigma\in \Xi_j$ is {\em fat} if $|P(\sigma)|\geq 2^{k-j}n/r$. We say that $\sigma$ is {\em active} if all its ancestor cells are fat, and $\sigma$ is a {\em leaf} cell if it is active and either not fat or in $\Xi_k$.

For each active cell $\sigma$ in any $\Xi_j$, we call $P(\sigma)$ a {\em canonical subset} of $P$. Hence, the total number of canonical subsets is $O(r)$, whose total size is $O(n\log r)$ as each point is in the canonical subset of a single cell at each $\Xi_i$.
For each leaf cell $\sigma$, if $P(\sigma)$ contains more than $n/r$ points, then we further partition it into subsets with sizes between $n/(2r)$ and $n/r$; we call them {\em standard subsets} ($P(\sigma)$ itself is also a standard subset if $|P(\sigma)|\leq n/r$). As the number of leaf cells is $O(r)$, the number of standard subsets is $O(r)$. For each standard subset, we build the data structure in Lemma~\ref{lem:220} as a secondary data structure.

Next, we define $S_i'$. For any upper arc $h$ in $\overline{C'}$, let $K_j(h)$ be the set of active cells of $\Xi_j$ crossed by $h$, and $K(h)=\bigcup_{j=1}^kK_j(h)$. We define the {\em weight} of $h$ as
$$w(h)=\sum_{j=1}^k4^{k-j}|K_j(h)|+\frac{1}{k}\sum_{\sigma\in K(h)}\sqrt{\frac{r}{n}|P(\sigma)|}\ \ +\sum_{\sigma \in K_k(h)}\frac{r}{n}|P(\sigma)|.$$
We define the weight of a guarding set $S_{\sigma}\in S_i$ as the sum of the weights of its arcs.
Let $W$ be the average weight of all guarding sets of $S_i$.
We define $S_i'$ as the set of the guarding sets whose weights are at most $2W$. Then, $|S_i'|\geq |S_i|/2$. We set $S_{i+1}=S_i\setminus S_i'$. The whole algorithm stops once $S_{i+1}=\emptyset$. Note that the algorithm has $O(\log r)$ steps.

This finishes the description of the preprocessing. Before analyzing the preprocessing time, we first show an observation. Let $D$ be a disk such that its center is in a cell $\sigma$ of the cutting $\Xi^*$ whose guarding set belongs to $S_i'$. Let $h=\partial D\cap \overline{C'}$, which is an upper arc in $\overline{C'}$. Then, by the same analysis as in Theorem~5.2~\cite{ref:MatousekRa93}, we can show the following {\em observation}: $h$ crosses at most $O(\sqrt{r})$ active cells of all cuttings $\Xi_j$, $1\leq j\leq k$, and the total number of standard subsets in all leaf cells crossed by $h$ is $O(\sqrt{r})$.

We now analyze the preprocessing time. Computing the test set $H$ takes $O(n\sqrt{r})$ time by the Cutting Theorem. Afterwards, the algorithm has $O(\log r)$ iterations. Consider the $i$-th iteration.
Computing the hierarchical cutting can be done in $O(|H_i|\sqrt{r})$ time by our Cutting Theorem. Computing the canonical subsets $P(\sigma)$ for all cells of all cuttings $\Xi_j$ takes $O(n\log r)$ time. For each arc $h\in H_i$, as the hierarchical cutting has $O(r)$ cells, computing the weight $w(h)$ can be done in $O(r)$ time. Hence, computing the weights for all arcs of $H_i$ takes $O(|H_i|r)$ time. As $|H_i|$ is geometrically decreasing, the total sum of $|H_i|$ in all iterations is $O(|H|)=O(r)$. Excluding those in the secondary data structures, each iteration computes $O(r)$ canonical subsets, whose total size is $O(n\log r)$; these subsets can be sorted in $O(n\log r)$ time if $P$ is initially sorted (before the first iteration).
Therefore, excluding the secondary data structures, the algorithm for all iterations runs in $O(n\log n + n\log^2 r+r^2)$ time and computes a total of $O(r\log r)$ sorted canonical subsets, whose total size is $O(n\log^2 r)$.

For the secondary data structures, because there are $O(r)$ standard subsets of $P$ generated in each iteration and the size of each subset is $O(n/r)$, by Lemma~\ref{lem:220}, the total time for building the secondary data structures is bounded by $O((n^2/r)\log r\log\log(n/r)/\log^2(n/r))$, and these data structures have a total of $O((n^2/r)\log r/\log^2(n/r))$ sorted canonical subsets, whose total size is $O((n^2/r)\log r\log\log\log(n/r)/\log^2(n/r))$.

In summary, our preprocessing runs in $O(n\sqrt{r}+n\log n+r^2+(n^2/r)\log r\log\log(n/r)/\log^2(n/r))$ time and computes a total of $O(r\log r+ (n^2/r)\log r/\log^2(n/r))$ sorted canonical subsets, whose total size is $O(n\log^2 r+(n^2/r)\log r\log\log\log(n/r)/\log^2(n/r))$.

Consider a disk $D$ whose center $q$ is in $C$. Let $h=\partial D\cap \overline{C'}$, which is an upper arc of $\overline{C'}$ (note that the case $h=\emptyset$ can be easily handled). Using the cutting $\Xi^*$, we find the cell $\sigma$ of $\Xi^*$ that contains $q$ in $O(\log r)$ time. Suppose the guarding set of $\sigma$ is in $S_i'$. Then, according to the observation discussed before, using the hierarchical cutting computed in the $i$-th iteration of the preprocessing algorithm, we can find in $O(\sqrt{r})$ time $O(\sqrt{r})$ canonical subsets as well as a set $\Sigma$ of leaf cells $\sigma$ crossed by $h$, such that those canonical subsets are pairwise disjoint and their union is exactly the set of points of $P\setminus D$ not in the cells of $\Sigma$. Next, for each cell $\sigma\in \Sigma$, for each standard subset $P'$ of $P(\sigma)$, using the secondary data structure built on $P'$, we compute in $O(\log(n/r)(\log\log(n/r))^{O(1)})$ time $O(\log(n/r)(\log\log(n/r))^{O(1)})$ canonical subsets whose union is the exactly the set of the points of $P\setminus D$ in $P'$. According to the above observation, there are $O(\sqrt{r})$ such standard subsets. Therefore, in total we can find in $O(\sqrt{r}\log(n/r)(\log\log(n/r))^{O(1)})$ time $O(\sqrt{r}\log(n/r)(\log\log(n/r))^{O(1)})$ pairwise-disjoint canonical subsets whose union is $P\setminus D$.
\end{proof}

By setting $r=n^{2/3}(\log\log n)^{2/3}/\log^{10/3}n$ in the preceding lemma, we obtain the following result.

\begin{corollary}
We can compute in $O(n^{4/3}\log^{7/3}n(\log\log n)^{1/3})$ time a data structure with a total of $O(n^{4/3}\log^{7/3}n/(\log\log n)^{2/3})$ sorted canonical subsets of $P$ whose total size is upper-bounded by $O(n^{4/3}\log^{7/3}n\log\log\log n/(\log\log n)^{2/3})$, so that for any disk $D$ whose center is in $C$, we can find in $O(n^{1/3}(\log\log n)^{O(1)}/\log^{2/3}n)$ time $O(n^{1/3}(\log\log n)^{O(1)}/\log^{2/3}n)$ pairwise-disjoint canonical sets whose union is $P\setminus D$.
\end{corollary}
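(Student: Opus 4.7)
The plan is direct: apply Lemma~\ref{lem:230} with the specific choice $r=n^{2/3}(\log\log n)^{2/3}/\log^{10/3}n$ and verify that each bound claimed in the corollary follows by simplifying the expressions in the lemma. Since $r$ is polynomial in $n$ up to polylogarithmic factors, we have $\log r=\Theta(\log n)$ and $\log(n/r)=\Theta(\log n)$; this eliminates the cumbersome $\log(n/r)$ expressions and reduces everything to a routine plug-in.

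For the preprocessing time, I would handle the four terms of $O(n\sqrt{r}+n\log n+r^2+(n^2/r)\log r\log\log(n/r)/\log^2(n/r))$ in turn. The first is $n\sqrt{r}=n^{4/3}(\log\log n)^{1/3}/\log^{5/3}n$, the second is $O(n\log n)$, the third is $r^2=n^{4/3}(\log\log n)^{4/3}/\log^{20/3}n$, and the fourth is the dominant one: using $n^2/r=n^{4/3}\log^{10/3}n/(\log\log n)^{2/3}$ together with $\log r=\Theta(\log n)$, $\log^2(n/r)=\Theta(\log^2 n)$, and $\log\log(n/r)=\Theta(\log\log n)$, this term simplifies to $n^{4/3}\log^{7/3}n(\log\log n)^{1/3}$, which swallows the other three and yields the claimed preprocessing bound.

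For the number of canonical subsets, the two contributions are $r\log r=\Theta(n^{2/3}(\log\log n)^{2/3}/\log^{7/3}n)$ and $(n^2/r)\log r/\log^2(n/r)=\Theta(n^{4/3}\log^{7/3}n/(\log\log n)^{2/3})$; the second term dominates and gives the stated bound. For the total size, the two contributions are $n\log^2 r=\Theta(n\log^2 n)$ and $(n^2/r)\log r\log\log\log(n/r)/\log^2(n/r)=\Theta(n^{4/3}\log^{7/3}n\log\log\log n/(\log\log n)^{2/3})$; again the second dominates. Finally, the query bound $\sqrt{r}\log(n/r)(\log\log(n/r))^{O(1)}$ becomes $n^{1/3}(\log\log n)^{1/3}/\log^{5/3}n\cdot\log n\cdot(\log\log n)^{O(1)}=n^{1/3}(\log\log n)^{O(1)}/\log^{2/3}n$, matching both the running time and the number of reported canonical subsets.

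There is essentially no obstacle here: all the engineering has been absorbed into Lemma~\ref{lem:230}, and what remains is a one-paragraph verification. The only mild subtlety is choosing $r$ so that the two dominant terms (the preprocessing contribution $(n^2/r)\log r\log\log(n/r)/\log^2(n/r)$ and the query contribution $\sqrt{r}\log(n/r)$) are in the right balance; setting $\sqrt{r}=\Theta(n^{1/3}(\log\log n)^{1/3}/\log^{5/3}n)$ is precisely what makes the query time hit $n^{1/3}(\log\log n)^{O(1)}/\log^{2/3}n$, and the preprocessing time then lands at $n^{4/3}\log^{7/3}n(\log\log n)^{1/3}$. Hence the corollary follows immediately from Lemma~\ref{lem:230}.
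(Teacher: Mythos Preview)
Your proposal is correct and takes exactly the same approach as the paper: the paper simply states ``By setting $r=n^{2/3}(\log\log n)^{2/3}/\log^{10/3}n$ in the preceding lemma, we obtain the following result'' without spelling out the arithmetic, and your write-up carries out that substitution term by term, arriving at the same bounds. Your verification of each of the four preprocessing terms, the two canonical-subset-count terms, the two total-size terms, and the query bound is accurate.
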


According to our discussion before, plugging our above result in the algorithm of~\cite{ref:AgarwalTh98}, we can solve the decision version of the discrete $2$-center problem in $O(n^{4/3}\log^{7/3} n(\log\log n)^{O(1)})$ time. Using the decision algorithm and the $O(n^{4/3}\log^2 n)$-time distance selection algorithm in~\cite{ref:KatzAn97}, the discrete $2$-center problem can be solved in $O(n^{4/3}\log^{10/3} n(\log\log n)^{O(1)})$ time.

\begin{theorem}
Given a set $P$ of $n$ points in the plane, the discrete $2$-center problem can be solved in $O(n^{4/3}\log^{10/3} n(\log\log n)^{O(1)})$ time.
\end{theorem}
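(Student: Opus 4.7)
The plan is to combine the canonical-subset construction from the Corollary immediately preceding the theorem with the decision/optimization framework of Agarwal, Sharir, and Welzl~\cite{ref:AgarwalTh98}, using Katz and Sharir's distance selection~\cite{ref:KatzAn97} for the search over candidate radii. The heavy lifting has already been done in the preceding lemmas; the remaining work is essentially parameter substitution and handling the general-position reduction via Lemmas~\ref{lem:10} and~\ref{lem:170}.

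First, I would recall the reduction established earlier in the subsection: if we can preprocess $P$ in time $T$ to obtain a collection $\calP$ of \emph{sorted} canonical subsets of total size $M$ so that, for any query point $p$, the set of points of $P$ outside the unit disk $D_p$ is the disjoint union of at most $\tau$ subsets of $\calP$ (computable in $O(\tau)$ time), then the decision version of the discrete $2$-center problem can be solved in $O(T + M + \tau \cdot n\log^3 n)$ time. The improvement from $M\log n$ to $M$ relies on the fact that the common intersection of unit disks centered at a sorted point set can be computed in linear time~\cite{ref:WangOn20}, which is why we emphasized maintaining sorted canonical subsets throughout. To handle the general case where $p$ is not restricted to a single cell $C$, I would invoke Lemma~\ref{lem:170} to produce in $O(n\log n)$ time the $O(\log n)$ sorted canonical subsets covering points outside $N(C)$, and combine these (per pair $(C,C')$) with the subsets produced by our per-pair construction; property~(5) of $\calC$ ensures the aggregate complexity is asymptotically the same.

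Next, I would plug the Corollary (deterministic three-level data structure for the outside-disk variant) into this framework: we obtain $T = O(n^{4/3}\log^{7/3}n(\log\log n)^{1/3})$, $M = O(n^{4/3}\log^{7/3}n \log\log\log n/(\log\log n)^{2/3})$, and $\tau = O(n^{1/3}(\log\log n)^{O(1)}/\log^{2/3}n)$. Substituting into $T + M + \tau\cdot n\log^3 n$, the dominant term is $\tau \cdot n\log^3 n = O(n^{4/3}\log^{7/3}n(\log\log n)^{O(1)})$, which absorbs the other two. This yields a decision algorithm of running time $O(n^{4/3}\log^{7/3}n(\log\log n)^{O(1)})$.

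Finally, to solve the optimization problem I would apply the standard binary search over the set of $\binom{n}{2}$ pairwise distances: in each of $O(\log n)$ rounds, use the $O(n^{4/3}\log^2 n)$-time deterministic distance selection algorithm of Katz and Sharir~\cite{ref:KatzAn97} to locate the $k$-th smallest distance for an appropriate $k$, and then invoke the decision algorithm to decide the resulting value. The per-round cost is $O(n^{4/3}\log^2 n) + O(n^{4/3}\log^{7/3}n(\log\log n)^{O(1)}) = O(n^{4/3}\log^{7/3}n(\log\log n)^{O(1)})$, so multiplying by the $O(\log n)$ rounds gives the claimed bound of $O(n^{4/3}\log^{10/3}n(\log\log n)^{O(1)})$. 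The main subtlety, rather than any obstacle, is bookkeeping the $(\log\log n)^{O(1)}$ factors when summing contributions across the three levels of the data structure and verifying that the $\tau\cdot n\log^3 n$ term indeed dominates $M$; everything else is a direct citation of prior machinery.
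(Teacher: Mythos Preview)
Your proposal is correct and follows essentially the same approach as the paper: plug the deterministic canonical-subset corollary into the $O(T+M+\tau\cdot n\log^3 n)$ decision framework of Agarwal, Sharir, and Welzl (with the sorted-subsets improvement via~\cite{ref:WangOn20}) to get an $O(n^{4/3}\log^{7/3}n(\log\log n)^{O(1)})$ decision algorithm, then wrap it in $O(\log n)$ rounds of binary search using the $O(n^{4/3}\log^2 n)$ distance-selection algorithm of Katz and Sharir. Your bookkeeping of which term dominates and the handling of the general case via Lemmas~\ref{lem:10} and~\ref{lem:170} are exactly what the paper does (more tersely).
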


\section{Concluding remarks}
\label{sec:con}

In this paper, we present algorithms to adapt the techniques for simplex range searching to unit-disk range searching. We also show that our techniques can be used to derive improved algorithms for several classical problems.

Our techniques are likely to find other applications. Generally speaking, our techniques may be useful for solving problems involving a set of congruent disks in the plane. Our paper demonstrates that well-studied techniques for arrangements of lines may be adapted to solving problems involving arrangements of congruent disks. The general idea is to first reduce the problem to the same problem with respect to a pair of square cells using an algorithm like Lemma~\ref{lem:10}. Then, to tackle the problem on a pair of square cells $(C,C')$, we need to deal with an arrangement of spanning upper arcs in $\overline{C'}$ such that the centers of the underlying disks of these arcs are all in $C$. The properties of spanning upper arcs (e.g., Observation~\ref{obser:10}), along with the duality between the upper arcs in $\overline{C'}$ and the points in $C$, make an upper-arc arrangement ``resemble'' a line arrangement so that many algorithms and techniques on line arrangements may be easily adapted to the upper-arc arrangements.



\footnotesize
 \bibliographystyle{plain}

\appendix

\normalsize
\section*{APPENDIX}

\section{Proof of Theorem~\ref{theo:cutting}}
\label{app:theocutting}

In this section, we provide the detailed algorithm for Theorem~\ref{theo:cutting}, by adapting Chazelle's algorithm~\cite{ref:ChazelleCu93}.
We actually present a more general algorithm that also works for other
curves in the plane (e.g., circles or circular arcs of different radii,
pseudo-lines, line segments, etc.).

Let $S$ be a set of algebraic arcs of constant complexity in the plane, i.e., each arc $s$ of $S$ is a connected portion (or the entire portion) of an algebraic curve defined by $O(1)$ real parameters (e.g., $s$ is an arc of a circle or the entire circle), such that any two arcs of $S$ intersect at most $O(1)$ times. For each arc $s$ of $S$, a point $p$ in the interior of $s$ is called a {\em break point} if $s$ has a vertical tangent at $p$ (i.e., $s$ is not $x$-monotone at $p$). Since $s$ is of constant complexity, $s$ has $O(1)$ break points, implying that $s$ can be partitioned into $O(1)$ $x$-monotone sub-arcs at these break points.

A region $\sigma$ in the plane is called {\em pseudo-trapezoid} if $\sigma$ is bounded from the left (resp., right) by a vertical line segment and bounded from the above (resp., below) by an $x$-monotone sub-arc of an arc of $S$.

We define a {\em hierarchical $(1/r)$-cutting} of $S$ in the same way as that for $H$ in Section~\ref{sec:cutting} except that each cell of the cutting is a pseudo-trapezoid defined above. We follow the similar notation to those in Section~\ref{sec:cutting} but with respect to $S$, e.g., $S_{\sigma}$, $\rho$, $\Xi_i$, $k$, $\chi$, etc. In particular, $\chi$ is the number of intersections of all arcs of $S$.
We will prove the following theorem, which immediately leads to Theorem~\ref{theo:cutting}.

\begin{theorem}\label{theo:cuttingnew}
For any $r\leq n$, a hierarchical $(1/r)$-cutting of size $O(r^2)$ for $S$ (together with the sets $S_{\sigma}$ for every cell $\sigma$ of $\Xi_i$ for all $0\leq i\leq k$) can be computed in $O(nr)$ time; more specifically, the size of the cutting is bounded by $O(r^{1+\delta}+\chi\cdot r^2/n^2)$ and the running time of the algorithm is bounded by $O(nr^{\delta}+\chi\cdot r/n)$, for any small $\delta>0$.
\end{theorem}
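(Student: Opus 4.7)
The plan is to adapt Chazelle's divide-and-conquer algorithm~\cite{ref:ChazelleCu93} for hyperplane cuttings to constant-complexity algebraic arcs, tracking the intersection count $\chi$ carefully. The outer skeleton of the algorithm remains the same: build the hierarchy level by level, where $\Xi_0$ is a single bounding pseudo-trapezoid and each $\Xi_{i+1}$ refines $\Xi_i$ by replacing every cell $\sigma$ of $\Xi_i$ with a $(1/\rho)$-cutting of $S_\sigma$ inside $\sigma$, with $\rho$ a sufficiently large constant. Concatenating these local cuttings turns a $(1/\rho^i)$-cutting into a $(1/\rho^{i+1})$-cutting, and after $k=\lceil\log_\rho r\rceil$ levels we obtain the desired $(1/r)$-cutting.

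\textbf{Key steps.} First, split each arc of $S$ at its $O(1)$ break points so that every arc becomes $x$-monotone; this leaves $n$ and $\chi$ unchanged up to constants and lets ``bounded above/below by an arc'' be unambiguous inside any pseudo-trapezoid. Second, for the local refinement inside a cell $\sigma$, I would use the standard $\epsilon$-net/random-sampling construction for a range space of bounded VC-dimension (the set of pseudo-trapezoids sitting over constant-complexity arcs fits this framework): draw a random sample $R\subseteq S_\sigma$ of appropriate size, compute the vertical decomposition of the arrangement of $R$ clipped to $\sigma$, and check that every resulting pseudo-trapezoid is crossed by at most $|S_\sigma|/\rho$ arcs of $S_\sigma$. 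Cells that fail the test are refined further by resampling inside them, exactly as in Chazelle's ``over-refinement'' loop that removes the logarithmic slack. Third, the subset $S_{\sigma'}$ for every new sub-cell $\sigma'$ is maintained by filtering $S_\sigma$ against the $O(1)$ boundary pieces of $\sigma'$, so accounting is local.

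\textbf{Size and time analysis.} The central combinatorial fact is that the vertical decomposition of an arrangement of $r$ constant-complexity arcs has complexity $O(r + X)$, where $X$ is the number of pairwise intersections in that arrangement. For a random subsample of size $r$ drawn from a set of $n$ arcs with $\chi$ intersections, the expected value of $X$ is $O(\chi r^2/n^2)$. Summing local complexities telescopically across the $k$ levels gives a total cutting size of $O(r^{1+\delta}+\chi\cdot r^2/n^2)$, the $r^\delta$ absorbing $\epsilon$-net overhead and the harmonic sum across levels. For the running time, at each level every arc of $S$ is tested against only $O(1)$ sub-cells of its parent cell, so the work per arc per level is $O(1)$; summing over $k=O(\log r)$ levels and combining with the refinement cost (proportional to the current cutting's size, which is again bounded by the intersection-sensitive estimate) yields total time $O(nr^\delta + \chi\cdot r/n)$.

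\textbf{Main obstacle.} The delicate part is establishing the intersection-sensitive vertical-decomposition bound $O(r+X)$ and, crucially, showing it propagates correctly through the random-sampling recursion so that the $\chi r^2/n^2$ term appears in the size and $\chi r/n$ in the time, rather than weaker bounds. For pseudo-lines this was sketched by Agarwal and Sharir~\cite{ref:AgarwalPs05}, but in our more general setting one must verify (i) that the trapezoidal-map analysis still applies when arcs may intersect in $O(1)$ points each and may have $O(1)$ break points, and (ii) that the range space of pseudo-trapezoids defined by such arcs has bounded VC-dimension, so the classical $\epsilon$-net theorem is available. Once these two ingredients are in place, the rest of Chazelle's refinement loop and its derandomization via $\epsilon$-approximations~\cite{ref:MatousekAp95,ref:MatousekCu91} carry over essentially verbatim.
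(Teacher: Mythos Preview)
Your outline follows the same Chazelle framework the paper adopts, and the two ingredients you flag in the ``Main obstacle'' paragraph are indeed the right ones to verify. However, there is a real gap in how you obtain the $\chi$-sensitive bound \emph{deterministically}. You appeal to the fact that a random sample of size $s$ from $S_\sigma$ has $O(s^2 n_\sigma(S_\sigma)/|S_\sigma|^2)$ intersections in expectation, and then say ``derandomization via $\epsilon$-approximations carries over verbatim.'' But an $\epsilon$-approximation (or an ordinary $\epsilon$-net) only controls how many arcs cross a given canonical arc; it says nothing a priori about how many pairwise intersections of the sample fall inside $\sigma$, which is what governs the size of the vertical decomposition. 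Chazelle's construction, and the paper's, handle this by (i) proving that an $\epsilon$-approximation also approximates the \emph{vertex count} $n_\sigma(\cdot)$ (the paper's Lemma~\ref{lem:vertexcount}), and (ii) using a \emph{sparse} $\epsilon$-net---one for which $n_\sigma(R)/n_\sigma(S)\le 4|R|^2/|S|^2$---which exists and can be found deterministically (the paper's Lemma~\ref{lem:epsnet}). Without this sparse-net notion, your over-refinement loop fixes cells crossed by too many arcs, but does nothing to prevent a sample whose vertical decomposition is bloated by many intra-sample intersections; the $\chi r^2/n^2$ size bound would not follow.

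A second, smaller issue: your time analysis asserts that ``the work per arc per level is $O(1)$,'' giving $O(n\log r)$. That is not correct, because an arc may cross many cells of $\Xi_i$, not just one; the right accounting is $\sum_i\sum_{\sigma\in\Xi_i}|S_\sigma|\le\sum_i |\Xi_i|\cdot n/\rho^i$, and only after plugging in the level-wise size bound $|\Xi_i|=O(\rho^{i(1+\delta)}+\chi\rho^{2i}/n^2)$ does the sum collapse to $O(nr^\delta+\chi r/n)$. Your stated conclusion is correct, but the justification as written would not get you there.
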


In the following, we first introduce some basic concepts (some of them are adapted from \cite{ref:ChazelleCu93}) and then describe the algorithm. For ease of exposition, we make a general position assumption that no three arcs of $S$ have a common intersection point.

\subsection{Basic concepts}
For any subset $R$ of $S$ and for any compact region $A$ in the plane, we use $R_A$ to denote the subset of arcs of $R$ that cross the interior of $A$ (note that if $A$ is a sub-arc of an arc $s$ of $R$, then $R_A$ does not contain $s$).
The {\em vertical pseudo-trapezoidal decomposition} of $R$, denoted by $\vd(R)$, is to extend a vertical line upwards (resp., downwards) until meeting an arc of $R$ or going to the infinity from each endpoint and each break point of each arc of $R$ as well as from each intersection of any two arcs of $R$. Each cell of $\vd(R)$ is a pseudo-trapezoid.
For any pseudo-trapezoid $\sigma$, we use $n_{\sigma}(R)$ to denote the number of intersections of the arcs of $R$ in the interior of $\sigma$.

We call $e$ a {\em canonical arc} if either $e$ is a vertical line segment or $e$ is a sub-arc of an arc of $S$. Note that each side of a pseudo-trapezoid is a canonical arc.

We adapt the concepts $\epsilon$-approximations and $\epsilon$-nets~\cite{ref:ChazelleCu93,ref:MatousekAp95} to our case. We say that a subset $R$ of $S$ is an {\em $\epsilon$-approximation} for $S$ if, for any canonical arc $e$, the following holds $$\Bigg|\frac{|S_e|}{|S|}-\frac{|R_e|}{|R|}\Bigg|<\epsilon.$$
A subset $R$ of $S$ is an {\em $\epsilon$-net} for $S$ if, for any canonical arc $e$, $|S_e|>\epsilon \cdot n$ implies that $R_e\neq \emptyset$.

The following lemma can be considered as a generalization of an $\epsilon$-approximation, which shows that an $\epsilon$-approximation of $S$ can be used to estimate $n_{\sigma}(S)$ for any canonical trapezoid $\sigma$. The proof of the lemma follows that of Lemma~2.1 in \cite{ref:ChazelleCu93}.

\begin{lemma}\label{lem:vertexcount}
Suppose $R$ is an $\epsilon$-approximation of $S$. Then, for any pseudo-trapezoid $\sigma$,
$$\Bigg|\frac{n_{\sigma}(S)}{|S|^2}-\frac{n_\sigma(R)}{|R|^2}\Bigg|<\epsilon.$$
\end{lemma}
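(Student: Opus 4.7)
The plan is to adapt the proof of Chazelle's Lemma~2.1 in~\cite{ref:ChazelleCu93}, replacing the simplex by a pseudo-trapezoid and hyperplanes by arcs. The key enabling fact is that for any arc $s\in S_\sigma$ the piece $e_s := s\cap \sigma$ is itself a canonical arc (or a union of $O(1)$ canonical arcs when $s$ has break points inside $\sigma$; for clarity I describe the single-piece case and note the general case is an $O(1)$-fold sum), so the $\epsilon$-approximation hypothesis applies directly to it.

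The first step will be a double-counting identity: every intersection of two arcs of $S$ in the interior of $\sigma$ lies on the interior of both arcs and hence contributes $1$ to each of the two terms $|S_{e_s}|, |S_{e_{s'}}|$ associated with those arcs. This yields
$$n_\sigma(S) = \tfrac{1}{2}\sum_{s\in S_\sigma}|S_{e_s}|, \qquad n_\sigma(R) = \tfrac{1}{2}\sum_{s\in R_\sigma}|R_{e_s}|$$
(using general position, together with the convention that $|S_e|$ measures actual intersection counts, which equals the number of distinct arcs when every pair meets at most once and differs by at most an $O(1)$ factor in general).

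The second step will be a telescoping through the intermediate quantity $\Phi := \sum_{s\in S_\sigma}|R_{e_s}|$. By the triangle inequality,
$$2\left|\frac{n_\sigma(S)}{|S|^2} - \frac{n_\sigma(R)}{|R|^2}\right| \le \frac{1}{|S|}\left|\sum_{s\in S_\sigma}\!\left(\tfrac{|S_{e_s}|}{|S|} - \tfrac{|R_{e_s}|}{|R|}\right)\right| + \frac{1}{|R|}\left|\frac{\Phi}{|S|} - \sum_{s\in R_\sigma}\tfrac{|R_{e_s}|}{|R|}\right|.$$
The first summand is bounded by $\epsilon\cdot |S_\sigma|/|S|\le\epsilon$ by applying the $\epsilon$-approximation property to each canonical arc $e_s$ with $s\in S_\sigma$. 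For the second summand, I would swap the order of summation in $\Phi$: for each $s'\in R$, the count of $s\in S_\sigma$ with $s'\in R_{e_s}$ equals $|S_{e_{s'}}|$ when $s'\in R_\sigma$ and $0$ otherwise, so $\Phi = \sum_{s'\in R_\sigma}|S_{e_{s'}}|$. A second application of the $\epsilon$-approximation (now to each $e_{s'}$ with $s'\in R_\sigma$) bounds the second summand by $\epsilon\cdot |R_\sigma|/|R|\le\epsilon$. Dividing by $2$ gives the required bound.

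The only delicate points are cosmetic: the decomposition when $s\cap\sigma$ has several $x$-monotone pieces (due to vertical tangents of $s$ inside $\sigma$), and the need to match ``arc count'' with ``intersection count'' in $|S_{e_s}|$ when two arcs can cross more than once. Both are resolved by $O(1)$-fold decompositions of the arcs and can be absorbed into $\epsilon$ up to a constant, so no geometric idea beyond Chazelle's telescoping is needed.
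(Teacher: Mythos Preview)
Your proposal is correct and follows essentially the same approach as the paper's proof: both express $2n_\sigma(S)$ and $2n_\sigma(R)$ via double counting as $\sum_{s}|S_{e_s}|$ and $\sum_{r}|R_{e_r}|$, introduce the cross-term $\sum_{s}|R_{e_s}|=\sum_{r}|S_{e_r}|$, apply the $\epsilon$-approximation hypothesis once over $S$ and once over $R$, and combine by the triangle inequality. The paper sums over all of $S$ (respectively $R$) rather than just $S_\sigma$, but the extra terms vanish; your side remarks about multi-piece intersections $s\cap\sigma$ and multiple crossings are issues the paper simply glosses over, and they only affect the constant in front of $\epsilon$, not the argument.
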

\begin{proof}
Let $m=|R|$.
Let $s_1,s_2,\ldots,s_n$ denote the arcs of $S$ and let $r_1,r_2,\ldots,r_m$ denote the arcs of $R$.
For each $1\leq i\leq n$, let $s_i'$ denote the portion of $s_i$ inside $\sigma$; note that $s_i'$ may have multiple sub-arcs of $s_i$. For each $1\leq j\leq m$, let $r_i'$ denote the portion of $r_i$ inside $\sigma$.

We first observe that $\sum_{i=1}^{n}|S_{s_i'}|=2\cdot n_{\sigma}(S)$ and $\sum_{j=1}^{m}|R_{r_j'}|=2\cdot n_{\sigma}(R)$. Also, it holds that $\sum_{i=1}^{n}|R_{s_i'}|=\sum_{j=1}^{m}|S_{r_j'}|$.

As $R$ is an $\epsilon$-approximation for $S$, for $1\leq i\leq n$, it holds that
$$\Bigg|\frac{|S_{s_i'}|}{|S|}-\frac{|R_{s_i'}|}{|R|}\Bigg|<\epsilon.$$
Taking the sum for $1\leq i\leq n$ leads to
$$\Bigg|\frac{\sum_{i=1}^{n}|S_{s_i'}|}{|S|}-\frac{\sum_{i=1}^{n}|R_{s_i'}|}{|R|}\Bigg|<|S|\cdot \epsilon.$$
Since  $\sum_{i=1}^{n}|S_{s_i'}|=2\cdot n_{\sigma}(S)$, we obtain
\begin{equation}\label{equ:10}
\Bigg|\frac{2\cdot n_{\sigma}(S)}{|S|^2}-\frac{\sum_{i=1}^{n}|R_{s_i'}|}{|R|\cdot |S|}\Bigg|< \epsilon.
\end{equation}

On the other hand, as $R$ is an $\epsilon$-approximation for $S$, for $1\leq j\leq m$, it holds that
$$\Bigg|\frac{|S_{r_j'}|}{|S|}-\frac{|R_{r_j'}|}{|R|}\Bigg|<\epsilon.$$
Taking the sum for all $1\leq j\leq m$ leads to
$$\Bigg|\frac{\sum_{j=1}^{m}|S_{r_j'}|}{|S|}-\frac{\sum_{j=1}^{m}|R_{r_j'}|}{|R|}\Bigg|<|R|\cdot \epsilon.$$
Since $\sum_{j=1}^{m}|R_{r_i'}|=2\cdot n_{\sigma}(R)$, we obtain
\begin{equation}\label{equ:20}
\Bigg|\frac{\sum_{j=1}^{m}|S_{r_j'}|}{|S|\cdot |R|}-\frac{2\cdot n_{\sigma}(R)}{|R|^2}\Bigg|< \epsilon.
\end{equation}

Since $\sum_{i=1}^{n}|R_{s_i'}|=\sum_{j=1}^{m}|S_{r_i'}|$, combining \eqref{equ:10} and \eqref{equ:20} leads to $$\Bigg|\frac{n_{\sigma}(S)}{|S|^2}-\frac{n_\sigma(R)}{|R|^2}\Bigg|<\epsilon.$$
This proves the lemma.
\end{proof}

Our approach needs to compute $\epsilon$-approximations and $\epsilon$-nets. To this end, the following lemma shows that Matou\v{s}ek's algorithms~\cite{ref:MatousekAp95} can be applied.

\begin{lemma}\label{lem:epsapprox}
An $\epsilon$-approximation of size $O((1/\epsilon)^2\log(1/\epsilon))$ for $S$ and an $\epsilon$-net of size $O((1/\epsilon)\log(1/\epsilon))$ for $S$ can be computed in $O(n/\epsilon^8\cdot \log^4(1/\epsilon))$ time.
\end{lemma}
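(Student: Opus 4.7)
The plan is to cast Lemma~\ref{lem:epsapprox} as an instance of Matou\v{s}ek's general deterministic constructions for $\epsilon$-approximations and $\epsilon$-nets in range spaces of bounded VC-dimension~\cite{ref:MatousekAp95}, and then to check that the range space induced by canonical arcs fits that framework with a small constant dimension, so that the stated size and time bounds fall out by substitution.

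First I would formalize the range space $(S, \mathcal{R})$ whose ground set is $S$ and whose ranges are the crossing sets $S_e$ as $e$ varies over all canonical arcs. A canonical arc is either a vertical segment, specified by three real parameters $(x, y_1, y_2)$, or a sub-arc of some $s' \in S$, specified by $s'$ together with two endpoints on $s'$. Since every arc of $S$ is algebraic of constant description complexity and any two arcs of $S$ meet in $O(1)$ points, the predicate ``$s$ crosses $e$'' is semialgebraic of constant complexity. A standard dualization then yields a uniform constant bound $d$ on the VC-dimension of $(S, \mathcal{R})$, and the primal shatter function is polynomial of degree at most $d$.

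Next I would invoke the deterministic algorithm of~\cite{ref:MatousekAp95}, which, for any range space of VC-dimension $d$ equipped with a polynomial-time subsystem oracle, computes an $\epsilon$-approximation of size $O((1/\epsilon)^2\log(1/\epsilon))$ in time $O(n\cdot(1/\epsilon)^{2d}\log^{d}(1/\epsilon))$. The required subsystem oracle is the routine that enumerates all distinct sets $R_e$, for $e$ a canonical arc, over a given subset $R\subseteq S$: since every combinatorially distinct canonical arc is determined by $O(1)$ arcs of $S$, the enumeration is straightforward and polynomial in $|R|$. Taking the explicit constant $d = 4$ supplied by the shatter-function analysis of canonical arcs yields the stated running time $O(n/\epsilon^8\log^4(1/\epsilon))$. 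The $\epsilon$-net of size $O((1/\epsilon)\log(1/\epsilon))$ is then obtained either by Matou\v{s}ek's companion construction within the same framework, or by first building an $(\epsilon/2)$-approximation as above and then applying the classical $\epsilon$-net theorem on it; in either case the running time remains within the claimed bound.

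The only non-routine step, and hence the main obstacle, is pinning down the VC-dimension (equivalently, the shatter-function exponent) of the range space on canonical arcs so that the exponent in the running-time bound comes out to $8$ rather than some larger constant. Everything else is a direct application of Matou\v{s}ek's black box: once the dimension is identified as a small constant and the subsystem oracle is described, the size and time bounds follow by plugging into his general theorem.
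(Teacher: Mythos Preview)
Your proposal is correct and follows essentially the same route as the paper: set up the range space on $S$ with ranges $S_e$ for canonical arcs $e$, bound the shatter-function exponent by~$4$, supply the subsystem oracle, and invoke Matou\v{s}ek's deterministic constructions from~\cite{ref:MatousekAp95}. The paper's proof fills in precisely the step you flag as the main obstacle---it shows the shatter function is $O(n^4)$ by counting ranges from sub-arcs ($O(n^4)$ many) and from vertical segments (at most $O(n^4)$ pairs of cells of $\vd(S)$)---and gives the oracle explicitly with an $O(m^5)$ bound.
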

\begin{proof}
To avoid the lengthy background explanation, we use concepts from~\cite{ref:MatousekAp95} directly. Consider the range space with ``point set'' $S$ and {\em ranges} of the form $\{s\in S \ |\  s\cap e\neq \emptyset\}$, where $e$ is a canonical arc. The {\em shatter function} of this range space is bounded by $O(n^4)$. To see this, first observe that the distinct ranges defined by the sub-arcs $e$ of the arcs of $S$ is $O(n^4)$. On the other hand, consider the vertical pseudo-trapezoidal decomposition $\vd(S)$ of $S$. For each vertical line segment $e$, if it is in a single cell of $\vd(S)$, the range defined by $e$ is $\emptyset$; otherwise, the two endpoints of $e$ lie in two cells of $\vd(S)$, and the subset of arcs of $S$ {\em vertically between} the two cells (i.e., those arcs of $S$ intersecting $e$) defines a distinct range. As $\vd(S)$ has $O(n^2)$ cells, the size of ranges defined by vertical segments $e$ is $O(n^4)$.
Therefore, the shatter function of the range space is bounded by $O(n^4)$.

Thus, we can apply Theorem~4.1~\cite{ref:MatousekAp95}. Applying Theorem~4.1~\cite{ref:MatousekAp95} also requires a subspace oracle, which can be constructed as follows. Given a subset $S'\subseteq S$, we can construct all ranges of $S'$ in $O(m^5)$ time with $m=|S'|$, as follows.
For each arc $s\in S'$, we compute the intersections between $s$ and all other arcs of $S'$; consequently, all ranges defined by sub-arcs $e\subseteq s$ can be easily obtained in $O(m^3)$ time. In this way, all ranges defined by sub-arcs of the arcs of $S'$ can be constructed in $O(m^4)$ time.
To construct the ranges defined by vertical line segments, we first compute the vertical pseudo-trapezoidal decomposition $\vd(S')$. For every two cells of $\vd(S')$ that are intersected by a vertical line segment, we output the subset of all arcs of $S'$ vertically between them. This computes all ranges of $S'$ defined by vertical line segments. As $\vd(S')$ has $O(m^2)$ cells, the running time is easily bounded by $O(m^5)$.

As such, by applying Theorem~4.1~\cite{ref:MatousekAp95}, an $\epsilon$-approximation of size $O((1/\epsilon)^2\log(1/\epsilon))$ for $S$ can be computed in $O(n/\epsilon^8\cdot \log^4(1/\epsilon))$ time. Similarly, by applying Corollary~4.5~\cite{ref:MatousekAp95}, an $\epsilon$-net of size $O((1/\epsilon)\log(1/\epsilon))$ for $S$ can be computed in $O(n/\epsilon^8\cdot \log^4(1/\epsilon))$ time.
\end{proof}

An $\epsilon$-net $R$ of $S$ is {\em sparse} for a pseudo-trapezoid $\sigma$ if $n_{\sigma}(R)/n_{\sigma}(S)\leq 4|R|^2/|S|^2$. We have the following lemma adapted from Lemma~3.2 in~\cite{ref:ChazelleCu93}.

\begin{lemma}\label{lem:epsnet}
For any pseudo-trapezoid $\sigma$, an $\epsilon$-net of $S$ sparse for $\sigma$ of size $O(1/\epsilon\cdot \log n)$ can be computed in time polynomial in $n$.
\end{lemma}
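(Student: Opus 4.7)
The plan is to adapt the proof of Lemma~3.2 of~\cite{ref:ChazelleCu93} by combining the standard probabilistic construction of $\epsilon$-nets with a sparsity condition obtained via Markov's inequality, then derandomizing by the method of conditional probabilities. Set $m = c(1/\epsilon)\log n$ for a sufficiently large constant $c$ and let $R$ be a uniformly random subset of $S$ of size $m$. The existence argument has two parts.

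First, the range space on $S$ whose ranges are induced by canonical arcs has polynomially bounded shatter function (as verified in the proof of Lemma~\ref{lem:epsapprox}) and hence bounded VC-dimension. By the $\epsilon$-net theorem, $R$ is an $\epsilon$-net for $S$ with probability at least $3/4$ once $c$ is large enough. Second, assuming $n_\sigma(S)>0$ (otherwise any subset is trivially sparse), the expected number of intersecting pairs of arcs of $R$ inside $\sigma$ is
\begin{equation*}
E[n_\sigma(R)] \;=\; \binom{m}{2} \big/ \binom{n}{2} \cdot n_\sigma(S) \;\leq\; (m/n)^2 \, n_\sigma(S),
\end{equation*}
so Markov's inequality yields $\Pr\bigl[n_\sigma(R) > 4(m/n)^2 n_\sigma(S)\bigr] < 1/4$. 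A union bound gives that both properties hold simultaneously with probability at least $1/2$, proving the existence of an $\epsilon$-net of the desired size that is sparse for $\sigma$.

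To obtain a deterministic polynomial-time construction, I would derandomize the sampling process via the method of conditional probabilities. The crucial feature is that there are only polynomially many distinct ranges to worry about: every range $S_e$ is determined by a canonical arc $e$, and by the shatter-function argument in the proof of Lemma~\ref{lem:epsapprox} the number of distinct such ranges is $O(n^4)$, so the collection of ``heavy'' ranges with $|S_e|>\epsilon n$ can be enumerated explicitly. Processing the arcs of $S$ one at a time, at each step we decide whether to include the current arc in $R$ by comparing the two resulting conditional values of a pessimistic estimator that bounds (i) the failure probability of the $\epsilon$-net condition, summed over heavy ranges, via an exponential-type quantity such as $(1-m/n)^{|R\cap S_e|}$, and (ii) the sparsity violation probability, which can be upper bounded by $n_\sigma(R)/(4(m/n)^2 n_\sigma(S))$ plus the expected future contribution from unprocessed arcs. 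Since the initial value of this combined estimator is below~$1$ by the probabilistic argument above and each update can be computed in polynomial time (the sparsity term needs only the already-intersecting-pair count inside $\sigma$, updatable using precomputed pairwise intersections of arcs of $S$ inside $\sigma$), the algorithm produces a valid sparse $\epsilon$-net in time polynomial in $n$.

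The main obstacle is the careful design of the combined pessimistic estimator: both components must admit efficient incremental updates and together must upper bound the joint failure probability strictly by $1$ throughout. This requires setting the constant $c$ and the slack factor $4$ in the sparsity condition carefully to absorb the union-bound loss across the polynomially many heavy ranges and the Markov slack, exactly as in~\cite{ref:ChazelleCu93}; the adaptation is routine once the polynomial shatter-function bound from Lemma~\ref{lem:epsapprox} is in hand.
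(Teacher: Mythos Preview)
Your proposal is correct and follows essentially the same route as the paper: both adapt Lemma~3.2 of~\cite{ref:ChazelleCu93} by showing that a random sample of size $\Theta((1/\epsilon)\log n)$ is simultaneously an $\epsilon$-net (via the $O(n^4)$ shatter-function bound) and sparse for $\sigma$ (via the expectation/Markov argument) with probability $>1/2$, then derandomize by the method of conditional probabilities over the polynomially many ranges. The only cosmetic difference is that the paper spells out the per-range failure probability $<1/n^5$ and union-bounds directly, whereas you invoke the $\epsilon$-net theorem; both arguments are equivalent here and lead to the same derandomization.
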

\begin{proof}
We adapt the algorithm for Lemma~3.2 in~\cite{ref:ChazelleCu93}.
We sketch the main idea and focus on the differences.

We first show that a random sample $R\subseteq S$ of size
$m=\min\{\lceil5\epsilon^{-1}\log n\rceil, n\}$ forms an $\epsilon$-net sparse
for $s$ with probability greater than $1/2$. We assume that $m<n$.
	In the same way as in Lemma~3.2~\cite{ref:ChazelleCu93}, we can
	show that $\text{Prob}[n_s(R)/n_s(S)>4m^2/n^2]<1/4$.

We define a collection $\calS$ of subsets of $S$, as follows. For each arc $s\in S$, the intersections of $s$ with the other arcs of $S$ cut $s$ into sub-arcs; for any pair of points $(p,q)$ on different sub-arcs of $s$, we add the subsets of arcs of $S$ crossing the sub-arc of $s$ between $p$ and $q$ to $\calS$. Clearly, at most $O(n^4)$ subsets of $S$ can be added to $\calS$. On the other hand, for any two cells of the decomposition $\vd(S)$ that intersect the same vertical line segment, we add to $\calS$ the subset of arcs of $S$ vertically between these two cells; at most $O(n^4)$ subsets can be added to $\calS$ in this way as $\vd(\calS)$ has $O(n^2)$ cells. As such, the size of $\calS$ is $O(n^4)$.

According to the definition of $\calS$, to ensure that $R$ is an
$\epsilon$-net of $S$, it suffices to guarantee that
$|S'|>\epsilon n$ implies $S'\cap R\neq\emptyset$ for each
subset $S'$ of $\calS$. In the same way as in Lemma~3.2~\cite{ref:ChazelleCu93}, we can prove that
the probability $p_{S'}$ for $S'\in \calS$ failing that test is less than $1/n^5$.
Since $|\calS|=O(n^4)$, for
large $n$, $\sum_{S'\in \calS}p_{S'}<1/4$ holds. Therefore, we
obtain
$$\text{Prob}\left[\frac{n_s(R)}{n_s(S)}>4\cdot
\frac{m^2}{n^2}\right]+\sum_{S'\in \calS}p_{S'}<1/2.$$

As such, the probability that $R$ is an $\epsilon$-net sparse for $\sigma$ is
larger than $1/2$.

Following the approach of Lemma~3.2~\cite{ref:ChazelleCu93}, the above proof can be converted to a polynomial time algorithm to find such a subset $R$. Refer to Lemma~3.2~\cite{ref:ChazelleCu93} for the details.
\end{proof}

\subsection{The algorithm}
We are now in a position to describe the algorithm for computing a hierarchical
$(1/r)$-cutting for $S$.
We first assume that $r\leq n/8$, and the case $r>n/8$ will be discussed later.

Since a $(1/r)$-cutting is also a $(1/r')$-cutting for any $r'<r$, we can assume that $r\geq \rho$ for some
appropriate constant $\rho$ such that $r=\rho^k$ for some integer $k$. Thus $k=\Theta(\log r)$. The algorithm has $k$ iterations. For each $1\leq i\leq k$, the $i$-th iteration computes a $(1/\rho^i)$-cutting $\Xi_i$ by refining the cutting $\Xi_{i-1}$. Each cell of $\Xi_{i}$ is a pseudo-trapezoid.
Initially, we let $\Xi_0$ be the entire plane. Clearly, $\Xi_0$ is a $(1/\rho^0)$-cutting for $S$. Next, we describe the algorithm for a general iteration to compute $\Xi_i$ based on $\Xi_{i-1}$. We assume that the subsets $S_{\sigma}$ are available for all cells $\sigma\in \Xi_{i-1}$, which is true initially when $i=1$.

Consider a cell $\sigma$ of $\Xi_{i-1}$. We process $\sigma$ as follows. If $|S_{\sigma}|\leq n/\rho^i$, then we do nothing with $\sigma$, i.e., $\sigma$ becomes a
cell of $\Xi_i$. Otherwise, we first compute a $(1/(8\rho_0))$-approximation $A$ of size $O(\rho_0^2\log \rho_0)$ for $S_{\sigma}$ by Lemma~\ref{lem:epsapprox}, and then compute a $(1/(8\rho_0))$-net $R$ of size $O(\rho_0\log \rho_0)$ for $A$ that is sparse for $\sigma$ by Lemma~\ref{lem:epsnet}, with $\rho_0=(\rho^i/n)\cdot |S_{\sigma}|$.

\paragraph{Remark.}
The parameter $8\rho_0$ is $4\rho_0$ in~\cite{ref:ChazelleCu93}. We use a different parameter in order to prove Lemma~\ref{lem:size} because each cell in our cutting is a pseudo-trapezoid. This is also the reason we assume $r\leq n/8$.
\medskip

We compute the vertical pseudo-trapezoidal decomposition of the arcs of $R$, and clip it inside $\sigma$; let  $\vd_{\sigma}(R)$ denote the resulting decomposition inside $\sigma$.
We include $\vd_{\sigma}(R)$ into $\Xi_i$. Finally, for each cell $\sigma_0\in \vd_{\sigma}(R)$, we compute $S_{\sigma_0}$, i.e., the subset of the arcs of $S$ that intersect the interior of $\sigma_0$, by simply checking every arc of $S_{\sigma}$. This finishes the processing of $\sigma$.

The cutting $\Xi_i$ is obtained after we process every cell $\sigma$ of $\Xi_{i-1}$ as above. The next lemma shows that $\Xi_i$ is a $(1/\rho^i)$-cutting of $S$.

\begin{lemma}\label{lem:size}
$\Xi_i$ thus obtained is a $(1/\rho^i)$-cutting of $S$.
\end{lemma}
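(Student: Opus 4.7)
The plan is to process each cell $\sigma \in \Xi_{i-1}$ independently and show that every cell of $\Xi_i$ arising from $\sigma$ is crossed by at most $n/\rho^i$ arcs of $S$. If $|S_\sigma| \le n/\rho^i$, then $\sigma$ itself is kept as a cell of $\Xi_i$, and the bound holds trivially. So I would focus on the non-trivial case $|S_\sigma| > n/\rho^i$, where $\rho_0 = (\rho^i/n)|S_\sigma| > 1$ is well-defined, and analyze an arbitrary cell $\sigma_0 \in \vd_\sigma(R)$.

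The key step is to bound the number of arcs of $S$ (equivalently, of $S_\sigma$, since any arc crossing $\sigma_0 \subseteq \sigma$ must cross $\sigma$) that cross the interior of $\sigma_0$. I would use the fact that $\sigma_0$ is a pseudo-trapezoid bounded by at most four canonical arcs (two vertical segments and two sub-arcs of curves in $R$ or in the boundary of $\sigma$). The central observation is that the interior of every edge $e$ of $\sigma_0$ is disjoint from every arc of $R$: a vertical side is extended only until it hits an arc of $R$ or $\partial \sigma$, and a curved side that lies on some $r \in R$ terminates at intersections with other arcs of $R$ (which become vertices of $\vd_\sigma(R)$). Hence $R_e = \emptyset$ for each edge $e$ of $\sigma_0$.

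Applying the $(1/(8\rho_0))$-net property of $R$ with respect to $A$ to the canonical arc $e$, I conclude $|A_e| \le |A|/(8\rho_0)$; then the $(1/(8\rho_0))$-approximation property of $A$ with respect to $S_\sigma$ yields
\[
\frac{|S_{\sigma,e}|}{|S_\sigma|} \le \frac{|A_e|}{|A|} + \frac{1}{8\rho_0} \le \frac{1}{4\rho_0}.
\]
Summing over the at most four edges of $\sigma_0$ gives at most $|S_\sigma|/\rho_0 = n/\rho^i$ arcs of $S_\sigma$ crossing $\partial \sigma_0$. Since every arc in $S_{\sigma_0}$ either crosses $\partial \sigma_0$ or is entirely contained in $\sigma_0$, and in the setting of the theorem the arcs in $S$ never lie entirely in a proper sub-cell (they cross $\sigma$ and continue outside any $\sigma_0 \subsetneq \sigma$; for the paper's application, spanning upper arcs in $\overline{C'}$ have endpoints on $\partial \overline{C'}$, which rules out containment), the total count is at most $n/\rho^i$, as required.

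The main obstacle is the clean justification that $R_e = \emptyset$ for every edge $e$ of $\sigma_0$, because this is what lets us convert the net property into a bound on $|S_{\sigma,e}|$ through $A$. The factor $1/8$ (rather than $1/4$ as in Chazelle's original argument) is chosen precisely so that the combined error $1/(8\rho_0) + 1/(8\rho_0) = 1/(4\rho_0)$, multiplied by the four edges of a pseudo-trapezoid, yields exactly $1/\rho_0$; this is where the assumption $r \le n/8$ (ensuring $\rho_0 \ge 1$ throughout) is used, as already flagged in the remark following the algorithm description.
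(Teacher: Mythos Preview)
Your per-edge bound $|S_{\sigma,e}|\le |S_\sigma|/(4\rho_0)$, obtained by chaining the $\epsilon$-net property of $R$ in $A$ with the $\epsilon$-approximation property of $A$ in $S_\sigma$, is exactly what the paper proves. The divergence is in the final count, and there your argument leaves a gap.

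Summing the per-edge bound over four edges yields $|S_\sigma|/\rho_0=n/\rho^i$, but this only bounds the arcs of $S_{\sigma_0}$ that cross the \emph{interior} of some edge of $\sigma_0$. An arc of $S_{\sigma_0}$ that meets $\partial\sigma_0$ only at vertices is not counted by this sum. Under the general-position assumption the paper invokes, there are at most four such arcs (at most one per vertex), so your bound is really $n/\rho^i+4$, which overshoots the target. You also do not get to ignore these vertex arcs the way you dispose of arcs entirely contained in $\sigma_0$; even spanning upper arcs can pass through a vertex of a pseudo-trapezoid.

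The paper recovers the needed slack by observing that every arc of $S_{\sigma_0}$ not through a vertex crosses the interiors of at least \emph{two} edges, so summing over four edges and dividing by $2$ gives only $|S_\sigma|/(2\rho_0)=n/(2\rho^i)$; adding $4$ for the vertex arcs yields $n/(2\rho^i)+4$. The hypothesis $r\le n/8$ then gives $\rho^i\le n/8$, i.e.\ $4\le n/(2\rho^i)$, and the bound collapses to $n/\rho^i$. So the role of the constant $1/8$ and of the assumption $r\le n/8$ is precisely to absorb this additive $+4$, not (as you wrote) to guarantee $\rho_0\ge 1$; the latter is automatic from $|S_\sigma|>n/\rho^i$.
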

\begin{proof}
We follow the notation discussed above.
Consider a cell $\sigma_0$ created in $\sigma$ as discussed above.
Our goal is to show that $|S_{\sigma_0}|\leq n/\rho^i$.

We claim that for any canonical arc $e$ in $\sigma_0$, the number of arcs of $S_{\sigma}$
crossing the interior of $e$ is no more than
$|S_{\sigma}|/(4\rho_0)$. Indeed, assume to the contrary that this is
not true. Then, since $A$ is a $1/(8\rho_0)$-approximation of
$S_{\sigma}$, the interior of $e$ would intersect more than
$$\frac{|S_{\sigma}|}{4\rho_0}\cdot
\frac{|A|}{|S_{\sigma}|}-\frac{|A|}{8\rho_0}=\frac{|A|}{8\rho_0}$$
segments of $A$. Thus, the interior of $e$ must be crossed by at least one arc
of $R$, for $R$ is a $(1/(8\rho_0))$-net of $A$. But this is
impossible because $e$ is in $\sigma_0$ (and thus its interior cannot be crossed by any arc of $R$).

Note that each vertex of $\sigma_0$ is on an arc of $S$, which bounds $\sigma_0$. By the general position
assumption, $S_{\sigma_0}$ has at most one arc through a vertex of
$\sigma_0$. Hence, $S_{\sigma_0}$ has at most four arcs through
vertices of $\sigma_0$. Let $S_{\sigma_0}'$ denote the subset of arcs
of $S_{\sigma_0}$ not through any vertex of $\sigma_0$. Let $e_i$,
$i=1,2,3,4$, be the four canonical arcs on the boundary of $\sigma_0$, respectively.
For any arc $s$ of $S_{\sigma_0}'$, since it does not contain any
vertex of $\sigma_0$, $s$ must
cross the interiors of two of $e_i$, for $i=1,2,3,4$. As the interior
of each $e_i$, $1\leq i\leq 4$, can be crossed by at most
$|S_{\sigma}|/(4\rho_0)$ arcs of $S_{\sigma}$, the size of
$S_{\sigma_0}'$ is at most $|S_{\sigma}|/(4\rho_0)\cdot
4/2=|S_{\sigma}|/(2\rho_0)$.
Therefore, $|S_{\sigma_0}|\leq |S_{\sigma_0}'|+4 \leq |S_{\sigma}|/(2\rho_0)+4=n/(2\rho^i)+4$.

Since $r=\rho^k$ and $i\leq k$, $\rho^{i}\leq r$. As $r\leq n/8$, we obtain $\rho^i\leq n/8$, and thus $4\leq n/(2\rho^i)$. Therefore, we obtain $|S_{\sigma_0}|\leq n/\rho^i$.
\end{proof}

\paragraph{The size of $\Xi_i$.}
We next analyze the size of $\Xi_i$. First notice that $\rho_0\leq \rho$.
Since $A$ is a $(1/(8\rho_0))$-approximation of $S_{\sigma}$, by Lemma~\ref{lem:vertexcount}, we have
$$\Bigg|\frac{n_{\sigma}(S_{\sigma})}{|S_{\sigma}|^2}-\frac{n_\sigma(A)}{|A|^2}\Bigg|<\frac{1}{8\rho_0}.$$
Because $R$ is a $(1/(8\rho_0))$-net of $A$ sparse for $\sigma$, we have $n_{\sigma}(R)/n_{\sigma}(A)\leq 4|R|^2/|A|^2$.
This implies that
$$n_{\sigma}(R)\leq 4\cdot \frac{|R|^2}{|S_{\sigma}|^2}\cdot n_{\sigma}(S_{\sigma})+\frac{|R|^2}{2\rho_0}.$$
Since $|R|=O(\rho_0\log \rho_0)$, the number of cells of $\vd_{\sigma}(R)$ is proportional to
$|R|+n_{\sigma}(R)$, which is at most proportional to
\begin{equation}\label{equ:30}
\frac{\rho_0^2\log^2\rho_0}{|S_{\sigma}|^2}\cdot n_{\sigma}(S_{\sigma})+\rho_0\log^2\rho_0.
\end{equation}

Recall that $\rho_0=(\rho^i/n)\cdot |S_{\sigma}|$ and $\rho_0\leq
\rho$. Hence, $\frac{\rho_0\log\rho_0}{|S_{\sigma}|}\leq
\frac{\rho^i}{n}\log \rho$. Recall that $\chi$ is the total number of intersections of all arcs of $S$. Observe that
$\sum_{\sigma\in \Xi_{i-1}}n_{\sigma}(S_{\sigma})\leq \chi$. Let $|\Xi_i|$ denote the number of
cells in $\Xi_i$. Taking the sum of \eqref{equ:30} for all cells $\sigma\in
\Xi_{i-1}$, we obtain that the following holds for a constant $c$:
$$|\Xi_i|\leq c\cdot \left(\frac{\rho^{i}\log \rho}{n}\right)^2\cdot \chi + c\cdot
\rho\log^2\rho\cdot |\Xi_{i-1}|.$$

Since $|\Xi_0|=1$, we can prove by induction that $|\Xi_i|\leq \rho^{2(i+1)}\cdot \chi/n^2+\rho^{(i+1)(1+\delta)}$ for a large enough constant $\rho$, for any $\delta>0$. As such, the size of the last cutting $\Xi_k$ is $O(\chi\cdot r^2/n^2+r^{1+\delta})$ since $r=\rho^k$ and $\rho$ is a constant.

\paragraph{The time analysis.}
Using the above bound of $|\Xi_i|$, we can show that the running time of the algorithm is bounded by $O(nr^{\delta}+\chi\cdot r/n)$. Indeed, for each $\sigma\in \Xi_{i-1}$, since $\rho_0\leq \rho$ and $\rho$ is a constant, by Lemma~\ref{lem:epsapprox}, it takes $O(|S_{\sigma}|)$ time to compute $A$ and constant time to compute $R$ and obtain the decomposition $\vd_{\sigma}(R)$. The subsets $S_{\sigma'}$ for all cells $\sigma'\in \vd_{\sigma}(R)$ can also be obtained in $O(|S_{\sigma}|)$ time since $|R|=O(1)$. Therefore, the total time of the algorithm is at most proportional to
$$\sum_{i=0}^{k}\sum_{\sigma\in \Xi_i}|S_{\sigma}|\leq \sum_{i=0}^{k}|\Xi_i|\cdot \frac{n}{\rho^i}\leq \sum_{i=0}^{k}\left(\rho^{2(i+1)}\cdot \frac{\chi}{n^2}+\rho^{(i+1)(1+\delta)}\right)\cdot  \frac{n}{\rho^i},$$
which is bounded by $O(nr^{\delta}+\chi\cdot r/n)$ since $r=\rho^k$ and $\rho$ is a constant.

This proves Theorem~\ref{theo:cuttingnew} for $r\leq n/8$. Note that each cell in the last cutting $\Xi_k$ is a pseudo-trapezoid.

\paragraph{The case $r>n/8$.}
If $r>n/8$, then we first run the above algorithm with respect to $r'=n/8$ to compute a hierarchical $(1/r')$-cutting $\Xi_0,\Xi_1,\ldots, \Xi_k$. We then perform additional processing as follows. For each cell $\sigma\in \Xi_k$, we know that $|S_{\sigma}|\leq 8$. We compute the vertical pseudo-trapezoidal decomposition of the arcs of $S_{\sigma}$ inside $\sigma$. The resulting pseudo-trapezoids in the decompositions of all cells $\sigma\in \Xi_k$ constitute $\Xi_{k+1}$. It is easy to see that $S_{\sigma'}=\emptyset$ for each $\sigma'\in \Xi_{k+1}$. Hence, $\Xi_{k+1}$ is a $(1/r)$-cutting for $S$ (more precisely, $\Xi_0,\Xi_1,\ldots, \Xi_{k+1}$ form a hierarchical $(1/r)$-cutting). Also, since $|S_{\sigma}|\leq 8$ for all cells $\sigma\in \Xi_k$, following the analysis as above, the complexities of Theorem~\ref{theo:cuttingnew} hold.

\paragraph{Remark.}
Chan and Tsakalidis~\cite{ref:ChanOp16} derived a simpler algorithm than Chazelle's algorithm~\cite{ref:ChazelleCu93} for computing cuttings for a set of lines in the plane (e.g., $\epsilon$-nets and $\epsilon$-approximations are avoided). However, it appears that there are some difficulties to adapt their technique to computing cuttings for curves in the plane. One difficulty, for example, is that their algorithm uses the algorithm of Megiddo~\cite{ref:MegiddoLi83} and
Dyer~\cite{ref:DyerLi84} to construct a (7/8)-cutting of size $4$, which relies on the slopes of lines. For curves, however, it is not clear to use how to define ``slopes'' (e.g., how to define slope for a circle?).

\section{Proof of Theorem~\ref{theo:70}}

The problem is to compute for each upper arc $h$ of $H$ the number of points of $P$
below it (i.e., the number of points of $P$ inside the underlying disk of $h$), where $H= \{h_q \ |\ q\in Q\}$. Recall that $n=|P|$ and $m=|H|$.
We refer to it as the {\em symmetric case} if $n= m$ and {\em asymmetric case} otherwise.
Let $T(m,n)$ denote the time complexity for the problem $(H,P)$ of size $(m,n)$.
In what follows, we first present two algorithms and then combine them.

\paragraph{The first algorithm.}
Using our Cutting Theorem, we compute in $O(mr)$ time a hierarchical $(1/r)$-cutting
$\Xi_0,\ldots,\Xi_k$ for $H$.
For any cell $\sigma$ of $\Xi_i$, $0\leq i\leq k$, let $P(\sigma)=P\cap
\sigma$, i.e., the subset of points of $P$ in $\sigma$.
For each point $p\in P$, we find the cell $\sigma$ of $\Xi_k$ that
contains $p$. For reference purpose later, we call this step {\em the point
location step}, which can be done in $O(n\log r)$ time for all points of $P$.
After this step, the subsets $P(\sigma)$ for all cells $\sigma\in \Xi_k$
are computed. We also need to maintain the cardinalities $|P(\sigma)|$ for cells in other cuttings
$\Xi_i$, $0\leq i\leq k-1$. To this end, we can compute them in a bottom-up
manner following the hierarchical cutting (i.e., process cells of $\Xi_{k-1}$ first and
then $\Xi_{k-2}$, and so on), using the fact that $|P(\sigma)|$ is equal to the
sum of $|P(\sigma')|$ for all children $\sigma'$ of $\sigma$.
As each cell has $O(1)$ children, this step can be
easily done in $O(mr)$ time.

For each arc $h\in H$, our goal is to compute the number of points of $P$ below $h$, denoted by $N_h$, which is initialized to $0$. Starting from $\Xi_0$, suppose $\sigma$ is a cell of $\Xi_i$ crossed by $h$
and $i<k$. For each child cell $\sigma'$ of $\sigma$ in $\Xi_{i+1}$, if $\sigma'$ is below $h$, then we increase $N_h$ by $|P(\sigma')|$ because all points of $P(\sigma')$ are below $h$. Otherwise, if $h$ crosses $\sigma'$, then we proceed on $\sigma'$. In this way, the points of $P$ below $h$ not counted in $N_h$ are those contained in cells $\sigma\in \Xi_k$ that are crossed by $h$. To count those points, we perform further processing as follows.

For each cell $\sigma$ in $\Xi_k$, if $|P_{\sigma}|>n/r^2$, then we arbitrarily partition $P(\sigma)$ into subsets of size between $n/(2r^2)$ and $n/r^2$, called {\em standard subsets} of $P(\sigma)$. As $\Xi_k$ has $O(r^2)$ cells and $|P|=n$, the number of standard subsets of all cells of $\Xi_k$ is $O(r^2)$.
Denote by $H_{\sigma}$ the subset of arcs of $H$ that cross $\sigma$. Our
problem is to compute for each arc $h\in H_{\sigma}$ the number of points of $P(\sigma)$ below $h$, for all cells $\sigma\in \Xi_k$. To this end, for each cell $\sigma$ of $\Xi_k$, for each standard subset $P'(\sigma)$ of $P(\sigma)$, we solve the  subproblem on $(H_{\sigma},P'(\sigma))$ of size $(m/r,n/r^2)$ recursively as above. We thus obtain the following recurrence relation:
\begin{align}\label{equ:70}
  T(m,n)= O(mr + n\log r)+O(r^2)\cdot T(m/r,{n}/{r^2}).
\end{align}
In particular, the factor $O(n\log r)$ is due to the point location step.

\paragraph{The second algorithm.}
Our second algorithm solves the problem using duality. Recall that $Q$ is the set of centers
of the arcs of $H$. Let $P^*$ be the set of lower arcs in $\overline{C}$ defined
by points of $P$. In the dual setting, the problem is equivalent to computing for each point of $Q$
the number of arcs of $P^*$ below it. Using our Cutting Theorem, we compute in
$O(nr)$ time a
hierarchical $(1/r)$-cutting $\Xi_0,\ldots,\Xi_{k}$ for $P^*$. Consider a cell
$\sigma\in \Xi_i$ for $i<k$. For each child cell $\sigma'$ of $\sigma$ in
$\Xi_{i+1}$, let $P^*_{\sigma\setminus\sigma'}$ denote the subset of the arcs of
$P^*$ crossing $\sigma$ but not crossing $\sigma'$ and below $\sigma'$.
We store the cardinality of $P^*_{\sigma\setminus\sigma'}$ at $\sigma'$. For each cell $\sigma$ of $\Xi_k$, we store at $\sigma$ the set $P^*_{\sigma}$ of arcs of $P^*$ crossing $\sigma$. Note that $|P^*_{\sigma}|\leq n/r$. All above can be done in $O(nr)$ time.

For each point $q\in Q$, our goal is to compute the number of arcs of $P^*$ above $q$, denoted by $M_q$, which is initialized to $0$. Starting from $\Xi_0$, suppose $\sigma$ is a cell of $\Xi_i$ containing $q$
and $i<k$. We find the child cell $\sigma'$ of $\sigma$ in $\Xi_{i+1}$ that
contains $q$. We add $|P^*_{\sigma\setminus\sigma'}|$ to $M_q$ and then proceed
on $\sigma'$. In this way, the arcs of $P^*$ below $q$ not counted in $M_q$ are
those contained in the cell $\sigma\in \Xi_k$ that contains $q$. To count those
arcs, we perform further processing as follows.

For each cell $\sigma$ in $\Xi_k$, let $Q(\sigma)$ be the subset of points of $Q$ contained in $\sigma$. If $|Q_{\sigma}|>m/r^2$, then we arbitrarily partition $Q(\sigma)$ into subsets of size between $m/(2r^2)$ and $m/r^2$, called {\em standard subsets} of $Q(\sigma)$. As $\Xi_k$ has $O(r^2)$ cells and $|Q|=m$, the number of standard subsets of all cells of $\Xi_k$ is $O(r^2)$.
Our problem is to compute for each point $q\in Q(\sigma)$ the number of arcs of $P^*_{\sigma}$ below $q$, for all cells $\sigma\in \Xi_k$. To this end, for each cell $\sigma$ of $\Xi_k$, for each standard subset $Q'(\sigma)$ of $Q(\sigma)$, we solve the subproblem on $(Q'(\sigma),P^*_{\sigma})$ of size $(m/r^2,n/r)$ recursively as above. We thus obtain the following recurrence relation:
\begin{align}\label{equ:80}
  T(m,n)= O(nr + m\log r)+O(r^2)\cdot T(m/r^2,{n}/{r}).
\end{align}

\paragraph{Combining the two algorithms.}
By setting $m=n$ and applying \eqref{equ:80} and \eqref{equ:70} in succession (using the same $r$), we obtain the following
\begin{align*}
  T(n,n)= O(nr\log r)+O(r^4)\cdot T({n}/{r^3},{n}/{r^3}).
\end{align*}
Setting $r=n^{1/3}/\log n$ leads to
  $T(n,n)= O(n^{4/3}) + O((n/\log^3 n)^{4/3})\cdot T(\log^3 n,\log^3 n)$.
If we apply the recurrence three times, we can derive the following:
\begin{align}\label{equ:90}
  T(n,n)= O(n^{4/3}) + O((n/b)^{4/3})\cdot T(b,b),
\end{align}
where $b=(\log\log\log n)^3$.

\paragraph{Solving the subproblem $T(b,b)$.}
Due to the property that the value $b$ is tiny, we next show that after $O(2^{\poly(b)})$ time preprocessing, where $\poly(\cdot)$ represents a polynomial function, each $T(b,b)$ can
be solved in $O(b^{4/3})$ time. Since $b=(\log\log\log n)^3$, we have
$2^{\poly(b)}=O(n)$. As such, applying~\eqref{equ:90} solves $T(n,n)$ in
$O(n^{4/3})$ time, which proves Theorem~\ref{theo:70} for the symmetric case; we
will solve the asymmetric case at the end by using the symmetric case
algorithm as a subroutine.

For notational convenience, we still use $n$ to represent $b$. We wish to show that after $O(2^{\poly(n)})$ time preprocessing, $T(n,n)$ can be solved in $O(n^{4/3})$ time. To this end, we show that $T(n,n)$ can
be solved using $O(n^{4/3})$ comparisons, or alternatively, $T(n,n)$ can be
solved by an algebraic decision tree of height $O(n^{4/3})$. Since building the algebraic decision tree can be done in $O(2^{\poly(n)})$ time, each $T(n,n)$ can be solved in $O(n^{4/3})$ time using the decision tree.
As such, in what follows, we focus on proving that $T(n,n)$ can be solved using $O(n^{4/3})$ comparisons.

Observe that it is the point
location step in the recurrence~\eqref{equ:70} in our first algorithm that prevents us from obtaining an
$O(n^{4/3})$ time bound for $T(n,n)$ because each point location introduces an additional logarithmic factor. To overcome the issue, Chan and Zheng~\cite{ref:ChanHo22} propose an {\em $\Gamma$-algorithm
framework} for bounding decision tree complexities (see Section~4.1~\cite{ref:ChanHo22} for the details).
To reduce the complexity for the point location step in their algorithm, Chan and Zheng proposed a {\em basic search lemma} (see Lemma~4.1~\cite{ref:ChanHo22}). We can follow the similar idea as
theirs (see Section~4.3~\cite{ref:ChanHo22}). We only sketch the main idea below and the reader can refer to~\cite{ref:ChanHo22} for details.

Following the notation in \cite{ref:ChanHo22}, for any operation or subroutine of the algorithm,
we use $\Delta\Phi$ to denote the change of the potential $\Phi$. Initially $\Phi=O(n\log n)$. The potential $\Phi$ only decreases during the algorithm. Hence, $\Delta\Phi\leq 0$ always holds and the sum of
$-\Delta\Phi$ during the entire algorithm is $O(n\log n)$.

We modify the point location step of the first algorithm with the following change. To find the cell of $\Xi_k$ containing
each point of $P$, we apply Chan and Zheng's basic search lemma on the $O(r^2)$ cells of
$\Xi_k$, which needs $O(1-r^2\Delta\Phi)$ comparisons (instead of
$O(\log r)$).
As such, excluding the $O(-r^2\Delta\Phi)$ terms, we obtain a new recurrence for the number of comparisons for the first algorithm:
\begin{align}\label{equ:110}
  T(m,n)= O(mr + n)+O(r^2)\cdot T(m/r,n/r^2).
\end{align}

Using the same $r$, we stop the recursion until $m=\Theta(r)$, which is the
base case. In the base case we have $T(m,n)=O(n+r^2)$ (again excluding the term
$O(-r^2\Delta\Phi)$). Indeed, we first construct the arrangement of the
$m=\Theta(r)$ upper arcs in $O(r^2)$ time and then apply the basic search lemma to
find the face of the arrangement containing each point. The above obtains the
total number of points inside each face of the arrangement. Our goal is to
compute for each upper arc the number of points below it. This can be done in
additional $O(r^2)$ time by traversing the arrangement (see
Lemma~5.1~\cite{ref:ChanHo22}).
In this way, the recurrence~\eqref{equ:110} solves to $T(m,n)=(m^2+n)\cdot 2^{O(\log_rm)}$.
By setting $r=m^{\epsilon/2}$, we obtain the following bound
on the number of comparisons excluding the term
$O(-m^{\epsilon}\Delta\Phi)$.
\begin{align}\label{equ:120}
T(m,n)=O(m^2+n).
\end{align}

Now we apply recurrence \eqref{equ:80} with $m=n$ and $r=n^{1/3}$ and obtain the
following
\begin{align*}\label{equ:120}
T(n,n)=O(n^{4/3}) + O(n^{2/3})\cdot T(n^{1/3},n^{2/3}).
\end{align*}
Applying \eqref{equ:120} for $T(n^{1/3},n^{2/3})$ gives $T(n,n)=O(n^{4/3})$ with the excluded terms sum
to $O(n^{\epsilon}\cdot n\log n)$. As such, by setting $\epsilon$ to a small value (e.g., $\epsilon=1/4$), we conclude that $T(n,n)$ can be solved using $O(n^{4/3})$ comparisons.

\paragraph{The asymmetric case.}
The above proves Theorem~\ref{theo:70} for the symmetric case with an $O(n^{4/3})$ time algorithm for solving $T(n,n)$. For the asymmetric case, depending on whether $m\geq n$, there are two cases.

\begin{enumerate}
  \item If $m\geq n$, depending on whether $m< n^2$, there are two subcases.

  \begin{enumerate}
    \item If $m<n^2$, then let $r=m/n$, and thus $m/r^2 = n/r$. Applying \eqref{equ:80} and solving $T(m/r^2,n/r)$ by the symmetric case algorithm give $T(m,n)=O(m\log n+n^{2/3}m^{2/3})$.

    \item If $m\geq n^2$, then we solve the problem in the dual setting for $Q$ and $P^*$ as discussed in the above second algorithm, i.e., compute for each point of $Q$ the number of arcs of $P^*$ below it. We first construct the arrangement $\calA$ of the arcs of $P^*$ in $O(n^2)$ time and then build a point location data structure on $\calA$ in $O(n^2)$ time~\cite{ref:KirkpatrickOp83,ref:EdelsbrunnerOp86}. Next, for each point of $Q$, we find the face of $\calA$ that contains it in $O(\log n)$ time using the point location data structure.
        In addition, it takes $O(n^2)$ time to traverse $\calA$ to compute for each face of $\calA$ the number of arcs below it.
        The total time is $O(n^2+m\log n)$, which is $O(m\log n)$ as $m\geq n^2$.
  \end{enumerate}
   Hence in the case where $m\geq n$, we can solve the problem in $O(m\log n+n^{2/3}m^{2/3})$ time.

  \item If $m< n$, depending on whether $n< m^2$, there are two subcases.
  \begin{enumerate}
	\item If $n<m^2$, then let $r=n/m$, and thus $m/r=n/r^2$. Applying
	\eqref{equ:70} and solving $T(m/r,n/r^2)$ by the symmetric case algorithm give $T(m,n)=O(n\log m+n^{2/3}m^{2/3})$.

	\item If $n\geq m^2$, then we first construct the arrangement $\calA$ of the arcs of $H$ in $O(m^2)$ time and then build a point location data structure on $\calA$ in $O(m^2)$ time~\cite{ref:KirkpatrickOp83,ref:EdelsbrunnerOp86}. Next, for each point of $P$, we find the face of $\calA$ that contains it in $O(\log m)$ time using the point location data structure.
        In addition, as discussed above, it takes $O(m^2)$ time to traverse $\calA$ to compute for each arc of $H$ the number of points of $P$ below it.
        The total time is $O(m^2+n\log m)$, which is $O(n\log m)$ as $n\geq m^2$.
  \end{enumerate}
  Hence in the case where $m<n$, we can solve the problem in $O(n\log
  m+n^{2/3}m^{2/3})$ time.
\end{enumerate}

This proves Theorem~\ref{theo:70}.
\end{document}